\documentclass[11pt]{article}

\usepackage{times}
\usepackage[paperwidth=199.8mm,
paperheight=297mm,centering,hmargin=20mm,vmargin=20mm]{geometry}
\usepackage{authblk} 
\usepackage[bottom]{footmisc} 

\usepackage[titletoc,title]{appendix} 
\usepackage{changepage}

\usepackage{hyperref}
\hypersetup{colorlinks=true,citecolor=myblue,linkcolor=myblue,
filecolor=myblue,urlcolor=myblue,breaklinks=true}
\usepackage{url}

\usepackage[dvipsnames]{xcolor}
\usepackage{color}
\usepackage{framed}

\definecolor{shadecolor}{rgb}{0.9,0.9,0.9}
\definecolor{mylightgray}{RGB}{100,100,100}

\definecolor{myblue}{RGB}{0, 68, 116}
\definecolor{mycyan}{RGB}{0, 97, 91}
\definecolor{mygreen}{RGB}{2, 102, 1}
\definecolor{myorange}{RGB}{240, 102, 0}
\definecolor{myred}{RGB}{172, 23, 0}
\definecolor{mymagenta}{RGB}{140,16,73}

\usepackage{mathtools}
\usepackage{amsmath}
\usepackage{amssymb}
\usepackage[shortlabels]{enumitem}
\usepackage{graphicx,epic,eepic,epsfig,latexsym,verbatim}
\usepackage{dsfont}
\usepackage{mathrsfs}

\usepackage{subcaption}
\usepackage{caption}
\usepackage{float}
\usepackage{tikz}
\usetikzlibrary{arrows.meta,positioning,calc,fit,matrix,backgrounds}
\usepackage{relsize}
\usepackage{pgfplots}
\pgfplotsset{compat=1.18}

\usepackage{amsthm}
\usepackage{tcolorbox}
\tcbuselibrary{skins}
\tcbuselibrary{breakable}

\definecolor{myblue}{RGB}{0,68,116}
\definecolor{mycyan}{RGB}{0,97,91}
\definecolor{mygreen}{RGB}{2,102,1}
\definecolor{myorange}{RGB}{240,102,0}
\definecolor{myred}{RGB}{172,23,0}

\newtheorem{theorem}{Theorem}
\newtheorem{lemma}[theorem]{Lemma}

\newtheorem{definition}[theorem]{Definition}
\newtheorem{boxedtheorem}[theorem]{Theorem}
\newtheorem{boxedlemma}[theorem]{Lemma}
\newtheorem{boxedprop}[theorem]{Proposition}
\newtheorem{boxedcorollary}[theorem]{Corollary}
\newtheorem{boxeddefinition}[theorem]{Definition}
\newtheorem{boxedassumption}[theorem]{Assumption}
\newtheorem{boxedremark}[theorem]{Remark}
\newtheorem{boxmessage}{Result}
\newtheorem{boxedexample}[theorem]{Example}

\newtheorem{remark}[theorem]{Remark}
\newtheorem{example}[theorem]{Example}

\definecolor{theoremframe}{RGB}{31,78,121}
\definecolor{theoremback}{RGB}{247,250,253}
\definecolor{lemmaframe}{RGB}{35,105,86}
\definecolor{lemmaback}{RGB}{247,252,250}
\definecolor{propframe}{RGB}{102,73,139}
\definecolor{propback}{RGB}{251,249,253}
\definecolor{corollaryframe}{RGB}{165,96,18}
\definecolor{corollaryback}{RGB}{254,251,246}
\definecolor{definitionframe}{RGB}{112,112,112}
\definecolor{definitionback}{RGB}{248,248,248}

\tcbset{
  supplement result/.style={
    enhanced jigsaw,
    breakable,
    lines before break=5,
    arc=2pt,
    outer arc=2pt,
    boxrule=0.55pt,
    left=6pt,
    right=6pt,
    oversize=0pt,
    top=6pt,
    bottom=6pt,
    before skip=9pt plus 2pt minus 1pt,
    after skip=9pt plus 2pt minus 1pt
  }
}
\tcolorboxenvironment{boxedtheorem}{
  supplement result,
  colframe=myblue,
  colback=myblue!5!white
}
\tcolorboxenvironment{boxedlemma}{
  supplement result,
  colframe=definitionframe,
  colback=definitionback
}
\tcolorboxenvironment{boxedprop}{
  supplement result,
  colframe=definitionframe,
  colback=definitionback
}
\tcolorboxenvironment{boxedcorollary}{
  supplement result,
  colframe=definitionframe,
  colback=definitionback
}
\tcolorboxenvironment{boxeddefinition}{
  supplement result,
  colframe=definitionframe,
  colback=definitionback
}
\tcolorboxenvironment{boxedassumption}{
  supplement result,
  colframe=definitionframe,
  colback=definitionback
}
\tcolorboxenvironment{boxedremark}{
  supplement result,
  colframe=myorange!70!black,
  colback=myorange!6!white
}
\tcolorboxenvironment{boxmessage}{
  supplement result,
  colframe=myorange!70!black,
  colback=myorange!6!white
}
\newtcolorbox{boxnote}{
  supplement result,
  colframe=myorange!70!black,
  colback=myorange!6!white
}
\tcolorboxenvironment{boxedexample}{
  supplement result,
  colframe=definitionframe,
  colback=definitionback
}

\newenvironment{boxtheorem}[1][]
  {\begin{boxedtheorem}[#1]}
  {\end{boxedtheorem}}
\newenvironment{boxlemma}[1][]
  {\begin{boxedlemma}[#1]}
  {\end{boxedlemma}}
\newenvironment{boxproposition}[1][]
  {\begin{boxedprop}[#1]}
  {\end{boxedprop}}
\newenvironment{boxcorollary}[1][]
  {\begin{boxedcorollary}[#1]}
  {\end{boxedcorollary}}
\newenvironment{boxdefinition}[1][]
  {\begin{boxeddefinition}[#1]}
  {\end{boxeddefinition}}
\newenvironment{boxremark}[1][]
  {\begin{boxedremark}[#1]}
  {\end{boxedremark}}
\newenvironment{boxexample}[1][]
  {\begin{boxedexample}[#1]}
  {\end{boxedexample}}

\newcommand{\prooftag}[1]{\medskip \noindent \textbf{\emph{#1}}}
 
\usepackage{colortbl}
\usepackage{pifont} 
\usepackage{booktabs}
\usepackage{makecell} 
\usepackage{diagbox}
\usepackage{multirow}

\newcommand{\nc}{\newcommand}
\nc{\rnc}{\renewcommand}

\nc{\<}{\langle}
\rnc{\>}{\rangle}
\nc{\bra}[1]{\langle#1|}
\nc{\ket}[1]{|#1\rangle}
\nc{\ketbra}[2]{|#1\rangle\!\langle#2|}
\nc{\braket}[2]{\langle#1|#2\rangle}
\nc{\braandket}[3]{\langle #1|#2|#3\rangle}
\nc{\proj}[1]{| #1\rangle\!\langle #1 |}
\nc{\avg}[1]{\langle#1\rangle}

\nc{\rank}{\operatorname{Rank}}
\nc{\id}{{\operatorname{id}}}
\nc{\iid}{{\operatorname{iid}}}
\nc{\supp}{{\operatorname{supp}}}
\nc{\smfrac}[2]{\mbox{$\frac{#1}{#2}$}}
\nc{\tr}{\operatorname{Tr}}
\nc{\ox}{\otimes}
\nc{\floor}[1]{\lfloor #1 \rfloor}
\nc{\trans}{\mathsf T}
\nc{\img}{\mathbf{i}}
\def\ve{\varepsilon}

\makeatletter

\newcommand{\solidgamearrowpicture@}[3]{%
    \tikz[x=1ex,y=1ex,line cap=round,line join=round]
      \draw[line width=#3,-{Triangle[length=0.64ex,width=0.68ex]}] (0,0) -- (#1,#2);%
}
\newcommand{\opengamearrowpicture@}[3]{%
    \tikz[x=1ex,y=1ex,line cap=round,line join=round]
      \draw[line width=#3,-{Triangle[open,length=0.64ex,width=0.68ex]}] (0,0) -- (#1,#2);%
}
\newsavebox{\solidgameuparrow@display}
\newsavebox{\solidgameuparrow@text}
\newsavebox{\solidgameuparrow@script}
\newsavebox{\solidgameuparrow@scriptscript}
\newsavebox{\solidgamedownarrow@display}
\newsavebox{\solidgamedownarrow@text}
\newsavebox{\solidgamedownarrow@script}
\newsavebox{\solidgamedownarrow@scriptscript}
\newsavebox{\opengameuparrow@display}
\newsavebox{\opengameuparrow@text}
\newsavebox{\opengameuparrow@script}
\newsavebox{\opengameuparrow@scriptscript}
\newsavebox{\opengamedownarrow@display}
\newsavebox{\opengamedownarrow@text}
\newsavebox{\opengamedownarrow@script}
\newsavebox{\opengamedownarrow@scriptscript}
\AtBeginDocument{%
  \sbox{\solidgameuparrow@display}{\solidgamearrowpicture@{0}{1.62}{0.58pt}}%
  \sbox{\solidgameuparrow@text}{\solidgamearrowpicture@{0}{1.50}{0.52pt}}%
  \sbox{\solidgameuparrow@script}{\solidgamearrowpicture@{0}{1.30}{0.44pt}}%
  \sbox{\solidgameuparrow@scriptscript}{\solidgamearrowpicture@{0}{1.12}{0.38pt}}%
  \sbox{\solidgamedownarrow@display}{\solidgamearrowpicture@{0}{-1.62}{0.58pt}}%
  \sbox{\solidgamedownarrow@text}{\solidgamearrowpicture@{0}{-1.50}{0.52pt}}%
  \sbox{\solidgamedownarrow@script}{\solidgamearrowpicture@{0}{-1.30}{0.44pt}}%
  \sbox{\solidgamedownarrow@scriptscript}{\solidgamearrowpicture@{0}{-1.12}{0.38pt}}%
  \sbox{\opengameuparrow@display}{\opengamearrowpicture@{0}{1.62}{0.58pt}}%
  \sbox{\opengameuparrow@text}{\opengamearrowpicture@{0}{1.50}{0.52pt}}%
  \sbox{\opengameuparrow@script}{\opengamearrowpicture@{0}{1.30}{0.44pt}}%
  \sbox{\opengameuparrow@scriptscript}{\opengamearrowpicture@{0}{1.12}{0.38pt}}%
  \sbox{\opengamedownarrow@display}{\opengamearrowpicture@{0}{-1.62}{0.58pt}}%
  \sbox{\opengamedownarrow@text}{\opengamearrowpicture@{0}{-1.50}{0.52pt}}%
  \sbox{\opengamedownarrow@script}{\opengamearrowpicture@{0}{-1.30}{0.44pt}}%
  \sbox{\opengamedownarrow@scriptscript}{\opengamearrowpicture@{0}{-1.12}{0.38pt}}%
}
\newcommand{\gamearrowbox@}[1]{%
  \mathrel{\vcenter{\hbox{%
    \usebox{#1}%
  }}}%
}
\DeclareRobustCommand{\soliduparrow}{%
  \mathchoice
    {\gamearrowbox@{\solidgameuparrow@display}}%
    {\gamearrowbox@{\solidgameuparrow@text}}%
    {\gamearrowbox@{\solidgameuparrow@script}}%
    {\gamearrowbox@{\solidgameuparrow@scriptscript}}%
}
\DeclareRobustCommand{\soliddownarrow}{%
  \mathchoice
    {\gamearrowbox@{\solidgamedownarrow@display}}%
    {\gamearrowbox@{\solidgamedownarrow@text}}%
    {\gamearrowbox@{\solidgamedownarrow@script}}%
    {\gamearrowbox@{\solidgamedownarrow@scriptscript}}%
}
\DeclareRobustCommand{\emptyuparrow}{%
  \mathchoice
    {\gamearrowbox@{\opengameuparrow@display}}%
    {\gamearrowbox@{\opengameuparrow@text}}%
    {\gamearrowbox@{\opengameuparrow@script}}%
    {\gamearrowbox@{\opengameuparrow@scriptscript}}%
}
\DeclareRobustCommand{\emptydownarrow}{%
  \mathchoice
    {\gamearrowbox@{\opengamedownarrow@display}}%
    {\gamearrowbox@{\opengamedownarrow@text}}%
    {\gamearrowbox@{\opengamedownarrow@script}}%
    {\gamearrowbox@{\opengamedownarrow@scriptscript}}%
}
\DeclareRobustCommand{\convdownarrow}{%
  \mathord{\overline{\emptydownarrow}}%
}
\protected\def\uparrow{\soliduparrow}
\protected\def\downarrow{\soliddownarrow}
\makeatother

\rnc{\liminf}{\mathop{\underline{\lim}}\displaylimits}
\rnc{\limsup}{\mathop{\overline{\lim}}\displaylimits}

\nc{\cA}{{\cal A}}
\nc{\cB}{{\cal B}}
\nc{\cC}{{\cal C}}
\nc{\cD}{{\cal D}}
\nc{\cE}{{\cal E}}
\nc{\cF}{{\cal F}}
\nc{\cG}{{\cal G}}
\nc{\cH}{{\cal H}}
\nc{\cI}{{\cal I}}
\nc{\cJ}{{\cal J}}
\nc{\cK}{{\cal K}}
\nc{\cL}{{\cal L}}
\nc{\cM}{{\cal M}}
\nc{\cN}{{\cal N}}
\nc{\cO}{{\cal O}}
\nc{\cP}{{\cal P}}
\nc{\cQ}{{\cal Q}}
\nc{\cR}{{\cal R}}
\nc{\cS}{{\cal S}}
\nc{\cT}{{\cal T}}
\nc{\cV}{{\cal V}}
\nc{\cU}{{\cal U}}
\nc{\cX}{{\cal X}}
\nc{\cY}{{\cal Y}}
\nc{\cZ}{{\cal Z}}
\nc{\cW}{{\cal W}}

\nc{\bp}{\boldsymbol{p}}
\nc{\bq}{\boldsymbol{q}}
\nc{\brho}{\boldsymbol{\rho}}
\nc{\bsigma}{\boldsymbol{\sigma}}
\nc{\bomega}{\boldsymbol{\omega}}
\nc{\bmu}{\boldsymbol{\mu}}
\nc{\bT}{\boldsymbol{T}}
\nc{\bS}{\boldsymbol{S}}

\nc{\RR}{{{\mathbb R}}}
\nc{\CC}{{{\mathbb C}}}
\nc{\FF}{{{\mathbb F}}}
\nc{\NN}{{{\mathbb N}}}
\nc{\ZZ}{{{\mathbb Z}}}
\nc{\QQ}{{{\mathbb Q}}}
\nc{\UU}{{{\mathbb U}}}
\nc{\EE}{{{\mathbb E}}}

\nc{\bH}{{\mathfrak{H}}}

\nc{\sK}{{{\mathscr{K}}}}
\nc{\sS}{{{\mathscr{S}}}}
\nc{\sT}{{{\mathscr{T}}}}
\nc{\sA}{{{\mathscr{A}}}}
\nc{\sB}{{{\mathscr{B}}}}
\nc{\sC}{{{\mathscr{C}}}}
\nc{\sE}{{{\mathscr{E}}}}
\nc{\sL}{{{\mathscr{L}}}}
\nc{\sG}{{{\mathscr{G}}}}
\nc{\sF}{{{\mathscr{F}}}}
\nc{\sI}{{{\mathscr{I}}}}
\nc{\sN}{{{\mathscr{N}}}}
\nc{\sM}{{{\mathscr{M}}}}

\nc{\Choi}{Choi-Jamio\l{}kowski }
\nc{\reg}{\infty}
\nc{\mx}{\text{\rm mx}}
\nc{\amo}{\text{\rm amo}}
\nc{\Renyi}{R\'{e}nyi }
\nc{\Stein}{\mathsf{Stein}}

\nc{\conv}{\operatorname{conv}}
\nc{\cvxset}{\mathscr{C}}

\nc{\RM}{{{\mathscr{R}}}}

\nc{\END}{\operatorname{End}}
\nc{\PERM}{\mathfrak{\sigma}}

\nc{\Cone}{\text{\rm Cone}}
\nc{\sep}{{\SEP}}

\nc{\DD}{{{\mathbb D}}}
\nc{\BS}{{\scriptscriptstyle \rm {BS}}}
\nc{\Sand}{{\scriptscriptstyle  \rm S}}
\nc{\Hypo}{{\scriptscriptstyle  \rm H}}
\nc{\Meas}{{\scriptscriptstyle \rm M}}
\nc{\Proj}{{{\scriptscriptstyle \rm P}}}
\nc{\KL}{{{\scriptscriptstyle \rm KL}}}
\nc{\IS}{{{\scriptscriptstyle \rm IS}}}

\nc{\suchthat}{\text{\rm s.t.}}

\nc{\pl}{{\scalebox{0.7}{+}}}
\nc{\HERM}{\mathscr{H}}
\nc{\PSD}{\HERM_{\pl}}
\nc{\PD}{\HERM_{\pl\pl}}
\nc{\density}{\mathscr{D}}
\nc{\subdensity}{\mathscr{D}_\bullet}

\nc{\polarPSD}[1]{{#1}_{\pl}^{\circ}}
\nc{\polarPSDre}[1]{{#1}_{\pl}^{\star}}
\nc{\polarPD}[1]{{#1}_{\pl\pl}^{\circ}}

\nc{\PPT}{\text{\rm PPT}}
\nc{\Rains}{\text{\rm Rains}}
\nc{\WD}{\text{\rm WD}}
\nc{\SEP}{\text{\rm SEP}}
\nc{\PSEP}{\text{\rm PSEP}}
\nc{\CPTP}{\text{\rm CPTP}}
\nc{\POVM}{\text{\rm POVM}}
\nc{\PVM}{\text{\rm PVM}}
\nc{\CP}{\text{\rm CP}}
\nc{\adv}{\text{\rm adv}}
\nc{\spec}{\text{\rm spec}}
\nc{\poly}{\text{\rm poly}}
\nc{\End}{\operatorname{End}}
\nc{\Par}{\operatorname{Par}}
\nc{\RNG}{\operatorname{RNG}}
\nc{\STAB}{\text{\rm STAB}}
\nc{\epi}{\boldsymbol{\operatorname{epi}}}
\nc{\op}{\boldsymbol{\operatorname{op}}}

\makeatletter
\newcommand*\rel@kern[1]{\kern#1\dimexpr\macc@kerna}
\newcommand*\widebar[1]{%
  \begingroup
  \def\mathaccent##1##2{%
    \rel@kern{0.8}%
    \overline{\rel@kern{-0.8}\macc@nucleus\rel@kern{0.2}}%
    \rel@kern{-0.2}%
  }%
  \macc@depth\@ne
  \let\math@bgroup\@empty \let\math@egroup\macc@set@skewchar
  \mathsurround\z@ \frozen@everymath{\mathgroup\macc@group\relax}%
  \macc@set@skewchar\relax
  \let\mathaccentV\macc@nested@a
  \macc@nested@a\relax111{#1}%
  \endgroup
}
\makeatother

\renewcommand{\mathbf}{\boldsymbol}
\renewcommand{\ge}{\geqslant}
\renewcommand{\geq}{\geqslant}
\renewcommand{\le}{\leqslant}
\renewcommand{\leq}{\leqslant}

\newcommand*{\set}[1]{\mathcal{#1}} 

\newcommand{\reals}{\mathbb{R}}

\newcommand{\LinOp}{\mathscr{L}}
\newcommand{\HermOp}{\mathscr{H}}
\newcommand{\herm}{\dagger}

\DeclarePairedDelimiterX{\spr}[2]{\langle}{\rangle}{#1\delimsize\vert#2}
\DeclarePairedDelimiterX{\ceil}[1]{\lceil}{\rceil}{#1}
\DeclarePairedDelimiterX{\abs}[1]{\lvert}{\rvert}{#1}
\DeclarePairedDelimiterX{\norm}[1]{\lVert}{\rVert}{#1}
\DeclarePairedDelimiterX{\size}[1]{\lvert}{\rvert}{#1}
\DeclarePairedDelimiterX{\infdiv}[2]{(}{)}{#1\delimsize\Vert#2}
\DeclarePairedDelimiterX{\infdivc}[3]{(}{)}{#1\delimsize\Vert#2\delimsize\vert#3}
\DeclarePairedDelimiterX{\inner}[2]{\langle}{\rangle}{#1,#2}
\newcommand{\Div}{\mathbb{D}} 
\newcommand{\uDiv}{D}
\newcommand{\hDiv}[1]{D_{\Hypo, #1}}
\newcommand{\mDiv}{D_{\scriptscriptstyle  \rm M}}
\newcommand{\mrDiv}[1]{D_{{\scriptscriptstyle  \rm M}, #1}}
\newcommand{\supDiv}{\Div^{\uparrow}}

\newcommand{\infDiv}{\Div^{\downarrow}}
\newcommand{\iinfDiv}{\Div^{\downarrow\downarrow}}

\newcommand{\iinfmrDiv}[1]{\mrDiv{#1}^{\downarrow\downarrow}}

\newcommand{\new}{\text{\rm new}}
\newcommand{\PER}{\mathscr{P}}

\ExplSyntaxOn
\NewDocumentCommand{\multiadjustlimits}{m}
 {
  \group_begin:
  \multiadjustlimits_measure:n { #1 }
  \multiadjustlimits_print:n { #1 }
  \group_end:
 }

\tl_new:N  \l__multiadjustlimits_operator_tl
\tl_new:N  \l__multiadjustlimits_limit_tl

\cs_new_protected:Nn \multiadjustlimits_measure:n
 {
  \clist_map_function:nN { #1 } \__multiadjustlimits_measure:n
 }
\cs_new_protected:Nn \__multiadjustlimits_measure:n
 {
  \__multiadjustlimits_measure:NNn #1
 }
\cs_new_protected:Nn \__multiadjustlimits_measure:NNn
 {
  \tl_put_right:Nn \l__multiadjustlimits_operator_tl { #1 }
  \tl_put_right:Nn \l__multiadjustlimits_limit_tl { #3 }
 }

\cs_new_protected:Nn \multiadjustlimits_print:n
 {
  \clist_map_function:nN { #1 } \__multiadjustlimits_print:n
 }
\cs_new_protected:Nn \__multiadjustlimits_print:n
 {
  \__multiadjustlimits_print:NNn #1
 }
\cs_new_protected:Nn \__multiadjustlimits_print:NNn
 {
  \mathop { \vphantom{\l__multiadjustlimits_operator_tl} \mathopen{} #1 }
  \limits
  \sb{ \vphantom{\cramped{\l__multiadjustlimits_limit_tl}} #3 }
 }

\ExplSyntaxOff
\newcommand\ie{\textit{i.e.}}

\newcommand\cf{\textit{cf.}}

\usepackage{framed}
\usepackage[dvipsnames]{xcolor} \usepackage{color}
\definecolor{shadecolor}{rgb}{0.9,0.9,0.9}

\usepackage{cite}
\usepackage{array}
\usepackage{tabularx}
\usepackage{arydshln}

\begin{document}

\title{\LARGE \textbf{Minimax games for quantum channel discrimination} \bigskip}


\author[1]{Kun Fang \thanks{kunfang@cuhk.edu.cn}}
\author[2,3]{Michael X. Cao}
\author[4,5,6,7,8]{Hao-Chung Cheng}
\author[9,10]{\\ Li Gao}
\author[1,11,12]{Masahito Hayashi \thanks{hmasahito@cuhk.edu.cn}}

\affil[1]{\small{School of Data Science, The Chinese University of Hong Kong, Shenzhen,
Guangdong, 518172, China}}
\affil[2]{\small{Institute for Quantum Information, RWTH Aachen University, Aachen, Germany}}
\affil[3]{\small{City University of Hong Kong (Dongguan), Guangdong, China}}
\affil[4]{\small{Department of Electrical Engineering and Graduate Institute of
Communication Engineering,\protect \\ National Taiwan University, Taiwan}}
\affil[5]{\small{Department of Mathematics, National Taiwan University, Taiwan}}
\affil[6]{\small{Center for Quantum Science and Engineering, National Taiwan University, Taiwan}}
\affil[7]{\small{Physics/Mathematics Division, National Center for Theoretical Sciences, Taiwan}}
\affil[8]{\small{Hon Hai (Foxconn) Quantum Computing Center, Taiwan}}
\affil[9]{\small{School of Mathematics and Statistics, Wuhan University, Wuhan, 430072, China}}
\affil[10]{\small{Wuhan Institute of Quantum
Science and Technology, Wuhan 430075, China}}
\affil[11]{\small{International Quantum Academy, Futian District, Shenzhen 518048, China}}
\affil[12]{\small{Graduate School of Mathematics, Nagoya University, Chikusa-ku, Nagoya 464–8602, Japan}}

\date{\today}

\maketitle

\begin{abstract}
Quantum channel discrimination is a primitive task for identifying, verifying, and benchmarking quantum dynamics. Previous studies have primarily considered either the best-case tester-input setting or the worst-case jammer-input setting. Here, we introduce a game-theoretic framework in which both the tester and jammer control separate inputs. Combining three input structures, characterized by whether the tester and jammer use entangled inputs or IID inputs across channel uses, with four information patterns, determined by the visibility of the jammer's strategy and its knowledge of the true hypothesis, yields twelve game models. We provide exact finite blocklength hypothesis testing characterizations of all twelve models in terms of nine minimax hypothesis testing divergences and derive their asymptotic Stein exponents. Notably, for entangled jammers, neither the visibility of the jammer's strategy nor its knowledge of the true hypothesis affects the asymptotic Stein exponent, whereas the information pattern remains consequential for IID jammers. As an example, we study the discrimination of a general channel from a replacer channel and show that all asymptotic Stein exponents coincide with the same additive, single letter quantity. We further develop a general argument that upgrades achievability results to strong converse results, thereby establishing strong converse properties for several game models, resolving an open problem in composite hypothesis testing posed by Berta et al. [Commun. Math. Phys. 385, 55 (2021)], and strengthening several recent results of Lami [arXiv:2510.06340]. The framework and techniques developed here may support future studies of quantum information tasks involving competing roles.
\end{abstract}

\newpage
{
\setcounter{tocdepth}{2}
\tableofcontents
}

\newpage
\section{Introduction}

\subsection{Background and motivation}

Discrimination is a fundamental primitive in information processing, with
applications ranging from signal detection to pattern recognition and machine
learning. In quantum information theory, its simplest form is quantum state
discrimination, where an unknown state is identified through a hypothesis test.
Quantum Stein's lemma is a cornerstone of this setting: in the asymptotic
regime, the optimal type-II error exponent under a fixed type-I error
constraint is given by the quantum relative entropy~\cite{hiai1991proper,Ogawa2000}.
This operational interpretation, together with its further extensions, continues to motivate the study of quantum hypothesis
testing~\cite{brandao2010generalization,Hayashi2025,lami2024solutiongeneralisedquantumsteins,fang2024generalized}.

Quantum channel discrimination extends this hypothesis-testing perspective from
static states to dynamical quantum resources. The task is to distinguish two
candidate channels by preparing suitable inputs and testing the corresponding
outputs. The additional freedom to choose the input, absent in state
discrimination, allows the tester to use ancillary systems, entanglement, or
adaptive strategies to improve distinguishability. This framework has been extensively studied and underlies a wide
range of applications~\cite{dariano_using_2001,chiribella_memory_2008,hayashi_discrimination_2009,harrow_adaptive_2010,wilde2020amortized,cooney2016strong,pirandola_fundamental_2019,WW2019,bergh_parallelization_2024},
including studies of quantum advantage and entanglement~\cite{piani2009all,takagi_operational_2019,bae_more_2019,skrzypczyk_robustness_2019},
quantum communication and channel capacities~\cite{wang2012one,datta_smooth_2013,Wang2019,wang2019converse,fang2021geometric,fang2025towards},
quantum sensing and reading~\cite{pirandola_advances_2018,zhuang_ultimate_2020},
and quantum biology~\cite{spedalieri_detecting_2020,pereira2020quantum}.

Previous studies on quantum channel discrimination have focused primarily on the conventional \emph{tester-input} scenario~\cite{acin2001statistical,duan2007entanglement,chiribella_memory_2008,piani2009all,duan2009perfect,bae_more_2019,takagi_operational_2019,skrzypczyk_robustness_2019,fang2020chain,zhuang_ultimate_2020,bavaresco2021strict,debry2023experimental,sugiura2024power}, an optimistic, best-case setting in which the tester fully controls and can therefore trust the input state supplied to the channel and the final measurement (see Figure~\ref{fig:parallel-parallel}(a)). More recently, a \emph{jammer-input} adversarial framework was introduced~\cite{fang2025adversarial}, representing the opposite, worst-case setting in which an untrusted device controls the input. Motivated by quantum state verification~\cite{zhu2019efficient}, this framework models a state-preparation device that should produce a specified resource state but may instead output arbitrary junk if faulty or malicious; the tester designs the measurement, while the jammer controls the input states (see Figure~\ref{fig:parallel-parallel}(b)).

\begin{figure}[H]
  \centering
  \includegraphics[width=\textwidth]{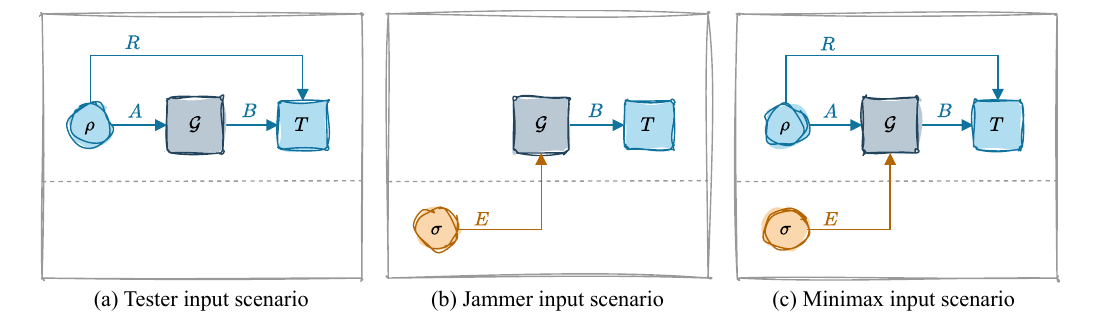}
  \caption{Tester-input, jammer-input, and minimax-input configurations. The central grey box $\cG$ denotes the channel under test. Tester systems are shown in blue, and jammer systems in orange. In (a), the tester controls the channel input and final  test, while the jammer input is absent. In (b), the jammer controls the channel input, while the tester performs the final  test and has no input control. In (c), the tester and jammer control separate channel inputs, and the tester performs the final  test.}
  \label{fig:parallel-parallel}
\end{figure}

These two extreme models provide useful idealizations, corresponding to settings in which the input is either fully controlled by the tester or fully determined by an external source. Many physically relevant scenarios, however, lie between these extremes, especially when environmental noise, hardware fluctuations, interference, or other unpredictable disturbances cannot be neglected. A motivating example is quantum arbitrarily varying-channel communication, in which the legitimate sender prepares the message-bearing quantum input, while a jammer or uncontrolled environment supplies a disturbance state, and the receiver observes only the resulting joint output~\cite{boche2018fully,belzig2024fully,Dasgupta2025,cao2025channel}. This model aims to determine when reliable communication remains possible under worst-case channel variations and to design protocols that are robust to unpredictable or adversarial behavior, as required in secure and anti-jamming communication systems. More broadly, the same perspective applies, in principle, to any quantum information processing task that requires performance guarantees against external uncertainties. In the setting of channel discrimination, this motivates the central question of this work: 

\begin{center}
  \emph{What are the fundamental limits of quantum channel discrimination when the tester and the jammer have separate control over their inputs and play competing roles?}
\end{center}

To address this question, we introduce a minimax game-theoretic framework involving the competing roles of a tester and a jammer, who control separate channel inputs; see Figure~\ref{fig:parallel-parallel}(c). By making either input trivial, the framework recovers the conventional tester-input and jammer-input settings as special cases. It therefore unifies these two endpoints while providing a more versatile model for channel discrimination under partial control and external uncertainty.

\subsection{Game models}

We formulate channel discrimination with split input control as an asymmetric
quantum hypothesis-testing game. Because the channel inputs may be chosen by two
competing parties, each game model is specified by two independent ingredients.
The first is the \emph{information pattern}, which specifies what the tester
knows about the jammer's input strategy and whether the jammer knows the active
hypothesis. The second is the \emph{input structure}, which specifies how the
tester and jammer may prepare and correlate their inputs across the $n$ parallel
channel uses, including entangled and IID inputs.

\begin{figure}[H]
  \centering
  \includegraphics[width=0.9\textwidth]{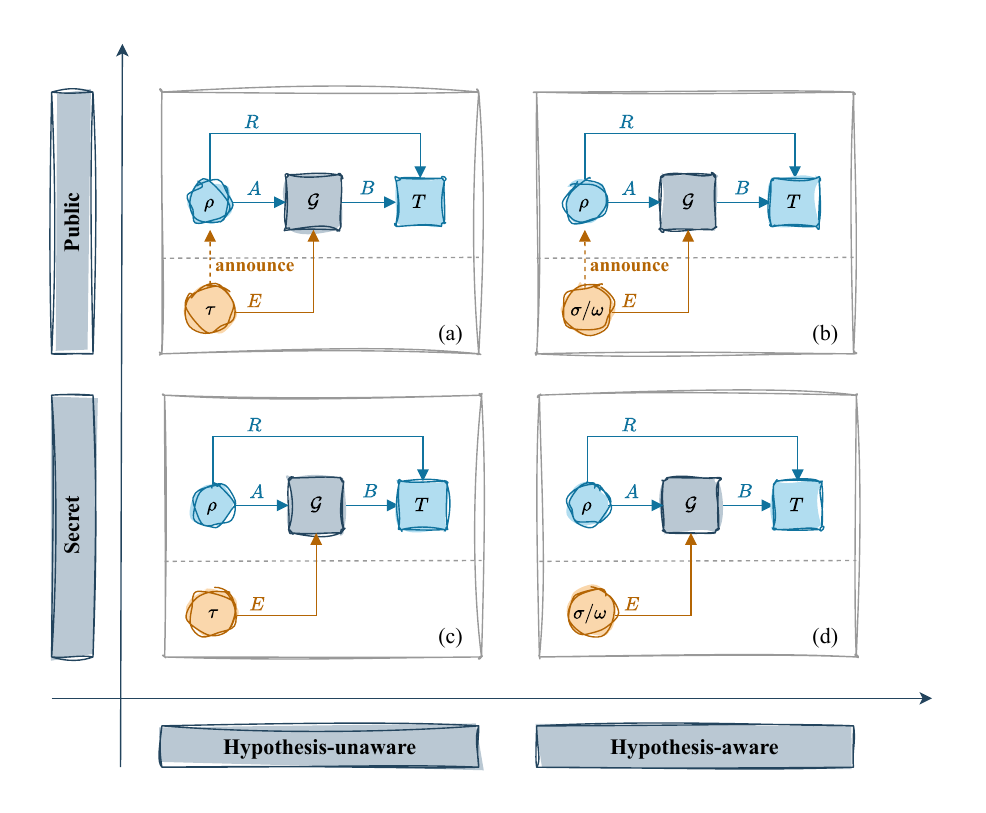}
  \caption{Four information patterns of the jammer. Rows indicate whether the jammer's input-selection rule is public or secret, while columns indicate whether the jammer is hypothesis-unaware or hypothesis-aware. In the public row, the dashed orange arrow denotes that the jammer's rule is disclosed before the tester acts. In the hypothesis-aware column, the jammer may use $\sigma$ when $\cN$ is active and $\omega$ when $\cM$ is active; in the hypothesis-unaware column, it must use the same state $\tau$ under both hypotheses.}
  \label{fig:information-structures}
\end{figure}

\paragraph{Four information patterns.}

The jammer's information pattern is specified by two binary axes as shown in Figure~\ref{fig:information-structures}.
The first axis concerns the visibility of the jammer's rule. A \emph{public}
jammer discloses its rule before the tester chooses the probing state and final
binary test, whereas a \emph{secret} jammer keeps the rule hidden. The second
axis concerns hypothesis awareness. A \emph{hypothesis-unaware} jammer must
inject the same state $\tau$ under both hypotheses, whereas a
\emph{hypothesis-aware} jammer may use $\sigma$ when $\cN$ is active and
$\omega$ when $\cM$ is active. Note that in the public, hypothesis-aware setting, the tester knows the complete
state-selection rule before choosing $\rho$ and $T$, but does not learn which
branch is realized in a particular run. Thus, only one jammer state is injected
in each run. Revealing the realized branch would provide side
information and expose the active hypothesis, thereby trivializing the game.

\paragraph{Three input structures.}
Across $n$ parallel rounds, the tester prepares an input state $\rho_n$ on
$R^nA^n$, where $R^n$ is a reference system retained for the final test. The jammer selects a state $\tau_n$ (or $\sigma_n$, $\omega_n$ depending on the information pattern)
on $E^n$. For each hypothesis, the tester's and jammer's inputs are required to
tensorize, although each player may correlate its own systems across channel
uses. We
distinguish between \emph{entangled} and \emph{IID} inputs. An entangled input
may be any state in $\density(X^n):=\{\tau\geq0:\tr\tau=1\}$, including states
with quantum entanglement or classical correlations across the $n$ uses. An
IID input, by contrast, is restricted to a deterministic tensor-power state in
$\sI(X^n):=\{\tau^{\ox n}:\tau\in\density(X)\}$. Here $X=AR$ for the tester
and $X=E$ for the jammer. We consider three input structures as shown in Figure~\ref{fig: input structure}: an entangled
tester against an entangled jammer, an IID tester against an entangled jammer,
and an IID tester against an IID jammer. This choice reflects the conservative
viewpoint common in quantum cryptography, in which the jammer is granted at
least as much input power as the tester.

\begin{figure}[H]
  \centering
  \includegraphics[width=\textwidth]{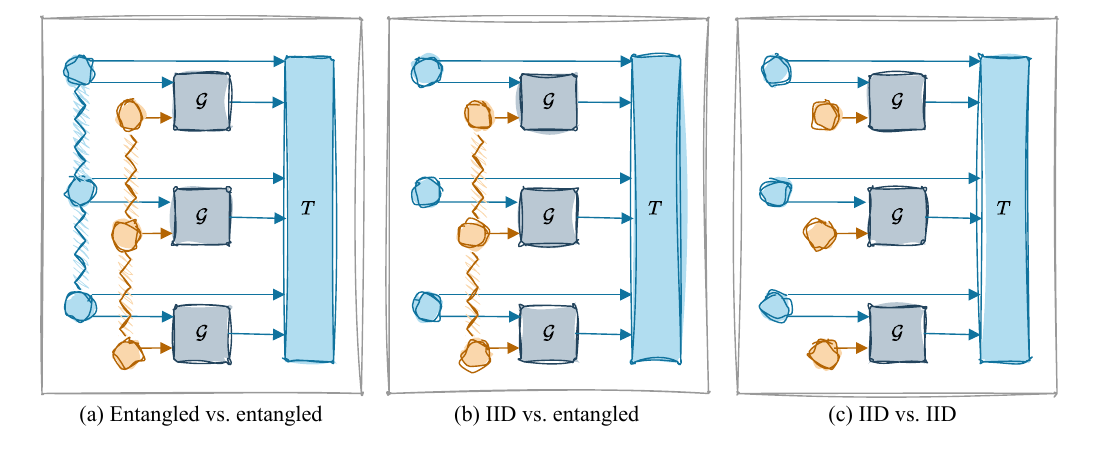}
  \caption{Three input structures for $n$ parallel channel uses. The tester's systems are shown in blue and the jammer's systems in orange. In (a), both players may prepare entangled inputs. In (b), the tester is restricted to IID inputs, whereas the jammer may still prepare entangled inputs. In (c), both players are restricted to IID inputs. All tests are performed on the tester's systems and can be entangled.}
  \label{fig: input structure}
\end{figure}

\paragraph{Twelve game models.}
Combining the four information patterns with the three input structures
yields twelve game models. Each model is an asymmetric binary
hypothesis-testing game played over $n$ parallel channel uses. The null channel
is $\cN_{AE\to B}$, the alternative channel is $\cM_{AE\to B}$, and the tester
and jammer control the input systems $A$ and $E$, respectively. The tester may
also retain a reference system $R$ and apply a final binary test to $R^nB^n$.
Let $\sE(X):=\{T:0\le T\le I_X\}$ denote the set of test effects on a system
$X$. We interpret $T$ as the effect corresponding to acceptance of the null
hypothesis, so that $I-T$ corresponds to rejection. For a tester input $\rho_n$,
jammer inputs $\sigma_n$ ($\omega_n$), and a test effect
$T\in\sE(B^nR^n)$, define the type-I and type-II error probabilities by
\begin{align}
\alpha_T(\rho_n,\sigma_n)
&:=
\tr\!\left[
(\cN^{\ox n}\ox\id_{R^n})(\rho_n\ox\sigma_n)(I-T)
\right],\\
\beta_T(\rho_n,\omega_n)
&:=
\tr\!\left[
(\cM^{\ox n}\ox\id_{R^n})(\rho_n\ox\omega_n)T
\right].
\label{eq:operational-error-functionals}
\end{align}
Here $\alpha_T$ is the type-I error, namely, the probability of rejecting
$\cN$ when $\cN$ is the true channel. Similarly, $\beta_T$ is the type-II
error, namely, the probability of accepting $\cN$ when $\cM$ is the true
channel. The tester seeks to maximize the payoff $-\log\beta_T$, subject to the
constraint $\alpha_T\leq\ve$, thereby making the two channels as distinguishable
as possible. The jammer has the opposite objective: it chooses its allowed input
states to minimize the tester's payoff.

\begin{table}[H]
\centering
\small
\setlength{\tabcolsep}{1.5pt}
\renewcommand{\arraystretch}{1.1}
\renewcommand{\tabularxcolumn}[1]{m{#1}}
\newcommand{\gamecell}[1]{%
  \raisebox{-1ex}[0pt][0pt]{\shortstack[c]{#1}}}
\begin{tabularx}{\textwidth}{
@{}
>{\centering\arraybackslash}m{0.13\textwidth}
>{\centering\arraybackslash}m{0.12\textwidth}
>{\centering\arraybackslash}m{0.1\textwidth}
>{\centering\arraybackslash}m{0.27\textwidth}
>{\centering\arraybackslash}X
>{\centering\arraybackslash}m{0.15\textwidth}
@{}}
\toprule
\noalign{\vskip1ex}
\shortstack[c]{\strut Input \\ structure\strut}
&
\shortstack[c]{\strut Information\\ pattern\strut}
&
\shortstack[c]{\strut Stein \\ exponent\strut}
&
\shortstack[c]{\strut Outer\\optimization\strut}
&
\shortstack[c]{\strut Inner\\optimization\strut}
&
\shortstack[c]{\strut Payoff \\ function \strut}
\\
\midrule
\multirow[c]{8}{*}{\shortstack[c]{Entangled \\ vs. entangled}}
&
\gamecell{Public\\unaware}
&
$\Stein_{\ve,n}^{\downarrow,\uparrow}$
&
$\displaystyle
\inf_{\tau_n\in\density(E^n)}$
&
$\displaystyle
\sup_{\substack{
  \rho_n\in\density((AR)^n)\\ T\in\sE((BR)^n)\\
  \alpha_T(\rho_n,\tau_n)\le\ve}}$
&
$-\log\beta_T(\rho_n,\tau_n)$
\\
\noalign{\vskip1ex}
\cdashline{2-6}[2pt/2pt]
&
\gamecell{Public\\aware}
&
$\Stein_{\ve,n}^{\downarrow\downarrow,\uparrow}$
&
$\displaystyle
\inf_{\sigma_n,\omega_n\in\density(E^n)}$
&
$\displaystyle
\sup_{\substack{
  \rho_n\in\density((AR)^n)\\ T\in\sE((BR)^n)\\
  \alpha_T(\rho_n,\sigma_n)\le\ve}}$
&
$-\log\beta_T(\rho_n,\omega_n)$
\\
\noalign{\vskip1ex}
\cdashline{2-6}[2pt/2pt]
&
\gamecell{Secret\\unaware}
&
$\Stein_{\ve,n}^{\uparrow,\downarrow}$
&
$\displaystyle
\sup_{\substack{
  \rho_n\in\density((AR)^n)\\ T\in\sE((BR)^n)\\
  \alpha_T(\rho_n,\tau_n)\le\ve,
  \forall\,\tau_n\in\density(E^n)}}$
&
$\displaystyle
\inf_{\tau_n\in\density(E^n)}$
&
$-\log\beta_T(\rho_n,\tau_n)$
\\
\noalign{\vskip1ex}
\cdashline{2-6}[2pt/2pt]
&
\gamecell{Secret\\aware}
&
$\Stein_{\ve,n}^{\uparrow,\downarrow\downarrow}$
&
$\displaystyle
\sup_{\substack{
  \rho_n\in\density((AR)^n)\\ T\in\sE((BR)^n)\\
  \alpha_T(\rho_n,\sigma_n)\le\ve,
  \forall\,\sigma_n\in\density(E^n)}}$
&
$\displaystyle
\inf_{\omega_n\in\density(E^n)}$
&
$-\log\beta_T(\rho_n,\omega_n)$
\\
\noalign{\vskip1ex}
\midrule 
\multirow[c]{8}{*}{\shortstack[c]{IID vs.\\ entangled}}
&
\gamecell{Public\\unaware}
&
$\Stein_{\ve,n}^{\downarrow,\emptyuparrow}$
&
$\displaystyle
\inf_{\tau_n\in\density(E^n)}$
&
$\displaystyle
\sup_{\substack{
  \rho_n\in\sI((AR)^n)\\ T\in\sE((BR)^n)\\
  \alpha_T(\rho_n,\tau_n)\le\ve}}$
&
$-\log\beta_T(\rho_n,\tau_n)$
\\
\noalign{\vskip1ex}
\cdashline{2-6}[2pt/2pt]

&
\gamecell{Public\\aware}
&
$\Stein_{\ve,n}^{\downarrow\downarrow,\emptyuparrow}$
&
$\displaystyle
\inf_{\sigma_n,\omega_n\in\density(E^n)}$
&
$\displaystyle
\sup_{\substack{
  \rho_n\in\sI((AR)^n)\\ T\in\sE((BR)^n)\\
  \alpha_T(\rho_n,\sigma_n)\le\ve}}$
&
$-\log\beta_T(\rho_n,\omega_n)$
\\
\noalign{\vskip1ex}
\cdashline{2-6}[2pt/2pt]
&
\gamecell{Secret\\unaware}
&
$\Stein_{\ve,n}^{\emptyuparrow,\downarrow}$
&
$\displaystyle
\sup_{\substack{
  \rho_n\in\sI((AR)^n)\\ T\in\sE((BR)^n)\\
  \alpha_T(\rho_n,\tau_n)\le\ve,
  \forall\,\tau_n\in\density(E^n)}}$
&
$\displaystyle
\inf_{\tau_n\in\density(E^n)}$
&
$-\log\beta_T(\rho_n,\tau_n)$
\\
\noalign{\vskip1ex}
\cdashline{2-6}[2pt/2pt]
&
\gamecell{Secret\\aware}
&
$\Stein_{\ve,n}^{\emptyuparrow,\downarrow\downarrow}$
&
$\displaystyle
\sup_{\substack{
  \rho_n\in\sI((AR)^n)\\ T\in\sE((BR)^n)\\
  \alpha_T(\rho_n,\sigma_n)\le\ve,
  \forall\,\sigma_n\in\density(E^n)}}$
&
$\displaystyle
\inf_{\omega_n\in\density(E^n)}$
&
$-\log\beta_T(\rho_n,\omega_n)$
\\
\noalign{\vskip1ex}
\midrule
\multirow[c]{8}{*}{\shortstack[c]{IID vs. IID}}
&
\gamecell{Public\\unaware}
&
$\Stein_{\ve,n}^{\emptydownarrow,\emptyuparrow}$
&
$\displaystyle
\inf_{\tau_n\in\sI(E^n)}$
&
$\displaystyle
\sup_{\substack{
  \rho_n\in\sI((AR)^n)\\ T\in\sE((BR)^n)\\
  \alpha_T(\rho_n,\tau_n)\le\ve}}$
&
$-\log\beta_T(\rho_n,\tau_n)$
\\
\noalign{\vskip1ex}
\cdashline{2-6}[2pt/2pt]
&
\gamecell{Public\\aware}
&
$\Stein_{\ve,n}^{\emptydownarrow\emptydownarrow,\emptyuparrow}$
&
$\displaystyle
\inf_{\sigma_n,\omega_n\in\sI(E^n)}$
&
$\displaystyle
\sup_{\substack{
  \rho_n\in\sI((AR)^n)\\ T\in\sE((BR)^n)\\
  \alpha_T(\rho_n,\sigma_n)\le\ve}}$
&
$-\log\beta_T(\rho_n,\omega_n)$
\\
\noalign{\vskip1ex}
\cdashline{2-6}[2pt/2pt]
&
\gamecell{Secret\\unaware}
&
$\Stein_{\ve,n}^{\emptyuparrow,\emptydownarrow}$
&
$\displaystyle
\sup_{\substack{
  \rho_n\in\sI((AR)^n)\\ T\in\sE((BR)^n)\\
  \alpha_T(\rho_n,\tau_n)\le\ve,
  \forall\,\tau_n\in\sI(E^n)}}$
&
$\displaystyle
\inf_{\tau_n\in\sI(E^n)}$
&
$-\log\beta_T(\rho_n,\tau_n)$
\\
\noalign{\vskip1ex}
\cdashline{2-6}[2pt/2pt]
&
\gamecell{Secret\\aware}
&
$\Stein_{\ve,n}^{\emptyuparrow,\emptydownarrow\emptydownarrow}$
&
$\displaystyle
\sup_{\substack{
  \rho_n\in\sI((AR)^n)\\ T\in\sE((BR)^n)\\
  \alpha_T(\rho_n,\sigma_n)\le\ve,
  \forall\,\sigma_n\in\sI(E^n)}}$
&
$\displaystyle
\inf_{\omega_n\in\sI(E^n)}$
&
$-\log\beta_T(\rho_n,\omega_n)$
\\
\noalign{\vskip1ex}
\bottomrule
\end{tabularx}
\caption{Twelve Stein exponents for minimax channel-discrimination games.}
\label{tab:finite-block-Stein-quantities}
\label{tab:arrow-code-dictionary}
\label{tab:scenario-summary}
\end{table}

Because the framework contains several game models, we use the arrow code in
Table~\ref{tab:finite-block-Stein-quantities} to specify the information
pattern and input structure of each model. Each code can be decoded in three
steps: the direction identifies the optimizing player, the arrow style identifies the allowed input class and the jammer's hypothesis
awareness, and the order identifies the jammer's visibility.
\begin{itemize}
\item \emph{Arrow directions.} An upward arrow, such as ``$\uparrow$'' or
``$\emptyuparrow$'', denotes the tester's maximization. A downward arrow, such as
``$\downarrow$'', ``$\emptydownarrow$'', ``$\downarrow\downarrow$'', or
``$\emptydownarrow\emptydownarrow$'', denotes the jammer's minimization.
\item \emph{Arrow styles.} Solid arrows, such as ``$\uparrow$'', ``$\downarrow$'', and
``$\downarrow\downarrow$'', represent entangled inputs. Hollow arrows, such as
``$\emptyuparrow$'', ``$\emptydownarrow$'', and
``$\emptydownarrow\emptydownarrow$'', represent IID inputs. For the jammer, a
single downward arrow denotes hypothesis-unawareness, where the same
input is used under both hypotheses. Two consecutive downward arrows denote
hypothesis-awareness, where different inputs may be chosen.

\item \emph{Arrow orders.} The arrows are listed from left to right in the order
of optimization, from the outer optimization to the inner one, encoding the jammer's visibility. In a public-jammer
game, the tester can condition its strategy on the jammer's input, so the code
begins with the jammer's minimization and then the tester's maximization, e.g.,``$\downarrow,\uparrow$''. In a secret-jammer game, the tester must choose
without observing the jammer's input, so the order is reversed, e.g.,
``$\uparrow,\downarrow$''.
\end{itemize}  

For each game with arrow code ``$(\star)$'', the corresponding
finite-blocklength Stein exponent is denoted by
$\Stein_{\ve,n}^{(\star)}(\cN,\cM)$ and defined in
Table~\ref{tab:finite-block-Stein-quantities}. As an example, consider the public,
hypothesis-unaware game with entangled inputs for both players. The jammer
selects one state $\tau_n\in\density(E^n)$ and uses it under both hypotheses.
Because the jammer input is public, the tester observes $\tau_n$ before choosing
the tester input state $\rho_n$ and the test $T$. The tester then maximizes the
type-II error exponent subject to the type-I constraint. This gives
\begin{align}
\Stein_{\ve,n}^{\downarrow,\uparrow}(\cN,\cM):= \quad \inf_{\tau_n \in \density(E^n)} \quad \sup_{\substack{\rho_n \in \density((AR)^n)\\ T \in \sE((BR)^n)\\ \alpha_T(\rho_n,\tau_n)\le\ve}} \quad -\log\beta_T(\rho_n,\tau_n),
\end{align}
which corresponds to the first row of the table. The remaining exponents are defined in a similar way.

\subsection{Main results}

Our main objective is to characterize the asymptotic Stein exponent of each game, thereby establishing quantum Stein's lemmas for the corresponding models and understanding their fundamental limits.
For a discrimination
game with arrow code ``$(\star)$'', we denote its asymptotic rate by
\begin{align}\label{eq: stein rate}
  \Stein^{(\star)}(\cN,\cM):= \lim_{\ve\to 0^+} \liminf_{n\to \infty} \frac{1}{n} \Stein_{\ve,n}^{(\star)}(\cN,\cM).
\end{align} 
Since these games differ in their information patterns and input structures, it is natural to expect them to exhibit distinct operational behavior. Somewhat surprisingly, we find that many of these settings turn out to coincide in the asymptotic regime (Theorem~\ref{thm:general-general-Stein},~\ref{thm:iid-tester-general-jammer-Stein},~\ref{thm: public hypo hypothesis-unaware iid} and~\ref{thm: secret hypo hypothesis-aware iid}).

\begin{boxmessage}[Operational collapse, informal]
If the jammer may use entangled inputs, all four information patterns lead to
the same asymptotic Stein exponent. Thus, neither the visibility of the jammer
input nor the jammer's hypothesis awareness changes the asymptotic rate:
  \begin{align}
    \Stein^{\downarrow,\uparrow} = \Stein^{\downarrow\downarrow,\uparrow} = \Stein^{\uparrow,\downarrow} = \Stein^{\uparrow,\downarrow\downarrow},\label{eq: messsage 1 tmp1}\\
    \Stein^{\downarrow,\emptyuparrow} = \Stein^{\downarrow\downarrow,\emptyuparrow} = \Stein^{\emptyuparrow,\downarrow}= \Stein^{\emptyuparrow,\downarrow\downarrow}.\label{eq: messsage 1 tmp2}
  \end{align}
By contrast, if both players are restricted to IID inputs, the information
pattern indeed matters. In general, the four games do
not all collapse, although the two secret-jammer exponents coincide:
  \begin{align}
    \Stein^{\emptydownarrow,\emptyuparrow} \neq \Stein^{\emptydownarrow\emptydownarrow,\emptyuparrow} \neq \Stein^{\emptyuparrow,\emptydownarrow}= \Stein^{\emptyuparrow,\emptydownarrow\emptydownarrow}.
  \end{align}
\end{boxmessage}

Establishing these collapses requires new techniques to compare games with
different optimization orders and input constraints. For entangled inputs, a minimax
identity (Propositions~\ref{prop: jammer divergence minimax}) removes the public--secret distinction, while a mixing construction (Propositions~\ref{prop:asymptotic_equivalence_variants})
replaces hypothesis-dependent jammer inputs with a common input at vanishing
asymptotic cost. Together, these establish Eq.~\eqref{eq: messsage 1 tmp1}. The IID tester constraint
introduces a subtler obstacle, as the nonconvex input family prevents the
same minimax argument. We therefore establish the collapse directly at the
operational level by constructing an explicit jammer strategy. The difficulty
is that a jammer input tailored to one tester choice need not be effective
against the others. Our construction overcomes this by combining these
tailored inputs into one common input, independent of both the tester's choice
and the hypothesis. The resulting bound holds uniformly over all IID tester
inputs, establishing Eq.~\eqref{eq: messsage 1 tmp2} without loss in the
asymptotic Stein exponent. When the jammer is also IID, however, different
information patterns can lead to different divergence characterizations and
require separate arguments. As a concrete structural specialization, we show that when the alternative channel is a replacer, all twelve vanishing-error Stein exponents reduce to the same additive, single-letter value (Theorem~\ref{thm:replacer-universal-single-letter}).

Our second main message concerns the sharpness of these exponents, as captured by the strong-converse property. In hypothesis testing, a strong converse is a desirable strengthening of the usual Stein-type statement: the convergence in Eq.~\eqref{eq: stein rate} does not rely on imposing a vanishing-error constraint. Such a result is known for jammer-input channel discrimination in general~\cite{fang2025adversarial}, whereas the corresponding tester-input problem remains open~\cite{fang2025towards}.
To establish strong converses for our minimax models, we develop a general
principle that upgrades a vanishing-error achievability result into a
strong-converse statement
(Theorem~\ref{thm:quantum-info-spectrum-strong-conv-general-scale}).  This is
conceptually surprising because achievability and strong converses concern
opposite sides of the operational problem and are typically proved by
separate arguments. 

\begin{boxmessage}[Achievability-to-strong-converse upgrade, informal]
Consider arbitrary sequences of quantum hypotheses, consisting either of
individual states or of sets of states.  If the vanishing-error Stein exponent
reaches the corresponding relative entropy rate, then this rate is also a strong converse.
\end{boxmessage}

This principle yields strong converses for an IID
tester against a secret entangled jammer and a trivial tester against a
secret IID jammer
(Theorem~\ref{thm: secret hypo hypothesis-aware iid vs general} and
Corollary~\ref{cor:trivial-tester-IID-jammer-strong-converse}). As byproducts, for composite
hypotheses, it resolves the strong-converse question posed by Berta,
Brand\~ao, and Hirche~\cite[Remark~2.2]{berta2021composite}
(Corollary~\ref{cor:BBH-composite-state-strong-converse}). It also strengthens
Lami's Stein lemmas for correlated and IID/arbitrarily varying
hypotheses~\cite{Lam25_Sanov} to strong-converse statements
(Corollary~\ref{cor:Lami-composite-state-strong-converse}).

Beyond these operational results, this work also develops minimax channel divergences with their basic structural properties. The framework and techniques developed here may
support future studies of quantum information tasks involving competing roles.  A summary of results is given in Figure~\ref{fig: summary}.

\begin{figure}[!ht]
  \centering
  \includegraphics[width=0.9\textwidth]{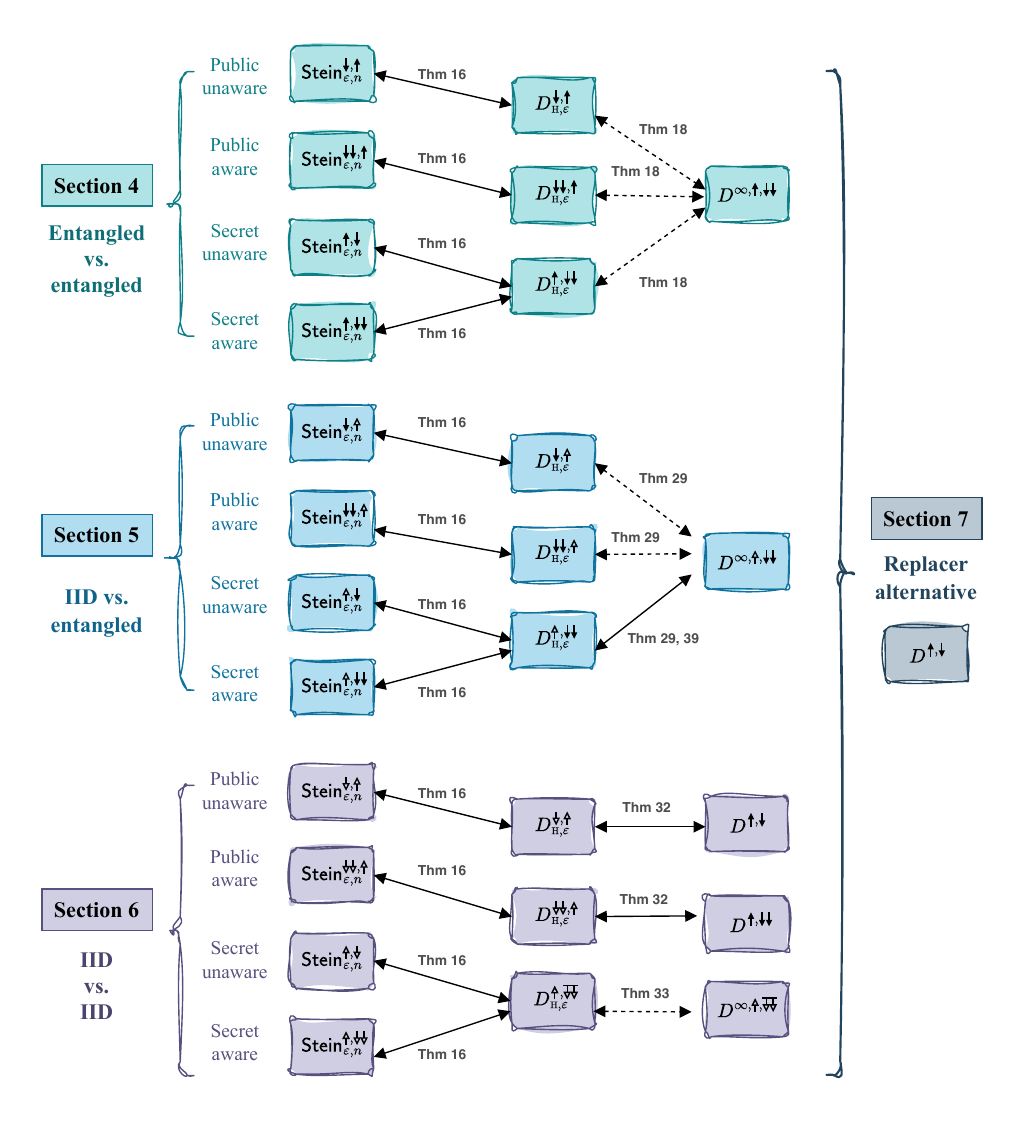}
  \caption{Summary of the main results. At finite blocklength, the twelve game
  models reduce to nine hypothesis-testing divergences
  (Definition~\ref{def:resource-dependent-block-divergences}). Asymptotically,
  they reduce to five Umegaki relative entropy rates
  (Definition~\ref{def:channel-divergence-regularization}). Solid and dashed
  connectors indicate strong- and weak-converse results, respectively. For
  a replacer alternative, all Stein exponents reduce to the same
  single-letter value.}
  \label{fig: summary}
\end{figure}

\subsection{Organization of the paper}

The remainder of the paper is organized as follows.
Section~\ref{sec: preliminaries} fixes the notation and reviews the state
divergences and one-shot hypothesis-testing bounds used throughout.
Section~\ref{sec: minimax divergence} develops tester-input, jammer-input, maximin,
and minimax channel divergences and establishes their finite-block operational
correspondence with the twelve discrimination games.
Section~\ref{sec:Channel discrimination: general tester} analyzes an entangled
tester against an entangled jammer and proves that the four information
patterns have a common Stein exponent given by a regularized divergence.
Section~\ref{sec:iid-tester-unrestricted-jammer} treats an IID tester
against an entangled jammer, for which the four operational exponents still
coincide.
Section~\ref{sec: iid inputs} studies an IID tester against an IID
jammer, deriving single-letter fixed-error exponents for public jammers and
  regularized convexified-domain vanishing-error exponents for secret jammers.
Section~\ref{sec:strong-converse-enhancement} develops an
achievability-to-strong-converse principle for state sequences and extends it
to sequences of state sets before applying these results to selected minimax
games and composite hypothesis testing.  The resulting
specializations include the trivial-tester game and fixed-error refinements of
known composite Stein lemmas.
Section~\ref{sec:replacer-alternatives} proves additivity and
single-letterization for replacer alternatives and briefly notes its implication in quantum illumination.
Finally, Section~\ref{sec:conclusion} summarizes the main
conclusions  and outlines directions for future work.

\section{Preliminaries} \label{sec: preliminaries}

This section fixes notation and collects the state divergences, several commonly used properties, and the one-shot hypothesis-testing bounds used throughout the paper.

\subsection{Notation and conventions}

All Hilbert spaces are finite dimensional, and all logarithms are taken to
base two.  We use calligraphic letters for maps and named sets, capital Roman
letters for quantum systems, and boldface letters for finite strings.  Thus,
$\mathbf{x}=(x_1,\ldots,x_n)$ denotes a finite string, whereas
$A^n=A_1\cdots A_n$ denotes a composite quantum system.  The Hilbert space
associated with a system $A$ is denoted by $\cH_A$.  We write $\LinOp(A)$,
$\HermOp(A)$, and $\PSD(A)$ for the sets of linear, Hermitian, and positive
semidefinite operators on $\cH_A$, respectively, and $\density(A)$ for the
set of density operators on $\cH_A$.
For $n$ copies of $A$, we use
$\sI(A^n):=\{\xi^{\ox n}:\xi\in\density(A)\}$ to denote the set of
tensor-power states and $\PER(A^n)$ the states
that are invariant under permutations of the $n$ tensor factors.  For
$\rho,\sigma\in\PSD(A)$, the notation $\rho\ll\sigma$ means that
$\operatorname{supp}\rho\subseteq\operatorname{supp}\sigma$.  Let
$\sE(A):=\{T:0\leq T\leq I_A\}$ denote the set of effects on $A$. A positive operator-valued measure (POVM) on $A$ is a set of effects $\Lambda=\{\Lambda_x\}_{x\in\cX}\subseteq\sE(A)$ such that $\sum_{x\in\cX}\Lambda_x=I_A$.

The sets of completely positive trace-preserving maps and completely positive
maps from $A$ to $B$ are denoted by $\CPTP(A\!:\!B)$ and $\CP(A\!:\!B)$,
respectively.  For $\cN\in\CP(AE\!:\!B)$ and a state
$\rho\in\density(RA)$, fixing the $RA$ input of $\cN$ to
$\rho$ induces a map $\cN_\rho:E\to RB$, defined by
\begin{align}\label{eq: induced channel}
\cN_\rho(\sigma)
:=\cN_{AE\to B}\ox \id_R(\rho_{RA}\ox\sigma_E),
\qquad \sigma\in\density(E).
\end{align}

\subsection{Quantum state divergences}

A functional $\DD:\density\times\PSD\to\reals\cup\{+\infty\}$ is a
\emph{quantum divergence} if it is monotone under CPTP maps. That is, for every CPTP map $\mathcal{E}$, every $\rho\in\density$, and
every $\sigma\in\PSD$, it satisfies the data-processing inequality
$\DD\infdiv*{\mathcal{E}(\rho)}{\mathcal{E}(\sigma)}
\leq\DD\infdiv*{\rho}{\sigma}$.

\begin{definition}\label{def:divergence-properties}
We record several common properties of quantum divergences for later reference, noting that a given divergence need not satisfy all of them.
\begin{itemize}[label={},leftmargin=0pt,labelsep=0pt,
                topsep=0.3em,itemsep=0.2em]
\item \emph{Direct-sum equality.}  For every
distribution $p_X$, every
$\{\rho_x\}_x\subseteq\density$, and every
$\{\sigma_x\}_x\subseteq\PSD$,
\begin{align}
\DD\bigg(\sum_{x} p_X(x)\,\proj{x}\otimes \rho_x\bigg\|\sum_{x} p_X(x)\,\proj{x}\otimes \sigma_x\bigg)
= \sum_{x} p_X(x)\,\DD\infdiv*{\rho_x}{\sigma_x}.
\end{align}
\item \emph{Normalization.}  For every
$\rho\in\density$, $\sigma\in\PSD$, and
$t>0$, $\DD(\rho\|t \sigma) = \DD(\rho\|\sigma) - \log t.$
\item \emph{Operator dominance.}  For every
$\rho\in\density$ and
$\sigma,\omega\in\PSD$ satisfying $\omega\leq\sigma$,
$\DD(\rho\|\sigma) \leq \DD(\rho\|\omega).$
\item \emph{Joint convexity.}  For every
$\lambda\in[0,1]$,
$\rho_0,\rho_1\in\density$, and $\sigma_0,\sigma_1\in\PSD$,
\begin{align}
    \DD\infdiv*{\lambda \rho_0 + (1-\lambda)\rho_1}{\lambda \sigma_0 + (1-\lambda)\sigma_1}
    \leq
    \lambda \DD\infdiv*{\rho_0}{\sigma_0} + (1-\lambda)\DD\infdiv*{\rho_1}{\sigma_1}.
\end{align}
\item \emph{Boundedness.}  For every
$\rho\in\density$ and $\sigma\in\PSD$, $\DD(\rho\|\sigma) \leq D_{\max}(\rho\|\sigma).$
\item \emph{Lower semicontinuity.}  Whenever
$\rho_n\to\rho$ and $\sigma_n\to\sigma$,
$\rho\in\density$, and
$\sigma\in\PSD$,
\begin{align}
    \DD(\rho\|\sigma) \leq \liminf_{n\to\infty} \DD(\rho_n\|\sigma_n),
\end{align}
provided that $\rho_n\in\density$ and $\sigma_n\in\PSD$ for every $n$.
\end{itemize}
\end{definition}

We recall the particular divergences
needed below.

\begin{definition}
For $\rho\in\density$ and $\sigma\in\PSD$, the Umegaki relative entropy is
defined by~\cite{umegaki1954conditional}
\begin{align}\label{eq: Umegaki}
    D(\rho\|\sigma):=
        \tr\!\left[\rho(\log \rho - \log \sigma)\right],
\end{align}
if $\rho\ll\sigma$, and $+\infty$ otherwise.
\end{definition}

\begin{definition}
For $\rho\in\density$, $\sigma\in\PSD$, the sandwiched R\'enyi divergence
is defined by~\cite{muller2013quantum,wilde2014strong}
\begin{align}
D_{\Sand,\alpha}(\rho\|\sigma)
:=
\dfrac{1}{\alpha-1}
\log\tr\!\left[
\left(\sigma^{\frac{1-\alpha}{2\alpha}}\rho
\sigma^{\frac{1-\alpha}{2\alpha}}\right)^\alpha
\right]
\end{align}
when either $0<\alpha<1$ and $\rho\not\perp\sigma$, or $\alpha>1$ and
$\rho\ll\sigma$; in all other cases, it is defined by $+\infty$.
\end{definition}

\begin{definition}
For $\rho\in\density$ and $\sigma\in\PSD$, the measured R\'enyi divergence is
defined by~\cite{Berta2017}
\begin{align}\label{eq: definition DM alpha}
D_{\Meas, \alpha} (\rho\|\sigma)
:= \sup_{(\cX,\Lambda)}
D_{\alpha}(P_{\rho,\Lambda}\|P_{\sigma,\Lambda}),
\end{align}
where $D_{\alpha}$ denotes the classical R\'enyi divergence and the supremum
is taken over all finite outcome sets $\cX$ and POVMs
$\Lambda=\{\Lambda_x\}_{x\in\cX}$.  The measurement outcome distributions are
$P_{\rho,\Lambda}(x):=\tr[\Lambda_x\rho]$ and
$P_{\sigma,\Lambda}(x):=\tr[\Lambda_x\sigma]$.
The measured relative entropy is similarly defined as~\cite{donald1986relative,hiai1991proper}
\begin{align}
D_{\Meas} (\rho\|\sigma) := \sup_{(\cX,\Lambda)}
D(P_{\rho,\Lambda}\|P_{\sigma,\Lambda}),
\end{align}
where $D$ denotes the classical relative entropy, or Kullback--Leibler
divergence.
\end{definition}

Due to the continuity, we will often regard $D_{\Meas,1} = D_{\Meas}$ and $D_{\Sand,1} = D$ throughout this work.

\begin{definition}
For $\rho\in\density$ and $\sigma\in\PSD$, the max-relative entropy is defined by~\cite{datta2009min}
\begin{align}
D_{\max}(\rho\|\sigma)
:=\inf\{\lambda\in\reals:\rho\leq 2^\lambda\sigma\},
\end{align}
if $\rho\ll\sigma$, and $+\infty$ otherwise.
\end{definition}

\begin{definition}
Let $\ve\in[0,1]$, $\rho\in\density$, and
$\sigma\in\PSD$.  The quantum hypothesis-testing relative entropy is defined by
$D_{\Hypo,\ve}(\rho\|\sigma):=-\log\beta_\ve(\rho\|\sigma)$, where
\begin{align}
\beta_\ve(\rho\|\sigma)
:=
\min_{0\leq T \leq I}
\bigl\{\tr[\sigma T]:\tr[\rho(I-T)]\le\ve\bigr\}.
\end{align}
\end{definition}

We recall the bounds that connect the hypothesis-testing divergence to
the Umegaki and sandwiched R\'enyi divergences. The lower bound follows by
combining~\cite[Proposition~3]{qi2018applications} with the
Araki--Lieb--Thirring inequality; see also~\cite{muller2013quantum}.
The two upper bounds are due to~\cite[Eq.~(2)]{wang2012one}
and~\cite[Lemma 5]{cooney2016strong}, respectively.

\begin{lemma}
\label{lem:one-shot-hypothesis-testing-bounds}
Let $1/2<\alpha<1<\gamma$, $\ve\in(0,1)$,
$\rho\in\density$, and $\sigma\in\PSD$.  Then
\begin{align}
D_{\Sand,\alpha}(\rho\|\sigma)
+\frac{\alpha}{\alpha-1}\log\frac1\ve
&\le D_{\Hypo,\ve}(\rho\|\sigma)
\le 
\begin{cases}
  \displaystyle\frac{D(\rho\|\sigma)+h(\ve)+\ve\log\tr\sigma}{1-\ve},
\\[0.2cm]
  \displaystyle D_{\Sand,\gamma}(\rho\|\sigma)
  +\frac{\gamma}{\gamma-1}\log\frac1{1-\ve},
\end{cases}
\label{eq:unified-one-shot-hypothesis-testing-bounds}
\end{align}
where
$h(\ve):=-\ve\log\ve-(1-\ve)\log(1-\ve)$ is the binary entropy.
\end{lemma}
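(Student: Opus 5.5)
The lemma bundles three separate inequalities. I would prove each one on its own, working from standard tools.

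\emph{Lower bound.} Fix $\alpha\in(1/2,1)$ and use a Neyman--Pearson-type test built from the sandwiched quantity, following~\cite[Proposition~3]{qi2018applications}. First I would show the Petz-type achievability bound
\[
D_{\Hypo,\ve}(\rho\|\sigma)\ge \widetilde D_\alpha(\rho\|\sigma)+\frac{\alpha}{\alpha-1}\log\frac1\ve ,
\]
where $\widetilde D_\alpha$ is the Petz R\'enyi divergence $\frac{1}{\alpha-1}\log\tr[\rho^\alpha\sigma^{1-\alpha}]$.

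To get it, take the projector $T=\{\rho\ge 2^{\lambda}\sigma\}$. Audenaert's inequality $\tr[A^{s}B^{1-s}]\ge\tr[A\{A\le B\}]+\tr[B\{A>B\}]$ gives
\[
\tr[\rho(I-T)]+2^{\lambda}\tr[\sigma T]\le 2^{\lambda(1-\alpha)}\tr[\rho^\alpha\sigma^{1-\alpha}].
\]
Choose $\lambda$ so that the right-hand side equals $\ve$. Then the type-I error is at most $\ve$, and $\tr[\sigma T]\le 2^{-\lambda}\ve$. Solving for $\lambda$ yields exactly the displayed exponent.

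The Araki--Lieb--Thirring inequality gives, for $\alpha<1$,
\[
\tr\big[(\sigma^{\frac{1-\alpha}{2\alpha}}\rho\,\sigma^{\frac{1-\alpha}{2\alpha}})^\alpha\big]\ge\tr[\rho^\alpha\sigma^{1-\alpha}].
\]
Because $1/(\alpha-1)<0$, this means $D_{\Sand,\alpha}\le\widetilde D_\alpha$, so the sandwiched version of the lower bound follows. The one point needing care is the precise form of the achievability bound, namely that the constant $\frac{\alpha}{\alpha-1}\log\frac1\ve$ comes out rather than $\frac{1}{\alpha-1}$; this is a matter of tuning $\lambda$. The restriction $\alpha>1/2$ is only needed so that the sandwiched divergence is defined and monotone, which keeps the statement consistent.

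\emph{First upper bound.} Let $T$ be an optimal test, set $p=\tr[\rho T]\ge1-\ve$ and $q=\tr[\sigma T]$. Apply data processing of $D$ under the binary measurement $\{T,I-T\}$:
\[
D(\rho\|\sigma)\ge p\log\frac pq+(1-p)\log\frac{1-p}{\tr\sigma-q}\ge -h(p)-p\log q-(1-p)\log\tr\sigma .
\]
Rearranging gives $-\log q\le \big(D+h(p)+(1-p)\log\tr\sigma\big)/p$. Monotonicity in $p$ (or simply restricting to tests with $p=1-\ve$ exactly, which is allowed by optimality) then gives the stated bound. If $\tr\sigma<1$, the sign of the $\log\tr\sigma$ term should be checked.

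\emph{Second upper bound.} This follows the argument of Cooney et al.~\cite{cooney2016strong}. Apply data processing of $D_{\Sand,\gamma}$ with $\gamma>1$ under the same binary measurement, and keep only the first term of the classical R\'enyi sum:
\[
2^{(\gamma-1)D_{\Sand,\gamma}}\ge p^\gamma q^{1-\gamma}\ge (1-\ve)^\gamma q^{1-\gamma}.
\]
Taking logarithms and dividing by $\gamma-1$ gives $-\log q\le D_{\Sand,\gamma}+\frac{\gamma}{\gamma-1}\log\frac{1}{1-\ve}$.

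Apart from the constant-tuning step in the lower bound, every step is a routine data-processing computation.
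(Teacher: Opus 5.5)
Your proposal is correct and follows the same route as the paper. The paper proves this lemma purely by citation: \cite[Proposition~3]{qi2018applications} combined with Araki--Lieb--Thirring for the lower bound, and the binary-measurement data-processing arguments of \cite{wang2012one} and \cite{cooney2016strong} for the two upper bounds; your Audenaert-inequality derivation is the standard way to obtain the first of these. Your two hedges also resolve cleanly: choosing $\lambda$ with $2^{\lambda(1-\alpha)}\tr[\rho^\alpha\sigma^{1-\alpha}]=\ve$ gives exactly the constant $\frac{\alpha}{\alpha-1}\log\frac1\ve$, and rescaling an optimal test by $(1-\ve)/p$ keeps it optimal while forcing $p=1-\ve$, so no sign condition on $\log\tr\sigma$ is needed.
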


The next lemma compares the measured and sandwiched R\'enyi divergences up to
a correction determined by the spectrum of the second argument. It follows by combining several known results in the literature; see
e.g.,~\cite[Lemma 16, 17]{fang2024generalized}.
\begin{lemma}\label{Lemma: DM and Sandwiched relation}
Let $\alpha\in[1/2,\infty)$.  For any $\rho\in\density$ and
$\sigma\in\PSD$,
\begin{align}
    D_{\Meas,\alpha}(\rho\|\sigma) \leq D_{\Sand,\alpha}(\rho\|\sigma) \leq D_{\Meas,\alpha}(\rho\|\sigma) + 2\log |\spec(\sigma)|,
\end{align}
where $|\spec(\sigma)|$ denotes the number of distinct eigenvalues of
$\sigma$. In particular, let $X\in\LinOp(\cH^{\otimes n})$ be permutation invariant, with
$\dim\cH=d$.  Then the number of distinct eigenvalues of $X$ satisfies
\begin{align}
    \size{\spec(X)} \leq (n+1)^{d} (n+d)^{d^2} = \poly(n).
\end{align}
\end{lemma}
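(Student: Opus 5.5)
The plan is to reduce the statement to two separate claims and cite known results for each: the two-sided comparison between measured and sandwiched R\'enyi divergences, and the polynomial bound on the number of distinct eigenvalues of a permutation-invariant operator.

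\emph{Lower bound.} The inequality $D_{\Meas,\alpha}\le D_{\Sand,\alpha}$ is data processing. A POVM $\Lambda$ defines the measure-and-prepare channel $X\mapsto\sum_x\tr[\Lambda_x X]\proj{x}$. The sandwiched divergence satisfies data processing for $\alpha\ge1/2$, so $D_{\Sand,\alpha}(\rho\|\sigma)$ dominates the sandwiched divergence of the two output distributions. On commuting inputs the sandwiched divergence equals the classical R\'enyi divergence, and taking the supremum over $\Lambda$ gives the claim.

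\emph{Upper bound.} Let $\cP_\sigma$ be the pinching in the eigenbasis of $\sigma$, which has $k=|\spec(\sigma)|$ projectors. The argument has two steps.
\begin{itemize}
\item Hayashi's pinching inequality gives $\rho\le k\,\cP_\sigma(\rho)$. For $\alpha>1$, operator monotonicity applied inside the sandwiched trace yields
\[
D_{\Sand,\alpha}(\rho\|\sigma)\le D_{\Sand,\alpha}(\cP_\sigma(\rho)\|\sigma)+\tfrac{\alpha}{\alpha-1}\log k .
\]
For $\alpha\in[1/2,1)$ one instead uses the known pinching bound $D_{\Sand,\alpha}(\rho\|\sigma)\le D_{\Sand,\alpha}(\cP_\sigma(\rho)\|\sigma)+2\log k$ (Hayashi--Tomamichel type). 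The $2\log k$ form is the uniform statement recorded in \cite[Lemma 16]{fang2024generalized}.
\item The states $\cP_\sigma(\rho)$ and $\sigma$ commute, so $D_{\Sand,\alpha}(\cP_\sigma(\rho)\|\sigma)$ is the classical R\'enyi divergence obtained by measuring in a joint eigenbasis. That basis is a POVM, so this quantity is at most $D_{\Meas,\alpha}(\rho\|\sigma)$.
\end{itemize}

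\emph{Eigenvalue count.} Schur--Weyl duality decomposes $\cH^{\ox n}=\bigoplus_{\lambda}\mathcal{U}_\lambda\ox\mathcal{V}_\lambda$, where $\lambda$ ranges over Young diagrams with at most $d$ rows; there are at most $(n+1)^d$ of them. A permutation-invariant $X$ commutes with the $S_n$ action, so it has the form $\bigoplus_\lambda X_\lambda\ox I_{\mathcal V_\lambda}$. Hence its distinct eigenvalues number at most $\sum_\lambda\dim\mathcal U_\lambda$. Each irrep of $U(d)$ has dimension at most $(n+d)^{d(d-1)/2}\le(n+d)^{d^2}$. This gives $|\spec(X)|\le(n+1)^d(n+d)^{d^2}$.

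\emph{Main obstacle.} The delicate step is the constant in the pinching bound for $\alpha<1$. There the naive estimate fails because $t\mapsto t^\alpha$ is not handled by operator-monotone domination in the right direction, and the proof relies on the cited literature result. For $\alpha$ near $1$ the $\alpha/(\alpha-1)$ form degenerates, so the uniform constant $2\log k$ is exactly what the citation supplies.
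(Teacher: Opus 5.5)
Your proposal is correct and follows essentially the same route as the paper, which proves this lemma only by combining known results (see \cite[Lemmas~16 and~17]{fang2024generalized}); your data-processing lower bound, your pinching-plus-joint-eigenbasis upper bound (with the uniform constant taken from that same source), and your Schur--Weyl eigenvalue count are exactly those ingredients. As an aside, you do not need the citation for $\alpha\in[1/2,1)\cup(1,2)$: set $A:=\sigma^{\frac{1-\alpha}{2\alpha}}\rho\,\sigma^{\frac{1-\alpha}{2\alpha}}$ and let $P_1,\ldots,P_k$ be the spectral projectors of $\sigma$; then $\cP_\sigma(A)=\sigma^{\frac{1-\alpha}{2\alpha}}\cP_\sigma(\rho)\,\sigma^{\frac{1-\alpha}{2\alpha}}$, $A=\sum_j A^{1/2}P_jA^{1/2}$, and $\tr[\cP_\sigma(A)^\alpha]=\sum_j\tr[(A^{1/2}P_jA^{1/2})^\alpha]$, so convexity (for $\alpha>1$) or concavity (for $\alpha<1$) of $X\mapsto\tr[X^\alpha]$ gives $D_{\Sand,\alpha}(\rho\|\sigma)\le D_{\Sand,\alpha}(\cP_\sigma(\rho)\|\sigma)+\log k$ uniformly in $\alpha$, which is even stronger than the $2\log k$ you needed.
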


\section{Minimax quantum channel divergences} \label{sec: minimax divergence}

This section introduces the minimax divergences used to analyze channel discrimination games. We first recall the tester-input and jammer-input channel divergences used in previous studies. We then introduce maximin and minimax divergences for dual-input channels, in which the tester and the jammer optimize over different input systems. Finally, we establish exact operational correspondences between these minimax divergences and the twelve channel-discrimination games introduced in Table~\ref{tab:finite-block-Stein-quantities}.

\subsection{Two extreme channel divergences}

\begin{definition}[Tester-input channel divergence~\cite{leditzky2018approaches}]
Let $\DD$ be a quantum divergence.
For any $\cN \in \CPTP({A}\!:\!{B})$ and $\cM \in \CP({A}\!:\!{B})$, the tester-input channel divergence is defined by
\begin{align}
    \supDiv\infdiv*{\cN}{\cM}:=
    \sup_{\rho \in \density(AR)} \DD\infdiv*{\cN(\rho)}{\cM(\rho)},
\end{align}
where ${R}$ is a reference system of arbitrary dimension.
\end{definition}

As a consequence of purification, data processing, and the Schmidt decomposition, the supremum can be restricted to be with respect to pure states and the reference system $R$ is isomorphic to $A$.

\begin{definition}[Jammer-input channel divergence~\cite{fang2025adversarial}]
\label{def:jammer-input-channel-divergences}
    Let $\DD$ be a quantum divergence.
For any $\cN \in \CPTP({E}\!:\!{B})$ and
$\cM \in \CP({E}\!:\!{B})$, the jammer-input channel
divergences are defined by
\begin{align}
    \iinfDiv\infdiv*{\cN}{\cM}
    &:= \ \inf_{\substack{\sigma,\,\omega\in\density(E)}}
   \ \DD\infdiv*{\cN(\sigma)}{\cM(\omega)},\\
    \infDiv\infdiv*{\cN}{\cM}
    &:= \ \ \ \inf_{\sigma\in\density(E)}
    \ \ \ \DD\infdiv*{\cN(\sigma)}{\cM(\sigma)}.
\end{align}
Here, double downward arrows indicate optimization over distinct variables, while a single downward arrow indicates optimization over a common variable.
\end{definition}

  In the jammer-input channel divergences, introducing an auxiliary reference system ${R}$ is unnecessary, as it does not change the value of the divergences.
  More precisely, we have
  \begin{align}
      \iinfDiv\infdiv*{\cN}{\cM} = \inf_{\substack{\sigma,\,\omega\in\density({ER})}}
  \DD\infdiv*{\cN(\sigma)}{\cM(\omega)},
  \end{align}
  where ${R}$ is a reference system of arbitrary dimension.
  The ``$\geq$'' direction is trivial, as we can choose $|R| = 1$. The ``$\leq$'' direction is a direct consequence of the data-processing inequality, by applying the partial trace over the reference system ${R}$. Similarly, we have
  \begin{align}
      \infDiv\infdiv*{\cN}{\cM} = \inf_{\sigma\in\density({ER})}
  \DD\infdiv*{\cN(\sigma)}{\cM(\sigma)},
  \end{align}
  where ${R}$ is a reference system of arbitrary dimension.

The variant
$\DD^{\downarrow\downarrow}(\cN\|\cM)$ was studied in~\cite{fang2025adversarial} whereas $\DD^{\downarrow}(\cN\|\cM)$ is introduced here. In general, the two quantities can differ, as the following example shows for the quantum relative entropy.
\begin{boxexample}\label{exam:1}
Let $\{\ket{0},\ket{1}\}$ be the computational basis. Consider two quantum channels $\cN,\cM\in \CPTP(E\!:\!B)$ with $\dim \cH_E=\dim \cH_B=2$ as follows:
\begin{align}
\cN(\rho)&:= \langle 0|\rho|0\rangle |0\rangle \langle 0|
+\langle 1|\rho|1\rangle
 \pi_2, \\
\cM(\rho)&:= \langle 0|\rho|0\rangle \pi_2
+\langle 1|\rho|1\rangle |1\rangle \langle 1|,
\end{align}
where $\pi_2$ is the maximally mixed state of dimension $2$.
In this case,
$D^{\downarrow}(\cN\|\cM )  > 
D^{\downarrow\downarrow}(\cN\|\cM ) =0.
$
\end{boxexample}

\subsection{Maximin and minimax channel divergences}

For the analysis of the discrimination games, we introduce in a unified framework the minimax and maximin divergences used throughout the paper; these are summarized in Definition~\ref{def:resource-dependent-block-divergences}. 

\begin{boxdefinition}[Maximin and minimax channel divergences]
\label{def:resource-dependent-block-divergences}
Let $\cN\in\CPTP(AE\!:\!B)$, $\cM\in\CP(AE\!:\!B)$,
$n\in\NN$. For states
$\rho_n\in\density((AR)^n)$ and
$\sigma_n,\omega_n\in\density(E^n)$, denote
\begin{align}
\DD_n(\rho_n;\sigma_n,\omega_n)
&:=
\DD\!\left(
\cN^{\ox n}(\rho_n\ox\sigma_n)
\big\|\,
\cM^{\ox n}(\rho_n\ox\omega_n)
\right).
\end{align}
The following table lists eleven different variants of channel divergences. The arrow codes follow the same pattern as in Table~\ref{tab:finite-block-Stein-quantities}. An upward arrow denotes the tester's supremum and a downward arrow the jammer's infimum. Double downward arrows indicate distinct optimization variables and a single downward arrow a common one. Solid arrows denote optimization over all density matrices $\density$, hollow arrows over tensor-power states $\sI$, and barred hollow arrows over their convex hull $\conv(\sI)$.  

\vspace{0.4cm}
For example, the first row reads
\begin{align}
\DD^{\uparrow,\downarrow}(\cN^{\ox n}\|\cM^{\ox n})
:=
\sup_{\rho_n\in\density((AR)^n)}\ \ \
\inf_{\tau_n\in\density(E^n)}\ \ \
\DD_n(\rho_n;\tau_n,\tau_n).
\end{align}

\begin{table}[H]
\centering
\normalsize
\setlength{\tabcolsep}{2pt}
\renewcommand{\arraystretch}{1}
\begin{tabularx}{\linewidth}{
@{}
>{\centering\arraybackslash}m{0.22\linewidth}
>{\centering\arraybackslash}m{0.25\linewidth}
>{\centering\arraybackslash}m{0.25\linewidth}
>{\centering\arraybackslash}X
@{}}
\toprule
Channel divergence
&
Outer optimization
&
Inner optimization
&
Objective
\\
\midrule
$\DD^{\uparrow,\downarrow}$
& $\displaystyle\sup_{\rho_n\in\density((AR)^n)}$
& $\displaystyle\inf_{\tau_n\in\density(E^n)}$
& $\DD_n(\rho_n;\tau_n,\tau_n)$
\\
$\DD^{\uparrow,\downarrow\downarrow}$
& $\displaystyle\sup_{\rho_n\in\density((AR)^n)}$
& $\displaystyle\inf_{\sigma_n,\omega_n\in\density(E^n)}$
& $\DD_n(\rho_n;\sigma_n,\omega_n)$
\\
$\DD^{\downarrow,\uparrow}$
& $\displaystyle\inf_{\tau_n\in\density(E^n)}$
& $\displaystyle\sup_{\rho_n\in\density((AR)^n)}$
& $\DD_n(\rho_n;\tau_n,\tau_n)$
\\
$\DD^{\downarrow\downarrow,\uparrow}$
& $\displaystyle\inf_{\sigma_n,\omega_n\in\density(E^n)}$
& $\displaystyle\sup_{\rho_n\in\density((AR)^n)}$
& $\DD_n(\rho_n;\sigma_n,\omega_n)$
\\
\midrule
$\DD^{\emptyuparrow,\downarrow}$
& $\displaystyle\sup_{\rho_n \in \sI((AR)^n)}$
& $\displaystyle\inf_{\tau_n\in\density(E^n)}$
& $\DD_n(\rho_n;\tau_n,\tau_n)$
\\
$\DD^{\emptyuparrow,\downarrow\downarrow}$
& $\displaystyle\sup_{\rho_n \in \sI((AR)^n)}$
& $\displaystyle\inf_{\sigma_n,\omega_n\in\density(E^n)}$
& $\DD_n(\rho_n;\sigma_n,\omega_n)$
\\
$\DD^{\downarrow,\emptyuparrow}$
& $\displaystyle\inf_{\tau_n\in\density(E^n)}$
& $\displaystyle\sup_{\rho_n \in \sI((AR)^n)}$
& $\DD_n(\rho_n;\tau_n,\tau_n)$
\\
$\DD^{\downarrow\downarrow,\emptyuparrow}$
& $\displaystyle\inf_{\sigma_n,\omega_n\in\density(E^n)}$
& $\displaystyle\sup_{\rho_n \in \sI((AR)^n)}$
& $\DD_n(\rho_n;\sigma_n,\omega_n)$
\\
\midrule
$\DD^{\emptydownarrow,\emptyuparrow}$
& $\displaystyle\inf_{\tau_n\in\sI(E^n)}$
& $\displaystyle\sup_{\rho_n \in\sI((AR)^n)}$
& $\DD_n(\rho_n;\tau_n,\tau_n)$
\\
$\DD^{\emptydownarrow\emptydownarrow,\emptyuparrow}$
& $\displaystyle\inf_{\sigma_n,\omega_n\in\sI(E^n)}$
& $\displaystyle\sup_{\rho_n \in\sI((AR)^n)}$
& $\DD_n(\rho_n;\sigma_n,\omega_n)$
\\
$\DD^{\emptyuparrow,\convdownarrow\convdownarrow}$
& $\displaystyle\sup_{\rho_n\in\sI((AR)^n)}$
& $\displaystyle\inf_{\sigma_n,\omega_n\in\conv(\sI(E^n))}$
& $\DD_n(\rho_n;\sigma_n,\omega_n)$
\\
\bottomrule
\end{tabularx}
\label{tab:resource-dependent-divergence-variants}
\end{table}
\end{boxdefinition}

\begin{remark}
\label{rem: jammer divergence sup of jammer-input}
The maximin divergences interpolate between the tester-input and jammer-input
channel divergences.  
Specifically, trivializing $E$ recovers the tester-input channel divergence, whereas trivializing $A$ recovers the corresponding jammer-input channel divergence.\footnote{
For any quantum divergence $\DD$,
$\DD(\rho\ox\omega\|\sigma\ox\omega)=\DD(\rho\|\sigma)$;
see e.g.,~\cite[Proposition~7.14]{khatri2024principlesquantumcommunicationtheory}.} Moreover, we have
\begin{align}
\DD^{\uparrow,(\star)}(\cN\|\cM)
&=
\sup_{\rho\in\density(AR)}
\DD^{(\star)}(\cN_\rho\|\cM_\rho),\qquad (\star) \in \{(\downarrow),(\downarrow\downarrow)\}
\label{eq: DM jammer superadditivity tmp1}
\end{align}
where the induced maps $\cN_\rho, \cM_\rho$ are defined in Eq.~\eqref{eq: induced channel}.
\end{remark}

\begin{definition}[Regularization]
\label{def:channel-divergence-regularization}
For any quantum channel divergence $\DD$, we define its lower and upper regularization limits by
\begin{align}
\underline{\DD}^{\infty}\infdiv*{\cN}{\cM}
&:=
\liminf_{n\to\infty}\frac1n
\DD\infdiv*{\cN^{\ox n}}{\cM^{\ox n}},
\\
\overline{\DD}^{\infty}\infdiv*{\cN}{\cM}
&:=
\limsup_{n\to\infty}\frac1n
\DD\infdiv*{\cN^{\ox n}}{\cM^{\ox n}},
\end{align}
respectively.
When the limits exist, simply write $\DD^\infty$.
\end{definition}

The following lemma ensures the finiteness of the minimax/maximin channel divergences that will be
used repeatedly throughout the paper.

\begin{boxlemma}
\label{lem:uniform-block-bounds}
Let $\cN\in\CPTP(AE\!:\!B)$ and $\cM\in\CP(AE\!:\!B)$ satisfy
$D_{\max}^{\uparrow}(\cN\|\cM)<\infty$.  Let $\DD$ be a quantum
divergence that vanishes on identical states and satisfies the normalization
and boundedness properties in Definition~\ref{def:divergence-properties}.
Choose $\kappa>1$ and define $\mu$ by
\begin{align}
D_{\max}^{\uparrow}(\cN\|\cM)
<\log\kappa,
\qquad
\mu:=\|\cM^\dagger(I_B)\|_\infty.
\label{eq:iid-general-kappa-mu}
\end{align}
Then, for every $n\in\NN$, $\rho_n\in\density((AR)^n)$, and
$\sigma_n,\omega_n,\tau_n\in\density(E^n)$,
\begin{align}
\DD_n(\rho_n;\sigma_n,\omega_n)
&\ge
-n\log\mu,
\quad\text{and}\quad
\DD_n(\rho_n;\tau_n,\tau_n)
\le
n\log\kappa.
\label{eq:uniform-block-pointwise-divergence-bounds}
\end{align}
Consequently, every arrow code $(\star)$ in
Definition~\ref{def:resource-dependent-block-divergences} satisfies
\begin{align}
-n\log\mu
\le
\DD^{(\star)}(\cN^{\ox n}\|\cM^{\ox n})
\le
n\log\kappa.
\label{eq:fixed-block-uniform-finite-bounds}
\end{align}
\end{boxlemma}

\begin{proof}
Evaluating the definition of
$D_{\max}^{\uparrow}(\cN\|\cM)$ on a normalized maximally entangled input
gives $J_{\cN}\le\kappa J_{\cM}$.  The Choi criterion therefore yields
$\cN\le\kappa\cM$ in the completely positive order.  Tensoring this
relation gives, for every $\rho_n\in\density((AR)^n)$ and
$\tau_n\in\density(E^n)$,
\begin{align}
\cN^{\ox n}(\rho_n\ox\tau_n)
&\le
\kappa^n\cM^{\ox n}(\rho_n\ox\tau_n).
\end{align}
Trace preservation of $\cN$ and the adjoint
relation further imply, for every $\omega_n\in\density(E^n)$,
\begin{align}
\kappa^{-n}
&\le
\tr\!\left[\cM^{\ox n}(\rho_n\ox\omega_n)\right]
\le
\left\|(\cM^\dagger(I_B))^{\ox n}\right\|_\infty
=
\mu^n.\label{eq: trace estimation}
\end{align}
In particular, $\mu>0$.
The two pointwise bounds now follow from
\begin{align}
\DD_n(\rho_n;\sigma_n,\omega_n)
&\ge
\DD\!\left(
1\,\middle\|\,
\tr\!\left[
\cM^{\ox n}(\rho_n\ox\omega_n)
\right]
\right)
=
-\log\tr\!\left[
\cM^{\ox n}(\rho_n\ox\omega_n)
\right]
\ge
-n\log\mu,
\\
\DD_n(\rho_n;\tau_n,\tau_n)
&\le
D_{\max}\!\left(
\cN^{\ox n}(\rho_n\ox\tau_n)
\,\middle\|\,
\cM^{\ox n}(\rho_n\ox\tau_n)
\right)
\le
n\log\kappa.
\end{align}
The first chain uses data processing under the trace map, normalization, and
the fact that $\DD$ vanishes on identical states.  The second uses boundedness
and the preceding completely positive domination.
\end{proof}

\subsection{Operational correspondence}

We show that the optimal error exponents in the twelve channel-discrimination games of Table~\ref{tab:finite-block-Stein-quantities} are exactly characterized by one of the maximin and minimax channel divergences induced by the quantum hypothesis-testing divergence. These correspondences are summarized in Theorem~\ref{thm:unified-operational-correspondence} and will be the starting point to analyze the asymptotic behavior of the games in later sections.

\begin{boxtheorem}[Operational correspondence]
\label{thm:unified-operational-correspondence}
Let $\cN\in\CPTP(AE\!:\!B)$, $\cM\in\CP(AE\!:\!B)$,
$n\in\NN$, and $\ve\in(0,1)$.  The correspondence between the finite-blocklength Stein exponent in each channel discrimination game and the maximin or minimax channel divergence is summarized below:
\begin{center}
\normalfont\normalsize
\setlength{\tabcolsep}{4pt}
\renewcommand{\arraystretch}{1.3}
\begin{tabularx}{\linewidth}{
@{}
>{\centering\arraybackslash}m{0.28\linewidth}
>{\centering\arraybackslash}m{0.15\linewidth}
>{\centering\arraybackslash}m{0.25\linewidth}
>{\centering\arraybackslash}X
@{}}
\toprule
\shortstack{Tester / jammer inputs}
&
Visibility
&
Hypothesis awareness
&
Operational identity
\\
\midrule
\multirow[c]{4}{*}{\shortstack{Entangled tester\\[0.2cm] vs. entangled jammer}}
&
Public
&
Unaware
&
$\Stein_{\ve,n}^{\downarrow,\uparrow}
  =D_{\Hypo,\ve}^{\downarrow,\uparrow}$
\\
&
Public
&
Aware
&
$\Stein_{\ve,n}^{\downarrow\downarrow,\uparrow}
  =D_{\Hypo,\ve}^{\downarrow\downarrow,\uparrow}$
\\
\noalign{\vskip0.5ex}
\cdashline{2-4}[2pt/2pt]
\noalign{\vskip0.5ex}
&
Secret
&
Unaware
&
$\Stein_{\ve,n}^{\uparrow,\downarrow}
  =D_{\Hypo,\ve}^{\uparrow,\downarrow\downarrow}$
\\
&
Secret
&
Aware
&
$\Stein_{\ve,n}^{\uparrow,\downarrow\downarrow}
  =D_{\Hypo,\ve}^{\uparrow,\downarrow\downarrow}$
\\
\midrule
\multirow[c]{4}{*}{\shortstack{IID tester \\[0.2cm] vs. entangled jammer}}
&
Public
&
Unaware
&
$\Stein_{\ve,n}^{\downarrow,\emptyuparrow}
  =D_{\Hypo,\ve}^{\downarrow,\emptyuparrow}$
\\
&
Public
&
Aware
&
$\Stein_{\ve,n}^{\downarrow\downarrow,\emptyuparrow}
  =D_{\Hypo,\ve}^{\downarrow\downarrow,\emptyuparrow}$
\\
\noalign{\vskip0.5ex}
\cdashline{2-4}[2pt/2pt]
\noalign{\vskip0.5ex}
&
Secret
&
Unaware
&
$\Stein_{\ve,n}^{\emptyuparrow,\downarrow}
  =D_{\Hypo,\ve}^{\emptyuparrow,\downarrow\downarrow}$
\\
&
Secret
&
Aware
&
$\Stein_{\ve,n}^{\emptyuparrow,\downarrow\downarrow}
  =D_{\Hypo,\ve}^{\emptyuparrow,\downarrow\downarrow}$
\\
\midrule
\multirow[c]{4}{*}{\shortstack{IID tester \\[0.2cm] vs. IID jammer}}
&
Public
&
Unaware
&
$\Stein_{\ve,n}^{\emptydownarrow,\emptyuparrow}
  =D_{\Hypo,\ve}^{\emptydownarrow,\emptyuparrow}$
\\
&
Public
&
Aware
&
$\Stein_{\ve,n}^{\emptydownarrow\emptydownarrow,\emptyuparrow}
  =D_{\Hypo,\ve}^{\emptydownarrow\emptydownarrow,\emptyuparrow}$
\\
\noalign{\vskip0.5ex}
\cdashline{2-4}[2pt/2pt]
\noalign{\vskip0.5ex}
&
Secret
&
Unaware
&
$\Stein_{\ve,n}^{\emptyuparrow,\emptydownarrow}
  =D_{\Hypo,\ve}^{\emptyuparrow,
    \convdownarrow\convdownarrow}$
\\
&
Secret
&
Aware
&
$\Stein_{\ve,n}^{\emptyuparrow,\emptydownarrow\emptydownarrow}
  =D_{\Hypo,\ve}^{\emptyuparrow,
    \convdownarrow\convdownarrow}$
\\
\bottomrule
\end{tabularx}
\end{center}
By our notational conventions, every finite blocklength Stein exponent in the above table is evaluated at $(\cN,\cM)$, and every hypothesis testing channel divergence is evaluated at
$(\cN^{\ox n}\|\cM^{\ox n})$.
\end{boxtheorem}

For the public games, the correspondence follows directly from the definitions.
Once the jammer's announced states are fixed, the remaining tester
optimization is exactly the fixed state hypothesis-testing problem.  Restoring
the input optimizations therefore give the result.

The secret games are more subtle because the tester must choose a single test
without knowing the jammer input.  This test must satisfy the type-I constraint
uniformly over all null inputs, while its type-II performance is assessed
against the worst alternative input.  From the tester's worst-case perspective,
the secret jammer's knowledge of the true hypothesis does not change the
resulting value.  Thus, within each input structure, the two secret games are
characterized by the same effective double-down-arrow divergence.  This
identification is substantive: Example~\ref{eg: DH single double down arrow}
shows that the single- and double-down-arrow hypothesis-testing divergences can
differ strictly.

For a secret IID jammer, one must additionally address the nonconvexity of the
tensor-power input set.  Because both error probabilities depend affinely on
the jammer state, replacing this set by its convex hull leaves the game
unchanged.  The resulting output families are compact and convex, so the
composite hypothesis-testing identity
of~\cite[Lemma~31]{fang2024generalized} applies.

\begin{proof}
  The proof is relatively direct. Nevertheless, identifying the divergence
associated with each operational game requires some care: the twelve games
yield only nine potentially distinct divergence expressions.
The public correspondences follow directly from the definitions, whereas the secret correspondences require additional steps, which we detail below.

\prooftag{Public jammer.} For a public jammer, fix its announced state or state pair.  For
every fixed tester input, the optimization over the test is precisely the
state hypothesis-testing error,
\begin{align}
\sup_{\substack{T\in\sE((BR)^n)\\
\alpha_T(\rho_n,\sigma_n)\le\ve}}
-\log\beta_T(\rho_n,\omega_n)
=
D_{\Hypo,\ve}\!\left(
\cN^{\ox n}(\rho_n\ox\sigma_n)
\,\|\,
\cM^{\ox n}(\rho_n\ox\omega_n)
\right).
\label{eq:public-fixed-input-hypothesis-testing-correspondence}
\end{align}
Setting $\sigma_n=\omega_n=\tau_n$ gives the hypothesis-unaware case.  Restoring the
inner tester supremum and outer jammer infimum over the domains proves all six public formulas.

\prooftag{Secret jammer.}
Fix the tester input $\rho_n$.  Let $\cK_n=\density(E^n)$ for an entangled
jammer and $\cK_n=\sI(E^n)$ for an IID jammer.  The fixed-input value
of the secret-hypothesis-unaware game can be written as
\begin{align}
V_{\ve,n}^{\rm sec}(\rho_n)
&:=
\sup_{\substack{T\in\sE((BR)^n)\\
\alpha_T(\rho_n,\tau_n)\le\ve,\
\forall\,\tau_n\in\cK_n}}
\ \inf_{\tau_n\in\cK_n}
\bigl[-\log\beta_T(\rho_n,\tau_n)\bigr]\\
&=
\sup_{\substack{T\in\sE((BR)^n)\\
\alpha_T(\rho_n,\sigma_n)\le\ve,\
\forall\,\sigma_n\in\cK_n}}
\ \inf_{\omega_n\in\cK_n}
\bigl[-\log\beta_T(\rho_n,\omega_n)\bigr].
\label{eq:secret-hypothesis-unaware-hypothesis-aware-fixed-input}
\end{align}
The variable $\tau_n$ in the uniform type-I constraint and the variable
$\tau_n$ in the type-II infimum have disjoint quantifier scopes.  Renaming
them separately as $\sigma_n$ and $\omega_n$ gives the second line, which is
exactly the fixed-input value of the secret-hypothesis-aware game.  Thus, the two secret
information patterns define the same robust composite test under the
present jammer-input error criterion.

Set $\overline{\cK}_n:=\conv(\cK_n)$.  For every fixed test, the type-I and
type-II errors are affine in the jammer state.  Hence the uniform type-I
constraint is unchanged when $\cK_n$ is replaced by
$\overline{\cK}_n$.  Moreover,
$\inf_{\omega_n}[-\log\beta_T(\rho_n,\omega_n)]
=-\log\sup_{\omega_n}\beta_T(\rho_n,\omega_n)$, so the jammer-input type-II
objective is also unchanged by convexification.  We therefore obtain
\begin{align}
V_{\ve,n}^{\rm sec}(\rho_n)
&=
\sup_{\substack{T\in\sE((BR)^n)\\
\alpha_T(\rho_n,\sigma_n)\le\ve,\
\forall\,\sigma_n\in\overline{\cK}_n}}
\ \inf_{\omega_n\in\overline{\cK}_n}
\bigl[-\log\beta_T(\rho_n,\omega_n)\bigr]\\
&=
-\log
\inf_{\substack{T\in\sE((BR)^n)\\
\alpha_T(\rho_n,\sigma_n)\le\ve,\
\forall\,\sigma_n\in\overline{\cK}_n}}
\ \sup_{\omega_n\in\overline{\cK}_n}
\beta_T(\rho_n,\omega_n)\\
&=
\inf_{\sigma_n,\omega_n\in\overline{\cK}_n}
D_{\Hypo,\ve}\!\left(
\cN^{\ox n}(\rho_n\ox\sigma_n)
\,\|\,
\cM^{\ox n}(\rho_n\ox\omega_n)
\right),
\label{eq:secret-error-composite-rewriting}
\end{align}
where the last equality follows from the composite hypothesis-testing
identity in~\cite[Definition~18 and Lemma~31]{fang2024generalized}, applied
to the sets
\begin{align}
  \left\{
\cN^{\ox n}(\rho_n\ox\sigma_n):
\sigma_n\in\overline{\cK}_n
\right\}, \qquad \left\{
\cM^{\ox n}(\rho_n\ox\omega_n):
\omega_n\in\overline{\cK}_n
\right\}.\label{eq:secret-output-families}
\end{align}
These output sets are compact and convex.  
For an entangled jammer, $\overline{\cK}_n=\density(E^n)$, which gives the
double solid-down-arrow divergence.  For an IID jammer,
$\overline{\cK}_n=\conv(\sI(E^n))$, which gives the double barred
hollow-down-arrow divergence.  Taking the tester supremum over
$\density((AR)^n)$ or $\sI((AR)^n)$, as appropriate, proves all six secret
formulas.
\end{proof}

\begin{example}
\label{eg: DH single double down arrow}
The single- and double-down-arrow hypothesis-testing divergences need not
coincide at finite blocklength.  Consider the channels in
Example~\ref{exam:1} at $n=1$ and $\ve=1/2$.  For a common jammer input
$\tau$, let $p:=\langle 0|\tau|0\rangle$.  Direct evaluation gives
\begin{align}
D_{\Hypo,1/2}\bigl(\cN(\tau)\|\cM(\tau)\bigr)
=\log\frac{2(1+p)}{p},
\end{align}
with value $+\infty$ at $p=0$.  The right-hand side is minimized at $p=1$,
so $D_{\Hypo,1/2}^{\uparrow,\downarrow}(\cN\|\cM)=2$.  By contrast, the
independent inputs $\sigma=|1\rangle\langle 1|$ and
$\omega=|0\rangle\langle 0|$ satisfy
$\cN(\sigma)=\cM(\omega)=\pi_2$, and hence their hypothesis-testing
divergence equals $1$.  Moreover, the test $T=I/2$ is feasible for every pair
of normalized states and implies $D_{\Hypo,1/2}\geq 1$, so this choice is
optimal.  Therefore,
\begin{align}
D_{\Hypo,1/2}^{\uparrow,\downarrow}(\cN\|\cM)
=2>1
=D_{\Hypo,1/2}^{\uparrow,\downarrow\downarrow}(\cN\|\cM).
\label{eq:hypothesis-testing-single-double-separation}
\end{align}
This shows why the secret-game identities in
Theorem~\ref{thm:unified-operational-correspondence} must use the double-down-arrow divergence.
\end{example}

\section{Entangled tester against entangled jammer}
\label{sec:Channel discrimination: general tester}

In this section, we consider an entangled tester against an entangled jammer.  We show that all four information patterns, regardless of the jammer’s visibility and awareness, yield the same optimal type-II error exponent for the tester.
The following theorem is the main result of this section.

\begin{boxtheorem}[Entangled tester against entangled jammer]
\label{thm:general-general-Stein}
Let $\cN\in\CPTP(AE\!:\!B)$ and $\cM\in\CP(AE\!:\!B)$ satisfy
$D_{\max}^{\uparrow}(\cN\|\cM)<\infty$. For each of the four information patterns
\begin{align}
(\star)\in
\bigl\{
(\downarrow,\uparrow),
(\downarrow\downarrow,\uparrow),
(\uparrow,\downarrow),
(\uparrow,\downarrow\downarrow)
\bigr\},
\end{align}
the asymptotic Stein exponent is given by
\begin{align}
&\lim_{\ve\to 0^+}\liminf_{n\to\infty}
\frac1n
\Stein_{\ve,n}^{(\star)}(\cN,\cM)
=
\lim_{\ve\to 0^+}\limsup_{n\to\infty}
\frac1n
\Stein_{\ve,n}^{(\star)}(\cN,\cM)
=D^{\infty,\uparrow,\downarrow\downarrow}\infdiv*{\cN}{\cM}.
\label{eq:general-general-Stein}
\end{align}
\end{boxtheorem}

The coincidence asserted above comes from the interplay of two structural
properties of the minimax channel divergences. First, the public and secret settings differ in whether the jammer's
input-selection rule is revealed to the tester. The minimax property in
Proposition~\ref{prop: jammer divergence minimax} shows that this distinction
does not affect the value, because the tester and jammer optimization orders can
be exchanged. Second, the hypothesis-aware and hypothesis-unaware settings
differ in whether the jammer may use different input states under the two
hypotheses. The asymptotic equivalence in
Proposition~\ref{prop:asymptotic_equivalence_variants} shows that, for any
hypothesis-aware input pair, a suitable convex combination can yield an almost
equivalent hypothesis-unaware performance. Together, these two properties imply
the same asymptotic exponent for all four settings, as summarized in
Remark~\ref{rem: four divergence conincidence}.

In the following, we first establish the two key ingredients in Subsection~\ref{sec: Minimax property and asymptotic equivalence}, then several other technical
preparations in Subsection~\ref{sec: Other technical preparation}, and finally prove Theorem~\ref{thm:general-general-Stein} in
Subsection~\ref{sec:proof-of-general-general-Stein}.

\subsection{Minimax property and asymptotic equivalence}
\label{sec: Minimax property and asymptotic equivalence}

\subsubsection*{Minimax identity}

Under mild assumptions on the underlying divergence $\DD$, the maximin and minimax channel
divergences coincide. Thus the discrimination game has a well-defined value, and the order of optimization may be interchanged. This underlies the operational equivalence between the public and secret games later. 

\begin{boxproposition}[Minimax property]
\label{prop: jammer divergence minimax}
Let $\DD$ be a quantum divergence satisfying direct-sum equality and lower
semicontinuity as specified in Definition~\ref{def:divergence-properties}.
For any $\cN \in \CPTP({AE}\!:\!{B})$ and $\cM \in \CP({AE}\!:\!{B})$, the following minimax identities hold
\begin{align}
    \DD^{\uparrow,\downarrow\downarrow}
    \infdiv*{\cN}{\cM}
    & = \DD^{\downarrow\downarrow,\uparrow}  \infdiv*{\cN}{\cM},\qquad
    \DD^{\uparrow,\downarrow}
    \infdiv*{\cN}{\cM}
     = \DD^{\downarrow,\uparrow}  \infdiv*{\cN}{\cM}. \label{eq: maximin equal minimax 1}
\end{align}
In particular, these hold for Umegaki relative entropy $D$ and measured relative entropy $D_{\Meas}$. They also hold for the measured R\'enyi divergence
$D_{\Meas,\alpha}$ for
$\alpha\in(0,1)\cup(1,\infty)$ and the sandwiched R\'enyi divergence
$D_{\Sand,\alpha}$ for
$\alpha\in[1/2,1)\cup(1,\infty)$ by going through the same proof via their quasi-divergences.
\end{boxproposition}

\begin{proof}
We prove the case for ``$\downarrow\downarrow$''; the other case
follows by the same argument after restricting the two hypotheses to a common jammer input.
Thus, it suffices to establish the following chain of equalities, which first removes the
reference system and then interchanges the tester and jammer optimizations:
\begin{align}
    \DD^{\uparrow,\downarrow\downarrow}
    \infdiv*{\cN}{\cM}
     :=&
    \sup_{\rho \in \density(AR)} \ \ \inf_{\sigma,\,\omega \in \density(E)} \ \  \DD\infdiv*{\cN(\rho\ox\sigma)}{\cM(\rho\ox\omega)} \label{eq: minimax 1}\\
    = & \sup_{\rho_{A} \in \density(A)} \ \  \inf_{\sigma,\,\omega \in \density(E)} \ \  \DD\infdiv*{\cN(\proj{\rho}\ox\sigma)}{\cM(\proj{\rho}\ox\omega)} \label{eq: minimax 2}\\
    = &  \inf_{\sigma,\,\omega \in \density(E)} \ \  \sup_{\rho_{A} \in \density(A)} \ \  \DD\infdiv*{\cN(\proj{\rho}\ox\sigma)}{\cM(\proj{\rho}\ox\omega)} \label{eq: minimax 3}\\
    = &  \inf_{\sigma,\,\omega \in \density(E)} \ \  \sup_{\rho \in \density(AR)}  \ \  \DD\infdiv*{\cN(\rho\ox\sigma)}{\cM(\rho\ox\omega)} \label{eq: minimax 4}\\
    = & \ \ \DD^{\downarrow\downarrow,\uparrow}  \infdiv*{\cN}{\cM},
\end{align}
where $\ket{\rho}$ is a purification of $\rho_{A}$ with $R \simeq A$. Due to the isometric relation between purifications and isometric invariance of the divergence, the choice of purification here does not change the value.

The above claim can be proved as follows.
    For any fixed $\sigma,\omega \in \density(E)$, we have
    \begin{align}
        \sup_{\rho \in \density(AR)} & \DD\infdiv*{\cN(\rho\ox\sigma)}{\cM(\rho\ox\omega)} =  \sup_{\rho \in \density(AR)} \DD\infdiv*{\cN_{\sigma}(\rho)}{\cM_{\omega}(\rho)}.
    \end{align}
    As a consequence of purification, data processing and the Schmidt decomposition, we can restrict $\rho$ to pure states, and restrict the  system ${R}$ to be isomorphic to the system ${A}$, \ie,
    \begin{align}
         \sup_{\rho \in \density(AR)} \DD\infdiv*{\cN_{\sigma}(\rho)}{\cM_{\omega}(\rho)}
         & = \sup_{\rho_{A} \in \density(A)} \DD\infdiv*{\cN_{\sigma}(\proj{\rho})}{\cM_{\omega}(\proj{\rho})}.
    \end{align}
    This implies that \eqref{eq: minimax 3} equals~\eqref{eq: minimax 4}.
    We apply the minimax theorem in~\cite[Theorem~5.2]{farkas2006potential}, with compact variable $(\sigma,\omega)\in\density(E)\times \density(E)$ and convex variable $\rho_A\in\density(A)$. The set $\density(E)\times \density(E)$ is compact and convex. For each fixed $\rho_A$, lower semicontinuity of $\DD$ makes the objective lower semicontinuous in $(\sigma,\omega)$. Direct-sum equality and data processing imply joint convexity of $\DD$, and hence joint convexity of the objective in $(\sigma,\omega)$. By~\cite[Lemma II.3]{leditzky2018approaches}, they also make the objective concave in $\rho_A$. This implies that \eqref{eq: minimax 2} equals \eqref{eq: minimax 3}. It is also clear that \eqref{eq: minimax 4} is no smaller than \eqref{eq: minimax 1}, which in turn is no smaller than \eqref{eq: minimax 2}.
    For the \Renyi divergences, the same proof goes through via their quasi-divergences.
\end{proof}

\begin{remark}\label{rem: jammer divergence sup of tester-input}
Dually to the expressions in Remark~\ref{rem: jammer divergence sup of jammer-input}, the minimax identities allow us to express the maximin channel divergence in terms of the tester-input channel divergence as follows:
\begin{align}
    \DD^{\uparrow,\downarrow\downarrow}\infdiv*{\cN}{\cM} & =
    \ \inf_{\sigma,\, \omega \in \density(E)} \ \supDiv\infdiv*{\cN_{\sigma}}{\cM_{\omega}},\\
    \DD^{\uparrow,\downarrow}\infdiv*{\cN}{\cM} & =
   \ \ \ \inf_{\sigma \in \density(E)} \ \ \ \supDiv\infdiv*{\cN_{\sigma}}{\cM_{\sigma}}.
\end{align}
\end{remark}

\subsubsection*{Asymptotic equivalence}

As shown in Example~\ref{exam:1}, minimizing over a common input or over separate inputs can yield different values. In contrast, we show that this discrepancy \emph{vanishes} asymptotically, as formalized in Proposition~\ref{prop:asymptotic_equivalence_variants}.
To this end, we first establish a simple yet useful bound relating the divergence between two convex combinations to the divergences of their constituent terms. But this bound differs from the usual joint convexity property of quantum divergences.

\begin{boxlemma}[Divergence estimate for convex mixture]\label{lem: A bound for convex combinations}
Let $\DD$ be a quantum divergence satisfying normalization, operator
dominance, and joint convexity as specified in
Definition~\ref{def:divergence-properties}.
Let $\rho_1,\sigma_1\in\density$ and $\rho_2,\sigma_2\in\PSD$. For every $\ve\in(0,1)$, the following inequality holds
\begin{align}
\DD\!\left((1-\ve)\rho_1 + \ve\sigma_1 \,\big\|\,(1-\ve)\rho_2 + \ve\sigma_2\right)
\leq (1-\ve)\DD(\rho_1\|\sigma_2) + \ve\DD(\sigma_1\|\sigma_2) - \log \ve .
\end{align}
\end{boxlemma}

\begin{proof}
We first establish the following chain of inequalities:
\begin{align}
& \DD\!\left((1-\ve)\rho_1 + \ve\sigma_1 \,\big\|\,(1-\ve)\rho_2 + \ve\sigma_2\right) \notag\\
& \hspace{2cm}\le \DD\!\left((1-\ve)\rho_1 + \ve\sigma_1 \,\big\|\, \ve\sigma_2\right) \\
&\hspace{2cm}= \DD\!\left((1-\ve)\rho_1 + \ve\sigma_1 \,\big\|\, \sigma_2\right)-\log \ve \\
&\hspace{2cm}\le (1-\ve)\DD(\rho_1\|\sigma_2)
    + \ve\DD(\sigma_1\|\sigma_2)-\log \ve .
\end{align}
The first inequality follows from $\ve \sigma_2 \le
(1-\ve)\rho_2+\ve\sigma_2$ and operator dominance. The equality uses
normalization, while the final inequality follows from convexity in the first
argument. 
\end{proof}

The following result shows that allowing the jammer to choose different inputs under the two hypotheses does not change the asymptotic maximin divergence. The key idea is to mix the two hypothesis-dependent inputs into one common input, incurring only a vanishing penalty after regularization.

\begin{boxproposition}[Asymptotic equivalence]
\label{prop:asymptotic_equivalence_variants}
Let $\DD$ be a quantum divergence satisfying normalization, operator
dominance, boundedness, and joint convexity as specified in
Definition~\ref{def:divergence-properties}.
Let $\cN \in \CPTP(AE\!:\!B)$ and $\cM \in \CP(AE\!:\!B)$. Then, for any $\ve\in(0,1)$,
\begin{align}\label{eq: equivalence one and two down arrows}
    \DD^{\uparrow,\downarrow\downarrow}(\cN\|\cM)
    \leq
    \DD^{\uparrow,\downarrow}(\cN\|\cM)
    \leq
    (1-\ve)\,\DD^{\uparrow,\downarrow\downarrow}(\cN\|\cM)
    + \ve\, \DD^{\uparrow}(\cN\|\cM)
    - \log \ve .
\end{align}
In particular, suppose that $D_{\max}^{\uparrow}(\cN\|\cM)<\infty$ and that
$\DD^{\infty,\uparrow,\downarrow\downarrow}(\cN\|\cM)$ exists.  Then
\begin{align}
    \DD^{\infty,\uparrow,\downarrow\downarrow}(\cN\|\cM)
    =
    \DD^{\infty,\uparrow,\downarrow}(\cN\|\cM).
    \label{eq:asymptotic-equivalence-same-separate-jammer}
\end{align}
\end{boxproposition}

\begin{proof}
    The first inequality in Eq.~\eqref{eq: equivalence one and two down arrows} is clear from the definitions. We prove the second one. For any $\ve\in (0,1)$, $\sigma,\omega \in \density(E)$, $\cE \in \CPTP(E\!:\!B)$ and $\cF \in \CP(E\!:\!B)$,
\begin{align}
   \DD^{\downarrow}(\cE\|\cF) & \leq \DD(\cE((1-\ve)\sigma + \ve \omega)\|\cF((1-\ve)\sigma + \ve \omega))\\
    & =  \DD((1-\ve) \cE(\sigma) + \ve \cE(\omega)\| (1-\ve) \cF(\sigma) + \ve \cF(\omega))\\
    & \leq (1-\ve) \DD(\cE(\sigma)\|\cF(\omega)) + \ve \DD(\cE(\omega)\|\cF(\omega)) - \log \ve\\
    & \leq (1-\ve) \DD(\cE(\sigma)\|\cF(\omega)) + \ve \DD^{\uparrow}(\cE\|\cF) - \log \ve, \label{eq: tmp relation}
\end{align}
where the second inequality follows from Lemma~\ref{lem: A bound for convex combinations} and the last inequality follows from the definition of $\DD^{\uparrow}(\cE\|\cF)$.
Taking infimum over $\sigma$ and $\omega$, we have
\begin{align}
    \DD^{\downarrow}(\cE\|\cF) & \leq (1-\ve) \DD^{\downarrow\downarrow}(\cE\|\cF) + \ve \DD^{\uparrow}(\cE\|\cF) - \log \ve.\label{eq: down arrow relation tmp 1}
\end{align}
Then for any $\cN \in \CPTP(AE\!:\!B)$ and $\cM \in \CP(AE\!:\!B)$, taking supremum over induced channels $\cE = \cN_\rho$ and $\cF = \cM_\rho$ for $\rho \in \density(AR)$, we have
\begin{align}
    \DD^{\uparrow,\downarrow}(\cN\|\cM) & = \sup_{\rho \in \density(AR)} \DD^{\downarrow}(\cN_\rho\|\cM_\rho)\\
    & \leq \sup_{\rho \in \density(AR)} (1-\ve) \DD^{\downarrow\downarrow}(\cN_\rho\|\cM_\rho) + \ve \DD^{\uparrow}(\cN_\rho\|\cM_\rho) - \log \ve\\
    & \leq (1-\ve) \DD^{\uparrow,\downarrow\downarrow}(\cN\|\cM) + \ve \DD^{\uparrow}(\cN\|\cM) - \log \ve,
\end{align}
The first line follows from
Remark~\ref{rem: jammer divergence sup of jammer-input}, and the second follows
from Eq.~\eqref{eq: down arrow relation tmp 1}.  The last line uses
$\DD^{\uparrow}(\cN_\rho\|\cM_\rho)
\leq\DD^{\uparrow}(\cN\|\cM)$ for every $\rho$.  This proves
Eq.~\eqref{eq: equivalence one and two down arrows}.

It remains to establish
Eq.~\eqref{eq:asymptotic-equivalence-same-separate-jammer}.  The boundedness
property of $\DD$ and the additivity of the max-relative entropy under tensor
products of channels~\cite[Lemma~12]{wilde2020amortized} give, for every
$n\in\NN$,
\begin{align}
\DD^{\uparrow}(\cN^{\ox n}\|\cM^{\ox n})
\leq D_{\max}^{\uparrow}(\cN^{\ox n}\|\cM^{\ox n})
=nD_{\max}^{\uparrow}(\cN\|\cM).
\end{align}
Consequently,
\begin{align}
\overline{\DD}^{\infty,\uparrow}(\cN\|\cM)
\leq D_{\max}^{\uparrow}(\cN\|\cM)<\infty.
\end{align}

Applying Eq.~\eqref{eq: equivalence one and two down arrows} to
$\cN^{\ox n}$ and $\cM^{\ox n}$ with any fixed $\delta\in(0,1)$, and then
dividing by $n$, yields
\begin{align}
\frac{1}{n}\DD^{\uparrow,\downarrow\downarrow}
(\cN^{\ox n}\|\cM^{\ox n})
&\leq
\frac{1}{n}\DD^{\uparrow,\downarrow}
(\cN^{\ox n}\|\cM^{\ox n}) \\
&\leq
\frac{1-\delta}{n}\DD^{\uparrow,\downarrow\downarrow}
(\cN^{\ox n}\|\cM^{\ox n})
+\frac{\delta}{n}\DD^{\uparrow}
(\cN^{\ox n}\|\cM^{\ox n})
-\frac{\log\delta}{n}.
\end{align}
Taking the limit inferior in the first inequality and the limit superior in
the second gives
\begin{align}
\DD^{\infty,\uparrow,\downarrow\downarrow}(\cN\|\cM)
\leq \underline{\DD}^{\infty,\uparrow,\downarrow}(\cN\|\cM)
& \leq \overline{\DD}^{\infty,\uparrow,\downarrow}(\cN\|\cM)\notag\\
& \leq (1-\delta)\DD^{\infty,\uparrow,\downarrow\downarrow}(\cN\|\cM)
+\delta\,\overline{\DD}^{\infty,\uparrow}(\cN\|\cM).
\end{align}
Because the last term is finite, letting $\delta\to0^+$ yields
\begin{align}
\DD^{\infty,\uparrow,\downarrow\downarrow}(\cN\|\cM)
\leq \underline{\DD}^{\infty,\uparrow,\downarrow}(\cN\|\cM)
\leq \overline{\DD}^{\infty,\uparrow,\downarrow}(\cN\|\cM)
\leq \DD^{\infty,\uparrow,\downarrow\downarrow}(\cN\|\cM).
\end{align}
This completes the proof.
\end{proof}

\begin{boxremark}
\label{rem: four divergence conincidence}
Under the assumptions of Propositions~\ref{prop: jammer divergence minimax} and~\ref{prop:asymptotic_equivalence_variants}, the four regularized minimax and maximin channel
divergences coincide:
\begin{align}
    \DD^{\infty,\downarrow,\uparrow}\infdiv*{\cN}{\cM}
    = \DD^{\infty,\uparrow,\downarrow}\infdiv*{\cN}{\cM}
    = \DD^{\infty,\downarrow\downarrow,\uparrow}\infdiv*{\cN}{\cM}
    = \DD^{\infty,\uparrow,\downarrow\downarrow}\infdiv*{\cN}{\cM}.
\end{align}
\end{boxremark}

\subsection{Other technical preparations}
\label{sec: Other technical preparation}

Beyond the above two key ingredients, the proof of the main theorem requires several
additional technical lemmas that extend a few known results to the minimax/maximin setting
considered here.

\subsubsection*{Continuity in the R\'enyi order}

The following lemma proves the continuity of the maximin sandwiched and measured R\'enyi divergences at order one. This extends a few properties in~\cite[Lemma 33]{ding2023bounding},~\cite[Theorem 68]{mosonyi2023some} and~\cite[Lemma 22]{fang2024generalized} to the maximin setting.

\begin{boxlemma}[Continuity at order one]
\label{lem: DM alpha jam continuity}
    For any $\cN \in \CPTP(AE\!:\!B)$ and $\cM \in \CP(AE\!:\!B)$,
    \begin{align}
    \sup_{\alpha \in (1/2, 1)} D_{\Sand,\alpha}^{\uparrow, (\star)}(\cN\|\cM) & = \inf_{\alpha > 1} D_{\Sand,\alpha}^{\uparrow,(\star)}(\cN\|\cM) =   D^{\uparrow,(\star)}(\cN\|\cM), \quad (\star) \in \{(\downarrow), (\downarrow\downarrow)\},\label{eq: sand ddu app}\\
    \sup_{\alpha \in (1/2, 1)} D_{\Meas,\alpha}^{\uparrow, (\star)}(\cN\|\cM) & = \inf_{\alpha > 1} D_{\Meas,\alpha}^{\uparrow,(\star)}(\cN\|\cM) =   D_{\Meas}^{\uparrow,(\star)}(\cN\|\cM),  \quad (\star) \in \{(\downarrow), (\downarrow\downarrow)\}.
    \label{eq: DM alpha cont udd}
    \end{align}
\end{boxlemma}
\begin{proof}
    We prove~\eqref{eq: sand ddu app} for ``$\downarrow\downarrow$'', and the proof for ``$\downarrow$'' is similar. For the infimum case, we have
    \begin{align}
        \inf_{\alpha> 1} D_{\Sand,\alpha}^{\uparrow, \downarrow\downarrow}(\cN\|\cM)
        & = \inf_{\alpha > 1} \ \ \inf_{\sigma,\, \omega \in \density(E)} D_{\Sand,\alpha}^{\uparrow}\infdiv*{\cN_{\sigma}}{\cM_{\omega}}\\
        & =  \inf_{\sigma,\, \omega \in \density(E)}\ \ \inf_{\alpha > 1} \, D_{\Sand,\alpha}^{\uparrow}\infdiv*{\cN_{\sigma}}{\cM_{\omega}}\\
        & = \inf_{\sigma,\, \omega \in \density(E)}\ \ D^{\uparrow}\infdiv*{\cN_{\sigma}}{\cM_{\omega}}\\
        & = D^{\uparrow,\downarrow\downarrow}\infdiv*{\cN}{\cM},
    \end{align}
    where the first and last lines follow from Remark~\ref{rem: jammer divergence sup of tester-input}, the second line exchanges the order of infima and the third line follows from the continuity of tester-input channel divergence in~\cite[Lemma 33]{ding2023bounding} and~\cite[Theorem 68]{mosonyi2023some}. For the supremum case, we have
    \begin{align}
        \sup_{\alpha\in (1/2, 1)} D_{\Sand,\alpha}^{\uparrow, \downarrow\downarrow}(\cN\|\cM) & = \sup_{\alpha \in (1/2, 1)} \ \sup_{\rho \in \density(AR)} D_{\Sand,\alpha}^{\downarrow\downarrow}\infdiv*{\cN_{\rho}}{\cM_{\rho}}\\
        & = \sup_{\rho \in \density(AR)}\ \sup_{\alpha \in (1/2, 1)}  D_{\Sand,\alpha}^{\downarrow\downarrow}\infdiv*{\cN_{\rho}}{\cM_{\rho}}\\
        & = \sup_{\rho \in \density(AR)} \ D^{\downarrow\downarrow}\infdiv*{\cN_{\rho}}{\cM_{\rho}}\\
        & = D^{\uparrow,\downarrow\downarrow}\infdiv*{\cN}{\cM},
    \end{align}
    where the first and last lines follow from Remark~\ref{rem: jammer divergence sup of jammer-input}, the second line exchanges the order of suprema and the third line follows from the same argument of~\cite[Lemma 22]{fang2024generalized} for two channel image sets and sandwiched \Renyi divergence.
    The measured-\Renyi identity follows by the same argument.
\end{proof}

\subsubsection*{Superadditivity and existence of regularized rates}

Additivity properties of state divergences often follow directly from their definitions, and channel extensions typically inherit analogous tensor-product behavior, depending on the direction of optimization. For example, tester-input channel divergences are usually superadditive, whereas jammer-input channel divergences are typically subadditive.
In contrast, additivity for the minimax/maximin divergences is not immediate, since the tester and jammer optimize in opposite directions. Nevertheless, the following lemma shows that superadditivity holds for the maximin divergence induced by the measured \Renyi divergence. This extends~\cite[Lemma 21]{fang2024generalized} to the maximin setting.

\begin{boxlemma}[Superadditivity]
\label{lem:superadditive:mrDiv}
Let $\alpha \in (0,+\infty)$.
For any $\cN_1,\cN_2\in \CPTP({A}{E}\!:\!{B})$ and $\cM_1,\cM_2 \in \CP({A}{E}\!:\!{B})$, the following superadditivity relation holds:
\begin{align}\label{eq:superadditive:mrDiv}
\begin{aligned}
    \mrDiv{\alpha}^{\uparrow,\downarrow\downarrow}\infdiv*{\cN_1\ox \cN_2}{\cM_1\ox \cM_2}
    \geq
    \mrDiv{\alpha}^{\uparrow,\downarrow\downarrow}\infdiv*{\cN_1}{\cM_1} + \mrDiv{\alpha}^{\uparrow,\downarrow\downarrow}\infdiv*{\cN_2}{\cM_2}.
\end{aligned}\end{align}
Consequently, the regularized channel divergence always exists and can be expressed as:
    \begin{align}
    \mrDiv{\alpha}^{\infty,\uparrow,\downarrow\downarrow}\infdiv*{\cN}{\cM} := \lim_{n\to \infty} \frac{1}{n} \mrDiv{\alpha}^{\uparrow,\downarrow\downarrow}\infdiv*{{\cN}^{\ox n}}{{\cM}^{\ox n}}
    = \sup_{n\in\NN} \frac{1}{n} \mrDiv{\alpha}^{\uparrow,\downarrow\downarrow}\infdiv*{{\cN}^{\ox n}}{{\cM}^{\ox n}},
    \end{align}
    where the second equality follows from Fekete's lemma.
\end{boxlemma}

\begin{proof}
Write $\rho_{12}\in\density(A_1A_2R_1R_2)$ and, for $i=1,2$,
$\rho_i\in\density(A_iR_i)$.  The proof follows by
\begin{align}
  \text{LHS of ~\eqref{eq:superadditive:mrDiv}}
    & = \sup_{\rho_{12}}
    \iinfmrDiv{\alpha}\infdiv*{
    (\cN_1\otimes\cN_2)_{\rho_{12}}}
    {(\cM_1\otimes\cM_2)_{\rho_{12}}}
    \label{eq:superadditive:mrDiv:1}\\
    \label{eq:superadditive:mrDiv:2}
    & \geq \sup_{\rho_1,\rho_2}
    \iinfmrDiv{\alpha}\infdiv*{
    (\cN_1\otimes\cN_2)_{\rho_1\ox\rho_2}}
    {(\cM_1\otimes\cM_2)_{\rho_1\ox\rho_2}}
    \\
    \label{eq:superadditive:mrDiv:3}
    & = \sup_{\rho_1,\rho_2}
    \iinfmrDiv{\alpha}\infdiv*{
    (\cN_1)_{\rho_1}\otimes(\cN_2)_{\rho_2}}
    {(\cM_1)_{\rho_1}\otimes(\cM_2)_{\rho_2}}
    \\
    \label{eq:superadditive:mrDiv:4}
    & \geq \sup_{\rho_1,\rho_2}
    \left[
    \iinfmrDiv{\alpha}\infdiv*{(\cN_1)_{\rho_1}}{(\cM_1)_{\rho_1}}
    + \iinfmrDiv{\alpha}\infdiv*{(\cN_2)_{\rho_2}}{(\cM_2)_{\rho_2}}
    \right]
    \\
    \label{eq:superadditive:mrDiv:5}
    & = \text{RHS of~\eqref{eq:superadditive:mrDiv}},
\end{align}
where Eqs.~\eqref{eq:superadditive:mrDiv:1} and~\eqref{eq:superadditive:mrDiv:5} follow from Remark~\ref{rem: jammer divergence sup of jammer-input} and Proposition~\ref{prop: jammer divergence minimax}; Eq.~\eqref{eq:superadditive:mrDiv:2} is a simple restriction on the feasible set; Eq.~\eqref{eq:superadditive:mrDiv:3} follows directly from the definition of the induced channels; and Eq.~\eqref{eq:superadditive:mrDiv:4} follows by the superadditivity of the jammer-input measured \Renyi channel divergence (\ie, by applying twice the chain rule for the measured \Renyi channel divergence~\cite[Lemma~2]{fang2025adversarial}).
\end{proof}

\begin{remark}
It is worth noting that the superadditivity statement in Lemma~\ref{lem:superadditive:mrDiv} does not extend to the variant $\mrDiv{\alpha}^{\uparrow,\downarrow}$ that minimizes over a common input.
Indeed, superadditivity can already fail for the underlying jammer-input channel divergence; see the counterexample in~\cite[Supplemental Material, Section~C]{fang2025adversarial}.
This highlights the subtle distinction between the ``$\downarrow$'' and ``$\downarrow\downarrow$'' formulations.
\end{remark}

\subsubsection*{Permutation-symmetry reduction}

We now establish a symmetry property of the maximin channel divergence. Specifically, the optimizations defining the 
$n$-fold maximin channel divergences can be restricted to permutation invariant states without changing their values. This result extends~\cite[Proposition II.4]{leditzky2018approaches} to the maximin setting.

\begin{boxlemma}[Symmetry reduction]\label{lem:mimmaxDiv:perm}
Let $\DD$ be a quantum divergence satisfying the minimax property in Proposition~\ref{prop: jammer divergence minimax}. 
For any $\cN \in \CPTP({AE}\!:\!{B})$, $\cM \in \CP({AE}\!:\!{B})$, and integer $n\geq 2$, 
\begin{align}
\DD^{\uparrow,\downarrow\downarrow} \infdiv*{\cN^{\ox n}}{\cM^{\ox n}} 
= \adjustlimits \inf_{\substack{\sigma,\,\omega \in \PER(E^n)}} \sup_{\rho \in \PER(A^n)}\ \DD\infdiv*{\cN^{\ox n}(\proj{\rho}\ox\sigma)}{\cM^{\ox n}(\proj{\rho}\ox\omega)}, \label{eq: minimax symmetry reduction 3}
\end{align}
where $\PER(A^n)$ and $\PER(E^n)$ denote the set of permutation-invariant states. Here $|\rho\rangle$ is a purification of $\rho$ on $R^nA^n$, with $R\simeq A$. Similarly, the result also holds for $\DD^{\uparrow,\downarrow}$ by restricting to $\omega = \sigma$.
\end{boxlemma}

\begin{proof}
  We aim to show that
  \begin{align}
\DD^{\uparrow,\downarrow\downarrow} \infdiv*{\cN^{\ox n}}{\cM^{\ox n}} 
&= \adjustlimits \inf_{\substack{\sigma,\,\omega \in \PER(E^n)}} \sup_{\rho \in \density(A^n)}\ f_n(\sigma, \omega|\rho)= \adjustlimits \inf_{\substack{\sigma,\,\omega \in \PER(E^n)}} \sup_{\rho \in \PER(A^n)}\ f_n(\sigma, \omega|\rho), \label{eq: minimax symmetry reduction 2}
\end{align}
where the objective function is defined by
\begin{align}
    f_n(\sigma, \omega|\rho) := \DD\infdiv*{\cN^{\ox n}(\proj{\rho}\ox\sigma)}{\cM^{\ox n}(\proj{\rho}\ox\omega)}.
\end{align}
    The key is to prove the following expression, as a function of $\sigma$ and $\omega$,
    \begin{align}\label{eq: inner sup}
    f_n(\sigma, \omega) := \sup_{\rho \in \density(A^n)} f_n(\sigma, \omega|\rho)
    \end{align}
    to be permutation invariant.
    For each permutation $\pi$ in the permutation group $\set{S}_n$, we define the unitary transformation $\pi_{A^n}\in\LinOp(A^n)$ as 
    \begin{align}
    \pi_{A^n}: \ket{\psi_1}_{{A}_1}\ket{\psi_2}_{{A}_2}\cdots\ket{\psi_n}_{{A}_n}
    \mapsto \ket{\psi_{\pi(1)}}_{{A}_1}\ket{\psi_{\pi(2)}}_{{A}_2}\cdots\ket{\psi_{\pi(n)}}_{{A}_n}.
    \end{align}
    We also abuse the notation a bit, and use $\pi_{A^n}$ to denote the CPTP $\pi_{A^n}: \rho \mapsto \pi_{A^n} \rho \pi_{A^n}^\herm$ as well.
    By carrying over the permutation of the output systems of the tensor of quantum channel to its input system (\cf~\cite[Eq.~(2)]{boche2018fully}~and~\cite[Theorem~3.3]{belzig2024fully}), we have 
    \begin{align}
        \pi_{B^n}\left(\cN^{\ox n}(\proj{\rho}\ox\sigma)\right)
        &= \cN^{\ox n}\left(\pi_{A^nE^n}(\proj{\rho}\ox\sigma)\right) \\
        &= \cN^{\ox n}\left(\pi_{A^n}(\proj{\rho})\ox\pi_{E^n}(\sigma)\right)\\
        &= \cN^{\ox n}\left(\proj{\pi_{A^n}\rho}_{A^nR^n}\ox\pi_{E^n}(\sigma)\right).
    \end{align}
    Similarly, the above equalities also hold for the channel $\cM$.
    Since, quantum divergences are invariant under unitary transformations, we have 
    \begin{align}
         f_n(\sigma, \omega|\rho) 
        &= \DD\infdiv*{\pi_{B^n}\left(\cN^{\ox n}(\proj{\rho}\ox\sigma)\right)}{\pi_{B^n}\left(\cM^{\ox n}(\proj{\rho}\ox\omega)\right)} \\
        &=\DD\infdiv*{\cN^{\ox n}\left(\proj{\pi_{A^n}\rho}\!\ox\!\pi_{E^n}(\sigma)\right)}{\cM^{\ox n}\left(\proj{\pi_{A^n}\rho}\!\ox\!\pi_{E^n}(\omega)\right)}\\
        &= f_n(\pi_{E^n}(\sigma), \pi_{E^n}(\omega)|\pi_{A^n}(\rho)) .
    \end{align}
    Note that the set of all density operators on systems $A^n$ is permutation invariant.
    Hence,
    \begin{align}
        f_n(\pi_{E^n}(\sigma), \pi_{E^n}(\omega))
        &= \sup_{\rho \in \density(A^n)} f_n(\pi_{E^n}(\sigma), \pi_{E^n}(\omega)|\rho)\\
        &= \sup_{\rho \in \density(A^n)} f_n(\pi_{E^n}(\sigma), \pi_{E^n}(\omega)|\pi_{A^n}(\rho)) \\
        &= \sup_{\rho \in \density(A^n)} f_n(\sigma, \omega|\rho),
    \end{align}
    \ie, $f_n$ in Eq.~\eqref{eq: inner sup} is permutation invariant.

  By the unitary invariance and the joint convexity of $\DD$, $f_n(\sigma,\omega|\rho)$ is also unitarily invariant and joint convex on $(\sigma,\omega)$. Therefore, for any optimizing pair $(\sigma^{*}, \omega^{*})$, we can replace them by the average over all of its permutations, \ie,
    \begin{align}
    (\sigma^\new, \omega^\new )
    & := \left(\sum_{\pi\in\set{S}_n} \size{\set{S}_n}^{-1}\cdot\pi_{E^n}(\sigma^{*}), \sum_{\pi\in\set{S}_n} \size{\set{S}_n}^{-1}\cdot\pi_{E^n}(\omega^{*})\right)
    \end{align}
    which is a pair of permutation invariant states; hence, finishing the proof of the first equality in Eq.~\eqref{eq: minimax symmetry reduction 2}. Since the induced channels $\cN^{\ox n}(\cdot\ox\sigma)$ and $\cM^{\ox n}(\cdot\ox\omega)$ are permutation invariant for $\sigma,\,\omega \in \PER(E^n)$, we have the second equality in Eq.~\eqref{eq: minimax symmetry reduction 2} by applying the known permutation reduction for the tester-input divergence in~\cite[Proposition II.4]{leditzky2018approaches}.
\end{proof}

\subsubsection*{Measured-to-quantum asymptotic equivalence}

We next establish an asymptotic equivalence result: after regularization, the maximin channel divergence induced by the measured divergence coincides with that induced by its fully quantum counterpart. This extends~\cite[Lemma 28]{fang2024generalized} to the maximin setting.

\begin{boxlemma}\label{lem: DM D jam infty}
For any $\alpha \in [1/2,\infty)$, $\cN\in\CPTP({AE}\!:\!{B})$ and $\cM\in\CP({AE}\!:\!{B})$,
    \begin{align}\label{eq: DM D jam infty}
        D_{\Meas,\alpha}^{\infty,\uparrow,\downarrow\downarrow}\infdiv*{\cN}{\cM} = D_{\Sand,\alpha}^{\infty,\uparrow,\downarrow\downarrow}\infdiv*{\cN}{\cM}.
    \end{align}
Moreover, the following holds
    \begin{align}
        \sup_{\alpha \in (1/2,1)} \mrDiv{\alpha}^{\infty,\uparrow,\downarrow\downarrow}\infdiv*{\cN}{\cM} = \mDiv^{\infty,\uparrow,\downarrow\downarrow}\infdiv*{\cN}{\cM} = \uDiv^{\infty,\uparrow,\downarrow\downarrow}\infdiv*{\cN}{\cM}.\label{eq: DM alpha cont udd reg}
    \end{align}
\end{boxlemma}

\begin{proof}
For every $n$, we first establish
\begin{align}
\frac{1}{n}
D_{\Meas,\alpha}^{\uparrow,\downarrow\downarrow}
\infdiv*{\cN^{\ox n}}{\cM^{\ox n}}
&\le\frac{1}{n}
D_{\Sand,\alpha}^{\uparrow,\downarrow\downarrow}
\infdiv*{\cN^{\ox n}}{\cM^{\ox n}} \le\frac{1}{n}
D_{\Meas,\alpha}^{\uparrow,\downarrow\downarrow}
\infdiv*{\cN^{\ox n}}{\cM^{\ox n}}+\frac{1}{n}f(n),
\end{align}
where $f(n) = 2\log[(n+1)^d(n+d)^{d^2}]$ and $d = \dim(\cH_{BR})$.
The first inequality follows from the ordering between the measured and sandwiched
R\'enyi divergences. For the second inequality, Lemma~\ref{lem:mimmaxDiv:perm}
restricts both optimizations to permutation-invariant variables, after which
Lemma~\ref{Lemma: DM and Sandwiched relation} provides the correction $f(n)$.
Since $f(n)/n\to0$, taking the limit $n\to\infty$ yields
Eq.~\eqref{eq: DM D jam infty}; the existence of the limit follows from
Lemma~\ref{lem:superadditive:mrDiv}.

Next, we prove Eq.~\eqref{eq: DM alpha cont udd reg} by first writing
\begin{align}
    \sup_{\alpha \in (1/2,1)} \mrDiv{\alpha}^{\infty,\uparrow,\downarrow\downarrow}\infdiv*{\cN}{\cM}
    &= \sup_{\alpha \in (1/2,1)} \sup_{n\in\NN} \frac{1}{n} \mrDiv{\alpha}^{\uparrow,\downarrow\downarrow}\infdiv*{\cN^{\ox n}}{\cM^{\ox n}} \\
    &= \sup_{n\in\NN} \ \sup_{\alpha \in (1/2,1)} \frac{1}{n} \mrDiv{\alpha}^{\uparrow,\downarrow\downarrow}\infdiv*{\cN^{\ox n}}{\cM^{\ox n}} \\
    &= \sup_{n\in\NN} \ \frac{1}{n} \mDiv^{\uparrow,\downarrow\downarrow}\infdiv*{\cN^{\ox n}}{\cM^{\ox n}} \\
    &= \mDiv^{\infty,\uparrow,\downarrow\downarrow}\infdiv*{\cN}{\cM}.
\end{align}
The first equality follows from Lemma~\ref{lem:superadditive:mrDiv}. The second
equality exchanges the two suprema. The third equality uses the continuity
established in Lemma~\ref{lem: DM alpha jam continuity}, and the final equality
is the definition of the regularized measured divergence.
This completes the proof.
\end{proof}

\subsection{Proof of Theorem~\ref{thm:general-general-Stein}}
\label{sec:proof-of-general-general-Stein}

We are now ready to prove Theorem~\ref{thm:general-general-Stein}.

\begin{proof}[Proof of Theorem~\ref{thm:general-general-Stein}]
For every discrimination setting 
\begin{align}
(\star)\in
\bigl\{
(\downarrow,\uparrow),
(\downarrow\downarrow,\uparrow),
(\uparrow,\downarrow),
(\uparrow,\downarrow\downarrow)
\bigr\},
\end{align}
Theorem~\ref{thm:unified-operational-correspondence} and the minimax inequality give
\begin{align}
\hDiv{\ve}^{\uparrow,\downarrow\downarrow}
\infdiv*{\cN^{\ox n}}{\cM^{\ox n}}
&\le
\Stein_{\ve,n}^{(\star)}(\cN,\cM)
\le
\hDiv{\ve}^{\downarrow,\uparrow}
\infdiv*{\cN^{\ox n}}{\cM^{\ox n}}.
\label{eq:AEP-sandwich-all-variants-revised}
\end{align}
It therefore suffices to prove the achievability bound for
$\hDiv{\ve}^{\uparrow,\downarrow\downarrow}$ and the converse bound for
$\hDiv{\ve}^{\downarrow,\uparrow}$. The result for every operational setting
then follows by sandwiching.

\paragraph{Achievability bound.}
For every $\alpha\in(1/2,1)$, the one-shot bound in
Lemma~\ref{lem:one-shot-hypothesis-testing-bounds} and
$D_{\Meas,\alpha}\leq D_{\Sand,\alpha}$ give
\begin{align}
\hDiv{\ve}^{\uparrow,\downarrow\downarrow}
\infdiv*{\cN^{\ox n}}{\cM^{\ox n}}
\ge
D_{\Meas,\alpha}^{\uparrow,\downarrow\downarrow}
\infdiv*{\cN^{\ox n}}{\cM^{\ox n}}
+
\frac{\alpha}{\alpha-1}\log\frac1\ve.
\end{align}
Dividing by $n$ and taking the limit inferior yields
\begin{align}
\liminf_{n\to\infty}
\frac1n
\hDiv{\ve}^{\uparrow,\downarrow\downarrow}
\infdiv*{\cN^{\ox n}}{\cM^{\ox n}}
\ge
D_{\Meas,\alpha}^{\infty,\uparrow,\downarrow\downarrow}
(\cN\|\cM).
\end{align}
Taking the supremum over $\alpha\in(1/2,1)$ and using
Eq.~\eqref{eq: DM alpha cont udd reg} give
\begin{align}
\liminf_{n\to\infty}
\frac1n
\hDiv{\ve}^{\uparrow,\downarrow\downarrow}
\infdiv*{\cN^{\ox n}}{\cM^{\ox n}}
\ge
D^{\infty,\uparrow,\downarrow\downarrow}\infdiv*{\cN}{\cM}.
\label{eq:AEP-lower-revised}
\end{align}

\paragraph{Converse bound.}
Let $\mu$ be as in
Lemma~\ref{lem:uniform-block-bounds}.  The trace estimate in Eq.~\eqref{eq: trace estimation} and
Lemma~\ref{lem:one-shot-hypothesis-testing-bounds} give
\begin{align}
\hDiv{\ve}^{\downarrow,\uparrow}
\infdiv*{\cN^{\ox n}}{\cM^{\ox n}}
\le
\frac{
1}
{1-\ve}\Big[D^{\downarrow,\uparrow}
\infdiv*{\cN^{\ox n}}{\cM^{\ox n}}
+h(\ve)+\ve n\log\mu\Big].
\end{align}
Lemma~\ref{lem: DM D jam infty} at $\alpha=1$ gives the existence of
$D^{\infty,\uparrow,\downarrow\downarrow}$.  Thus
Proposition~\ref{prop:asymptotic_equivalence_variants} applies.  After division
by $n$ and taking the limit superior, 
\begin{align}
\limsup_{n\to\infty}
\frac1n
\hDiv{\ve}^{\downarrow,\uparrow}
\infdiv*{\cN^{\ox n}}{\cM^{\ox n}}
\le
\frac{1}
{1-\ve}\Big[
D^{\infty,\uparrow,\downarrow\downarrow}\infdiv*{\cN}{\cM}+\ve\log\mu\Big].
\label{eq:AEP-upper-revised}
\end{align}
Letting $\ve\to0^+$ gives the desired converse bound. 
\end{proof}

\section{IID tester against entangled jammer}
\label{sec:iid-tester-unrestricted-jammer}

This section considers an IID tester against an entangled jammer. As in the
previous section, all four information patterns have the same asymptotic Stein
exponent. However, because the tester's IID input set is nonconvex, the minimax
argument used in the previous section does not apply. The proof of the main
result therefore requires a different approach.

\begin{boxtheorem}[IID tester against entangled jammer]
\label{thm:iid-tester-general-jammer-Stein}
Let $\cN\in\CPTP(AE\!:\!B)$ and $\cM\in\CP(AE\!:\!B)$ satisfy
$D_{\max}^{\uparrow}(\cN\|\cM)<\infty$.  For each of the four information patterns
\begin{align}
(\star)\in
\bigl\{
(\downarrow,\emptyuparrow),
(\downarrow\downarrow,\emptyuparrow),
(\emptyuparrow,\downarrow),
(\emptyuparrow,\downarrow\downarrow)
\bigr\},
\end{align}
the asymptotic Stein exponent is given by
\begin{align}
&\lim_{\ve\to0^+}\liminf_{n\to\infty}
\frac1n
\Stein_{\ve,n}^{(\star)}(\cN,\cM)
=
\lim_{\ve\to0^+}\limsup_{n\to\infty}
\frac1n
\Stein_{\ve,n}^{(\star)}(\cN,\cM)
=
D^{\infty,\emptyuparrow,\downarrow\downarrow}
(\cN\|\cM).
\label{eq:public-iid-tester-unrestricted-jammer-AEP}
\end{align}
\end{boxtheorem}

The operational correspondence in
Theorem~\ref{thm:unified-operational-correspondence}, together with the minimax
inequality, bounds every operational exponent between two hypothesis-testing quantities:
\begin{align}
\hDiv{\ve}^{\emptyuparrow,\downarrow\downarrow}
\infdiv*{\cN^{\ox n}}{\cM^{\ox n}}
&\le
\Stein_{\ve,n}^{(\star)}(\cN,\cM)
\le
\hDiv{\ve}^{\downarrow,\emptyuparrow}
\infdiv*{\cN^{\ox n}}{\cM^{\ox n}}.
\label{eq:AEP-sandwich-all-variants-revised 1}
\end{align}
It therefore suffices to establish achievability for the left endpoint and a
converse bound for the right endpoint; the remaining cases then follow from the
same sandwich. Achievability is comparatively direct because the left endpoint
has the same optimization order and feasible state sets as the target exponent. The converse is more challenging because both the optimization order and the
feasible state sets differ. Consequently, repeating the argument from
Subsection~\ref{sec:proof-of-general-general-Stein} would yield a converse in
terms of $D^{\infty,\downarrow,\emptyuparrow}$ rather than the target rate
$D^{\infty,\emptyuparrow,\downarrow\downarrow}$. Because the IID input family
is nonconvex, the equality of these two rates cannot be established by the
minimax argument used in the previous section.

We instead prove the converse directly at the operational level. For each tester
input, we select an optimal jammer input pair and average these pairs over a
finite subcover of the tester's input space. The averaging produces a jammer input
that is independent of the tester's choice. A continuity argument shows that
the resulting approximation incurs no asymptotic loss. This construction plays
the role of a minimax argument without requiring a minimax identity for the
Umegaki divergence.

The proof uses the following two ingredients. The first supports the finite-subcover
argument; the second quantifies the
information loss caused by averaging the jammer inputs.

\begin{boxlemma}
\label{lem:fixed-block-continuity-general-jammer}
Let $\cN\in\CPTP(AE\!:\!B)$ and $\cM\in\CP(AE\!:\!B)$ satisfy
$D_{\max}^{\uparrow}(\cN\|\cM)<\infty$.  Then, for every fixed $m\in\NN$,
the function
$\rho\mapsto
D^{\downarrow\downarrow}(\cN_\rho^{\ox m}\|\cM_\rho^{\ox m})$
is finite and continuous on $\density(AR)$.
\end{boxlemma}

\begin{proof}
Choose $\kappa$ as in Lemma~\ref{lem:uniform-block-bounds}.  The uniform
divergence bounds in that lemma prove finiteness, so it remains to establish
continuity.
For $t\in(0,1)$, define
\begin{align}
V(\rho)
&:=
\inf_{\sigma,\omega\in\density(E^m)}
D\!\left(
\cN_\rho^{\ox m}(\sigma)
\,\|\,
\cM_\rho^{\ox m}(\omega)\right),
\\
V_t(\rho)
&:=
\inf_{\sigma,\omega\in\density(E^m)}
D\!\left(
\cN_\rho^{\ox m}(\sigma)
\,\|\,
\cM_\rho^{\ox m}(\omega_t)
\right),
\label{eq:fixed-block-mixed-value}
\end{align}
where $\omega_t:= (1-t)\omega+t\sigma$.
A similar argument in
Lemma~\ref{lem:uniform-block-bounds} and linearity give
\begin{align}
\cN_\rho^{\ox m}(\sigma)
&\le
\kappa^m\cM_\rho^{\ox m}(\sigma)
\le
\frac{\kappa^m}{t}
\cM_\rho^{\ox m}(\omega_t)
\label{eq:fixed-block-mixed-jammer-domination}
\end{align}
for every $\rho$, $\sigma$, and $\omega$.  For fixed $t$, both outputs depend
continuously on $(\rho,\sigma,\omega)$; the first is a state, and the second
is nonzero by the preceding domination.  Proposition~37(ii)
of~\cite{mosonyi2023some}, applied at $\alpha=1$ with
the function $f(s)=(\kappa^m/t)s$, shows that the objective defining
$V_t$ is jointly continuous.  Lemma~3(iii) therein and compactness of the
jammer domain then imply that $V_t$ is continuous in $\rho$.

Note that the mixed alternative input $\omega_t$ is feasible in the optimization $V(\rho)$, so
$V(\rho)\le V_t(\rho)$. 
Moreover,
$\cM_\rho^{\ox m}(\omega_t)
\ge(1-t)\cM_\rho^{\ox m}(\omega)$. This gives
\begin{align}
D\!\left(
\cN_\rho^{\ox m}(\sigma)
\,\|\,
\cM_\rho^{\ox m}(\omega_t)
\right)
&\le
D\!\left(
\cN_\rho^{\ox m}(\sigma)
\,\|\,
\cM_\rho^{\ox m}(\omega)
\right)
-\log(1-t).
\label{eq:fixed-block-mixed-objective-upper-bound}
\end{align}
 Taking the infimum in
Eq.~\eqref{eq:fixed-block-mixed-objective-upper-bound} gives the reverse
comparison up to $-\log(1-t)$.  Hence
\begin{align}
0
\le
V_t(\rho)-V(\rho)
\le
-\log(1-t)
\label{eq:fixed-block-uniform-continuity-approximation}
\end{align}
for every $\rho\in\density(AR)$.  Thus $V_t\to V$ uniformly as $t\to0^+$.
Since each $V_t$ is continuous, so is $V$.
\end{proof}

\begin{boxlemma}
\label{lem:dominated-component-hypothesis-testing-bound}
Let $M\ge 1$ and $N>0$.  Suppose that $\rho,\rho'\in\density$ and
$\sigma,\sigma'\in\PSD$ satisfy $M \rho\ge \rho'$ and $N \sigma\ge \sigma'$.
Then, for every $\ve\in[0,1/M)$,
\begin{align}
D_{\Hypo,\ve}(\rho\|\sigma)
&\le
D_{\Hypo,M\ve}(\rho'\|\sigma')+\log N.
\label{eq:dominated-component-hypothesis-testing-bound}
\end{align}
\end{boxlemma}

\begin{proof}
Let $T$ be feasible for the optimal type-II error $\beta_\ve(\rho\|\sigma)$.  The two domination
assumptions give
\begin{align}
\tr[\rho'(I-T)]
&\le
M\tr[\rho(I-T)]
\le
M\ve,
\notag\\
N\tr[\sigma T]
&\ge
\tr[\sigma'T]
\ge
\beta_{M\ve}(\rho'\|\sigma').
\end{align}
The first line shows that $T$ is feasible for
$\beta_{M\ve}(\rho'\|\sigma')$.  Since the second line holds for every test
feasible for $\beta_\ve(\rho\|\sigma)$, this gives
$N\beta_\ve(\rho\|\sigma)
\ge
\beta_{M\ve}(\rho'\|\sigma')$.
Taking $-\log$ proves
the asserted result.
\end{proof}

\bigskip
We are now ready to prove Theorem~\ref{thm:iid-tester-general-jammer-Stein}.

\begin{proof}[Proof of Theorem~\ref{thm:iid-tester-general-jammer-Stein}]

The proof proceeds in three parts. We first establish the existence of the
asymptotic Stein exponents, then prove achievability, and finally establish the
converse.

\prooftag{Limit existence.}
Fix $\rho\in\density(AR)$. The subadditivity of jammer-input divergences gives
\begin{align}
D^{\downarrow\downarrow}
(\cN_\rho^{\ox(n+m)}\|\cM_\rho^{\ox(n+m)})
&\le
D^{\downarrow\downarrow}
(\cN_\rho^{\ox n}\|\cM_\rho^{\ox n})
+D^{\downarrow\downarrow}
(\cN_\rho^{\ox m}\|\cM_\rho^{\ox m}).
\label{eq:fixed-tester-subadditivity-general-jammer}
\end{align}
Taking the supremum over $\rho$ and using
$\sup_\rho(f(\rho)+g(\rho))\le\sup_\rho f(\rho)+\sup_\rho g(\rho)$ proves
subadditivity of $D^{\emptyuparrow,\downarrow\downarrow}$. This indicates the existence of the regularized limit
\begin{align}
D^{\infty,\emptyuparrow,\downarrow\downarrow}(\cN\|\cM)
=
\lim_{n\to\infty}\frac1n
D^{\emptyuparrow,\downarrow\downarrow}
(\cN^{\ox n}\|\cM^{\ox n})
=
\inf_{n\ge1}\frac1n
D^{\emptyuparrow,\downarrow\downarrow}
(\cN^{\ox n}\|\cM^{\ox n}).\label{eq: emptyup downdown limit}
\end{align}

\prooftag{Achievability bound.}
For every $\ve\in(0,1)$ and $\alpha\in(1/2,1)$,
Lemma~\ref{lem:one-shot-hypothesis-testing-bounds} and
$D_{\Meas,\alpha}\leq D_{\Sand,\alpha}$ give
\begin{align}
&D_{\Hypo,\ve}^{\emptyuparrow,\downarrow\downarrow}
 (\cN^{\ox n}\|\cM^{\ox n})\ge
 \sup_{\rho\in\density(AR)}
 D_{\Meas,\alpha}^{\downarrow\downarrow}
 (\cN_\rho^{\ox n}\|\cM_\rho^{\ox n})
 +\frac{\alpha}{\alpha-1}\log\frac1\ve.
 \label{eq:detailed-full-jammer-one-shot-achievability}
\end{align}
After division by $n$, the last term vanishes:
\begin{align}
  \liminf_{n\to\infty}\frac1n
  D_{\Hypo,\ve}^{\emptyuparrow,\downarrow\downarrow}
  (\cN^{\ox n}\|\cM^{\ox n})
  & \ge
  \liminf_{n\to\infty}\frac1n
  \sup_{\rho\in\density(AR)}
  D_{\Meas,\alpha}^{\downarrow\downarrow}
  (\cN_\rho^{\ox n}\|\cM_\rho^{\ox n})
  \label{eq:detailed-full-jammer-one-shot-achievability-vanish}\\
  & \geq \sup_{\rho\in\density(AR)}\ \  \liminf_{n\to\infty} \ \ \frac1n
  D_{\Meas,\alpha}^{\downarrow\downarrow}
  (\cN_\rho^{\ox n}\|\cM_\rho^{\ox n})\\
  & \geq \sup_{\rho\in\density(AR)}\ \ \frac1m
  D_{\Meas,\alpha}^{\downarrow\downarrow}
  (\cN_\rho^{\ox m}\|\cM_\rho^{\ox m}),
\end{align}
where the last inequality above holds for any positive integer $m \in \NN$ and follows from the superadditivity of the jammer-input measured channel divergence~\cite[Lemma~2]{fang2025adversarial}.

Taking the supremum over
$\alpha\in(1/2,1)$, commuting the two suprema, and applying the order-one
continuity for compact convex sets in
\cite[Lemma~22]{fang2024generalized} to the image sets yields
\begin{align}
&\liminf_{n\to\infty}\frac1n
D_{\Hypo,\ve}^{\emptyuparrow,\downarrow\downarrow}
 (\cN^{\ox n}\|\cM^{\ox n})
\ge
\frac1m\sup_{\rho\in\density(AR)}
D_{\Meas}^{\downarrow\downarrow}
 (\cN_\rho^{\ox m}\|\cM_\rho^{\ox m}).
 \label{eq:detailed-full-jammer-measured-block}
\end{align}

Since $\cN_\rho^{\ox m}$ and $\cM_\rho^{\ox m}$ are permutation invariant, the optimal solution in $D_{\Meas}^{\downarrow\downarrow}
 (\cN_\rho^{\ox m}\|\cM_\rho^{\ox m})$ can be chosen as permutation invariant states as well. Then Lemma~\ref{Lemma: DM and Sandwiched relation} gives
\begin{align}
&\liminf_{n\to\infty}\frac1n
D_{\Hypo,\ve}^{\emptyuparrow,\downarrow\downarrow}
 (\cN^{\ox n}\|\cM^{\ox n})
\ge
\frac1m
D^{\emptyuparrow,\downarrow\downarrow}
 (\cN^{\ox m}\|\cM^{\ox m})
-\frac{f(m)}m.
 \label{eq:detailed-full-jammer-Umegaki-block}
\end{align}
where $f(m) = 2\log[(m+1)^{d}(m+d)^{d^2}]$ and $d = \dim(\cH_{BR})$.
Finally, letting $m\to\infty$ we obtain
\begin{align}
\liminf_{n\to\infty}\frac1n
D_{\Hypo,\ve}^{\emptyuparrow,\downarrow\downarrow}
 (\cN^{\ox n}\|\cM^{\ox n})
\ge
D^{\infty,\emptyuparrow,\downarrow\downarrow}(\cN\|\cM).
 \label{eq:detailed-full-jammer-achievability}
\end{align}

\prooftag{Converse bound.}
As mentioned before, the challenge for the converse part is to bound the public hypothesis-unaware setting $(\downarrow,\emptyuparrow)$ by a target divergence with $(\emptyuparrow,\downarrow\downarrow)$. For this, we need to construct a jammer state that universally works well for all tester inputs. 

We first justify fixing $R\simeq A$, independently of the blocklength,
before invoking compactness.  Let $S$ be any finite-dimensional reference
and $\rho_{AS}$ any tester state.  Choose a pure state $\psi_{AR}$
purifying $\rho_A$.  There exists a channel $\Lambda\in\CPTP(R\!:\!S)$
such that, for every $n\in\NN$, $\rho_{AS}=(\id_A\ox\Lambda)(\psi_{AR})$. This follows by purifying $\rho_{AS}$ further, relating
purifications of $\rho_A$ by an isometry, and discarding the extra
purifying system. Then we have
\begin{align}
\cG_\rho^{\ox n}(\eta_n)
&=(\id_{B^n}\ox\Lambda^{\ox n})
\bigl(\cG_\psi^{\ox n}(\eta_n)\bigr),
\qquad \cG\in\{\cN,\cM\},\quad \eta_n\in\density(E^n).
\label{eq:iid-reference-output-recovery}
\end{align}
This holds because the reference channel commutes
with both channel hypotheses.
Consequently, data processing gives, for every jammer pair
$\sigma_n,\omega_n\in\density(E^n)$,
\begin{align}
&D_{\Hypo,\ve}\!\left(
\cN_\rho^{\ox n}(\sigma_n)
\,\big\|\,
\cM_\rho^{\ox n}(\omega_n)\right)
\le
D_{\Hypo,\ve}\!\left(
\cN_\psi^{\ox n}(\sigma_n)
\,\big\|\,
\cM_\psi^{\ox n}(\omega_n)\right).
\label{eq:iid-reference-testing-domination}
\end{align}
The same inequality holds for $D$.  Since $\psi_{AR}^{\ox n}$ is an
admissible IID input and $\Lambda$ is independent of the jammer pair,
both tester optimizations can therefore be restricted to this fixed $R$.
The reverse inequalities follow because states on $AR$ are included among
the unrestricted tester inputs.
This leaves the converse endpoint and the target rate unchanged.
We may thus work on the fixed compact set $\density(AR)$.

We start from the target divergence of $(\emptyuparrow,\downarrow\downarrow)$. For any tester input state, find the corresponding optimal jammer pair and then perturb this pair a bit so that it works for a neighborhood of the original tester input. Choose $\kappa$ and $\mu$ as in
Lemma~\ref{lem:uniform-block-bounds}.  Fix $\gamma>0$. By Eq.~\eqref{eq: emptyup downdown limit}, choose $m$ such
that
\begin{align}
\frac1m
D^{\emptyuparrow,\downarrow\downarrow}
(\cN^{\ox m}\|\cM^{\ox m})
\le
D^{\infty,\emptyuparrow,\downarrow\downarrow}(\cN\|\cM)
+\gamma.
\end{align}
For each $\rho_0\in\density(AR)$, choose a minimizing pair
$(\sigma_{\rho_0},\omega_{\rho_0})$ such that
\begin{align}
  D^{\downarrow\downarrow}(\cN_{\rho_0}^{\ox m}\|\cM_{\rho_0}^{\ox m}) = D(\cN_{\rho_0}^{\ox m}(\sigma_{\rho_0})\|\cM_{\rho_0}^{\ox m}(\omega_{\rho_0})).
\end{align}  Choose
$t\in(0,1)$, independent of $\rho_0$, such that
$-\log(1-t)\le m\gamma$, and set the perturbed state
$\widetilde\omega_{\rho_0}:=(1-t)\omega_{\rho_0}+t\sigma_{\rho_0}$.
For this fixed pair, define
\begin{align}
g_{\rho_0}(\rho)
:=
D\!\left(
\cN_\rho^{\ox m}(\sigma_{\rho_0})
\,\|\,
\cM_\rho^{\ox m}(\widetilde\omega_{\rho_0})
\right).
\end{align}
Eq.~\eqref{eq:fixed-block-mixed-jammer-domination} gives, uniformly
for every $\rho\in\density(AR)$,
\begin{align}
\cN_\rho^{\ox m}(\sigma_{\rho_0})
\le
\frac{\kappa^m}{t}
\cM_\rho^{\ox m}(\widetilde\omega_{\rho_0}).
\end{align}
The two outputs depend continuously on $\rho$ and satisfy the uniform
domination relation above.  Proposition~37(ii) of~\cite{mosonyi2023some},
applied at $\alpha=1$ with $f(s)=(\kappa^m/t)s$, therefore implies that
$g_{\rho_0}$ is continuous on $\density(AR)$.  At the center $\rho_0$,
Eq.~\eqref{eq:fixed-block-mixed-objective-upper-bound} and the choice of
$m$ give
\begin{align}
g_{\rho_0}(\rho_0)
&\leq D(\cN_{\rho_0}^{\ox m}(\sigma_{\rho_0})\|\cM_{\rho_0}^{\ox m}(\omega_{\rho_0}))-\log(1-t)\\
& =
D^{\downarrow\downarrow}
(\cN_{\rho_0}^{\ox m}\|\cM_{\rho_0}^{\ox m})
-\log(1-t)
\\
&\le
D^{\emptyuparrow,\downarrow\downarrow}
(\cN^{\ox m}\|\cM^{\ox m})
-\log(1-t)
\\
&\le
m\!\left(
D^{\infty,\emptyuparrow,\downarrow\downarrow}(\cN\|\cM)
+2\gamma
\right).
\end{align}
Consequently, continuity implies that the relatively open set
\begin{align}
U_{\rho_0}
:=\left\{
\rho\in\density(AR):
g_{\rho_0}(\rho)
<m\!\left(
D^{\infty,\emptyuparrow,\downarrow\downarrow}(\cN\|\cM)
+3\gamma
\right)
\right\}
\end{align}
contains $\rho_0$.  Thus
$\{U_{\rho_0}:\rho_0\in\density(AR)\}$ is an open cover of
$\density(AR)$.  The choice $R\simeq A$ made above fixes
this state space independently of $m$ and $n$.  By compactness, there are finitely many centers
$\rho_1,\ldots,\rho_L$ whose neighborhoods cover $\density(AR)$.  Setting
$U_i:=U_{\rho_i}$,
$\sigma_i:=\sigma_{\rho_i}$, and
$\widetilde\omega_i:=\widetilde\omega_{\rho_i}$ gives
\begin{align}
\rho\in U_i
\quad\Longrightarrow\quad
D\!\left(
\cN_\rho^{\ox m}(\sigma_i)
\,\|\,
\cM_\rho^{\ox m}(\widetilde\omega_i)
\right)
\le
m\!\left(
D^{\infty,\emptyuparrow,\downarrow\downarrow}(\cN\|\cM)
+3\gamma
\right).
\label{eq:public-full-jammer-finite-cover}
\end{align}
This indicates that the fixed pair $(\sigma_i,\widetilde \omega_i)$ provides the required bound for every tester input $\rho \in U_i$.
It is important that all objects used to construct this cover are fixed after choosing $\gamma$
and $m$, before the overall blocklength $n$ and testing error $\ve$ enter
the argument.  

Now we can construct a universal jammer state based on the finite subcover. Write $n=qm+r$, where $0\le r<m$, and choose
$\tau_r\in\density(E^r)$.  For
$1\le i\le L$, define
\begin{align}
\sigma_{i,n}
&:=\sigma_i^{\ox q}\ox\tau_r, \qquad
\omega_{i,n}
 :=(\widetilde\omega_i)^{\ox q}\ox\tau_r,
\end{align}
and consider the averaged state
\begin{align}
  \widebar\tau_n
:=\frac1{2L}\sum_{i=1}^L
\left(\sigma_{i,n}+\omega_{i,n}\right).
\label{eq:public-full-jammer-correlated-common-state}
\end{align}
This average constructs a state that is independent of $\rho$ and is used under both
hypotheses; this is the crucial reduction to the public hypothesis-unaware game. Next, we will show that this is indeed a good choice for the asymptotic estimation.

Fix $\rho$ and choose $i$ such that $\rho\in U_i$.  Each relevant output
contains its selected component with weight $1/(2L)$, so positivity gives
\begin{align}
\cN_\rho^{\ox n}(\widebar\tau_n)
&\ge
\frac1{2L}\cN_\rho^{\ox n}(\sigma_{i,n}),\\
\cM_\rho^{\ox n}(\widebar\tau_n)
& \ge
\frac1{2L}\cM_\rho^{\ox n}(\omega_{i,n}).
\end{align}
Applying
Lemma~\ref{lem:dominated-component-hypothesis-testing-bound} with
$M=N=2L$ gives, whenever $2L\ve<1$,
\begin{align}
D_{\Hypo,\ve}\!\left(
\cN_\rho^{\ox n}(\widebar\tau_n)
\,\|\,
\cM_\rho^{\ox n}(\widebar\tau_n)
\right)
\le
D_{\Hypo,2L\ve}\!\left(
\cN_\rho^{\ox n}(\sigma_{i,n})
\,\|\,
\cM_\rho^{\ox n}(\omega_{i,n})
\right)
+\log(2L).
\label{eq:public-full-jammer-common-state-DH-bound}
\end{align}

For each
$\rho\in\density(AR)$, retain a cover index $i=i(\rho)$ such that
$\rho\in U_i$. Applying Lemma~\ref{lem:one-shot-hypothesis-testing-bounds} to the right-hand side of
Eq.~\eqref{eq:public-full-jammer-common-state-DH-bound} gives
\begin{align}
D_{\Hypo,\ve}\!\left(
\cN_\rho^{\ox n}(\widebar\tau_n)
\,\|\,
\cM_\rho^{\ox n}(\widebar\tau_n)
\right)
& \le
\frac{
D\!\left(
\cN_\rho^{\ox n}(\sigma_{i,n})
\,\|\,
\cM_\rho^{\ox n}(\omega_{i,n})\right)
}{1-2L\ve} 
+ c.
\end{align}
where
\begin{align}
  c = \frac{
h(2L\ve)
+2L\ve n\log \mu
}{1-2L\ve}
+\log(2L).
\end{align}

Fix $\rho$ and choose $i$ such that $\rho\in U_i$.  The tensor-product
forms of $\sigma_{i,n}$ and $\omega_{i,n}$, additivity of relative entropy,
Eq.~\eqref{eq:public-full-jammer-finite-cover}, and the remainder bounds in
the proof of Lemma~\ref{lem:uniform-block-bounds} give
\begin{align}
D\!\left(
\cN_\rho^{\ox n}(\sigma_{i,n})
\,\|\,
\cM_\rho^{\ox n}(\omega_{i,n})
\right)
&=
qD\!\left(
\cN_\rho^{\ox m}(\sigma_i)
\,\|\,
\cM_\rho^{\ox m}(\widetilde\omega_i)
\right)
+
D\!\left(
\cN_\rho^{\ox r}(\tau_r)
\,\|\,
\cM_\rho^{\ox r}(\tau_r)
\right)
\\
&\le
qm\!\left(
D^{\infty,\emptyuparrow,\downarrow\downarrow}(\cN\|\cM)
+3\gamma
\right)
+r\log\kappa.
\label{eq:public-full-jammer-component-bounds}
\end{align}
This implies that
\begin{align}
 D_{\Hypo,\ve}\!\left(
\cN_\rho^{\ox n}(\widebar\tau_n)
\,\|\,
\cM_\rho^{\ox n}(\widebar\tau_n)
\right) & \leq \frac{qm\!\left(
D^{\infty,\emptyuparrow,\downarrow\downarrow}(\cN\|\cM)
+3\gamma
\right)
+r\log\kappa}{1-2L\ve} + c
\end{align}
Crucially, $\widebar\tau_n$ is independent of $\rho$, and the preceding
bound therefore holds simultaneously for all tester inputs.  We may first
take the tester supremum at this fixed jammer state and then evaluate the
outer jammer infimum at the feasible choice $\widebar\tau_n$.  Explicitly,
\begin{align}
\frac1n D_{\Hypo,\ve}^{\downarrow,\emptyuparrow}(\cN^{\ox n}\|\cM^{\ox n})
&=\frac1n
\inf_{\tau_n\in\density(E^n)}
\sup_{\rho\in\density(AR)}
D_{\Hypo,\ve}\!\left(
\cN_\rho^{\ox n}(\tau_n)
\,\big\|\,
\cM_\rho^{\ox n}(\tau_n)
\right)
\\
&\le\frac1n
\sup_{\rho\in\density(AR)}
D_{\Hypo,\ve}\!\left(
\cN_\rho^{\ox n}(\widebar\tau_n)
\,\big\|\,
\cM_\rho^{\ox n}(\widebar\tau_n)
\right)
\\
&\le
\frac{
qm\!\left(
D^{\infty,\emptyuparrow,\downarrow\downarrow}(\cN\|\cM)
+3\gamma
\right)
+r\log\kappa
}{
n(1-2L\ve)
}
+ \frac{c}{n}.
\end{align}
For fixed $\gamma$, the quantities $m$ and $L$ are fixed.  Since
$qm/n\to1$ and $r/n\to0$, taking $n\to\infty$ gives, for $2L\ve<1$,
\begin{align}
\limsup_{n\to\infty}\frac1n
D_{\Hypo,\ve}^{\downarrow,\emptyuparrow}(\cN^{\ox n}\|\cM^{\ox n})
\le
\frac{
D^{\infty,\emptyuparrow,\downarrow\downarrow}(\cN\|\cM)
+3\gamma+2L\ve\log\mu}
{1-2L\ve}.
\end{align}
Letting $\ve\to0^+$ while $\gamma$, and hence $L$, remains fixed, and only
then letting $\gamma\to0^+$ gives
\begin{align}
\lim_{\ve\to0^+}\limsup_{n\to\infty}
\frac1n
D_{\Hypo,\ve}^{\downarrow,\emptyuparrow}(\cN^{\ox n}\|\cM^{\ox n})
\le
D^{\infty,\emptyuparrow,\downarrow\downarrow}(\cN\|\cM).
\end{align}
This completes the proof.
\end{proof}

\section{IID tester against IID jammer}\label{sec: iid inputs}

In this section, we consider an IID tester against an IID jammer.
Unlike in the previous settings, the asymptotic Stein exponent can differ significantly across the different information patterns.

The two public-jammer cases are relatively straightforward by applying Lemma~\ref{lem:one-shot-hypothesis-testing-bounds}, the established continuity in Lemma~\ref{lem: DM alpha jam continuity} and the minimax property in Proposition~\ref{prop: jammer divergence minimax}.

\begin{boxtheorem}[IID tester against public IID jammer]
\label{thm: public hypo hypothesis-unaware iid}
Let $\cN\in\CPTP(AE\!:\!B)$ and $\cM\in\CP(AE\!:\!B)$ satisfy
$D_{\max}^{\uparrow}(\cN\|\cM)<\infty$.  The following asymptotic Stein exponents hold
\begin{align}
\lim_{n\to\infty}
\frac1n
\Stein_{\ve,n}^{\emptydownarrow,\emptyuparrow}(\cN,\cM)
&=
D^{\uparrow,\downarrow}(\cN\|\cM),\qquad \ \forall \ve\in(0,1),
\label{eq:public-hypothesis-unaware-iid-Stein}
\\
\lim_{n\to\infty}
\frac1n
\Stein_{\ve,n}^{\emptydownarrow\emptydownarrow,\emptyuparrow}(\cN,\cM)
&=
D^{\uparrow,\downarrow\downarrow}(\cN\|\cM), \qquad \forall \ve\in(0,1).
\label{eq:public-hypothesis-aware-iid-Stein}
\end{align}
\end{boxtheorem}

\begin{proof}
We first prove the hypothesis-unaware case; the proof works the same for the other case. Let
$\alpha\in(1/2,1)$ and $\alpha'>1$.  Applying Lemma~\ref{lem:one-shot-hypothesis-testing-bounds} gives
\begin{align}
D_{\Sand,\alpha}^{\emptydownarrow,\emptyuparrow}(\cN^{\ox n}\|\cM^{\ox n})
+\frac{\alpha}{\alpha-1}\log\frac1\ve
&\le
D_{\Hypo,\ve}^{\emptydownarrow,\emptyuparrow}(\cN^{\ox n}\|\cM^{\ox n})
\le
D_{\Sand,\alpha'}^{\emptydownarrow,\emptyuparrow}(\cN^{\ox n}\|\cM^{\ox n})
+\frac{\alpha'}{\alpha'-1}\log\frac1{1-\ve}.
\label{eq:public-hypothesis-unaware-iid-renyi-sandwich}
\end{align}
As the optimizations are restricted to the tensor-power states, we have 
\begin{align}
D_{\Sand,\alpha}^{\emptydownarrow,\emptyuparrow}(\cN^{\ox n}\|\cM^{\ox n})
&=
n D_{\Sand,\alpha}^{\downarrow,\uparrow}(\cN\|\cM).
\end{align}
Dividing by $n$ and taking $n\to\infty$, we have 
\begin{align}
D_{\Sand,\alpha}^{\downarrow,\uparrow}(\cN\|\cM) \leq \liminf_{n\to\infty} \frac1n D_{\Hypo,\ve}^{\emptydownarrow,\emptyuparrow}(\cN^{\ox n}\|\cM^{\ox n}) \leq \limsup_{n\to\infty} \frac1n D_{\Hypo,\ve}^{\emptydownarrow,\emptyuparrow}(\cN^{\ox n}\|\cM^{\ox n}) \leq D_{\Sand,\alpha'}^{\downarrow,\uparrow}(\cN \|\cM).
\end{align}
Taking the supremum over $\alpha$, the infimum over $\alpha'$, and applying
Proposition~\ref{prop: jammer divergence minimax} and
Lemma~\ref{lem: DM alpha jam continuity}, we have
\begin{align}
\lim_{n\to\infty} \frac1n D_{\Hypo,\ve}^{\emptydownarrow,\emptyuparrow}(\cN^{\ox n}\|\cM^{\ox n}) = D^{\downarrow,\uparrow}(\cN\|\cM) = D^{\uparrow,\downarrow}(\cN\|\cM).
\end{align}
By Theorem~\ref{thm:unified-operational-correspondence}, the left-hand side is
the hypothesis-unaware Stein exponent. The
hypothesis-aware case follows identically with the corresponding double-arrow
divergences.
\end{proof}

We next consider the two secret-jammer cases, whose Stein exponents are
characterized by regularized channel divergences.

\begin{boxtheorem}[IID tester against secret IID jammer]
\label{thm: secret hypo hypothesis-aware iid}
Let $\cN\in\CPTP(AE\!:\!B)$ and $\cM\in\CP(AE\!:\!B)$ satisfy
$D_{\max}^{\uparrow}(\cN\|\cM)<\infty$.  For
$(\star)\in
\bigl\{
(\emptyuparrow,\emptydownarrow),
(\emptyuparrow,\emptydownarrow\emptydownarrow)
\bigr\}$,
the asymptotic Stein exponents are
\begin{align}
&\lim_{\ve\to0^+}\liminf_{n\to\infty}
\frac1n
\Stein_{\ve,n}^{(\star)}(\cN,\cM)
=
\lim_{\ve\to0^+}\limsup_{n\to\infty}
\frac1n
\Stein_{\ve,n}^{(\star)}(\cN,\cM)
=
D^{\infty,
\emptyuparrow,
\convdownarrow\convdownarrow}(\cN\|\cM).
\label{eq:theorem73-vanishing-error-Stein-revised}
\end{align}
\end{boxtheorem}

\begin{proof}
By Theorem~\ref{thm:unified-operational-correspondence}, for either
$(\star)\in\{(\emptyuparrow,\emptydownarrow),
(\emptyuparrow,\emptydownarrow\emptydownarrow)\}$,
\begin{align}
\Stein_{\ve,n}^{(\star)}(\cN,\cM)
&=
D_{\Hypo,\ve}^{\emptyuparrow,\convdownarrow\convdownarrow}
(\cN^{\ox n}\|\cM^{\ox n}).
\end{align}
It suffices to establish the asserted asymptotic formula for the
divergence on the right. We first prove the existence of the regularized
Umegaki divergence and then the achievability and converse bounds.

\prooftag{Limit existence.} 
Fix $\rho\in\density(AR)$.  For
$X\in\PSD((BR)^n)$ and $Y\in\PSD((BR)^m)$, linearity gives
\begin{align}
&\sup_{\xi_{n+m}\in\conv(\sI(E^{n+m}))}
\tr\!\left[
(X\ox Y)\cN_\rho^{\ox(n+m)}(\xi_{n+m})
\right]
\\
&\quad=
\sup_{\tau\in\density(E)}
\tr\!\left[X\cN_\rho^{\ox n}(\tau^{\ox n})\right]
\tr\!\left[Y\cN_\rho^{\ox m}(\tau^{\ox m})\right]
\\
&\quad\le
\sup_{\xi_n\in\conv(\sI(E^n))}
\tr\!\left[X\cN_\rho^{\ox n}(\xi_n)\right]
\sup_{\xi_m\in\conv(\sI(E^m))}
\tr\!\left[Y\cN_\rho^{\ox m}(\xi_m)\right].
\label{eq:iid-iid-output-support-submultiplicativity}
\end{align}
The same calculation holds with $\cN$ replaced by $\cM$.  Thus the
positive polars of both compact convex output-set sequences are stable
under tensor products by~\cite[Lemma~7]{fang2024generalized}. The
superadditivity results for $D_{\Meas,\alpha}$ and $D_{\Meas}$
in~\cite[Lemmas~21 and~23, respectively]{fang2024generalized} therefore
give, for every $\alpha\in(0,1)$,
\begin{align}
D_{\Meas,\alpha}^{\convdownarrow\convdownarrow}
(\cN_\rho^{\ox(n+m)}\|\cM_\rho^{\ox(n+m)})
&\ge
D_{\Meas,\alpha}^{\convdownarrow\convdownarrow}
(\cN_\rho^{\ox n}\|\cM_\rho^{\ox n})
+
D_{\Meas,\alpha}^{\convdownarrow\convdownarrow}
(\cN_\rho^{\ox m}\|\cM_\rho^{\ox m}),
\label{eq:iid-iid-fixed-tester-measured-Renyi-superadditivity}
\\
D_{\Meas}^{\convdownarrow\convdownarrow}
(\cN_\rho^{\ox(n+m)}\|\cM_\rho^{\ox(n+m)})
&\ge
D_{\Meas}^{\convdownarrow\convdownarrow}
(\cN_\rho^{\ox n}\|\cM_\rho^{\ox n})
+
D_{\Meas}^{\convdownarrow\convdownarrow}
(\cN_\rho^{\ox m}\|\cM_\rho^{\ox m}).
\label{eq:iid-iid-fixed-tester-measured-superadditivity}
\end{align}
Then we have 
\begin{align}
  \liminf_{n\to \infty} \frac1n & \sup_{\rho\in\density(AR)}  D_{\Meas}^{\convdownarrow\convdownarrow}(\cN_\rho^{\ox n}\|\cM_\rho^{\ox n})\notag\\
   & \geq \sup_{\rho\in\density(AR)} \liminf_{n\to \infty} \frac1n  D_{\Meas}^{\convdownarrow\convdownarrow}(\cN_\rho^{\ox n}\|\cM_\rho^{\ox n})\\
  & = \sup_{\rho\in\density(AR)} \sup_{n\in \NN} \frac1n  D_{\Meas}^{\convdownarrow\convdownarrow}(\cN_\rho^{\ox n}\|\cM_\rho^{\ox n}),
\end{align}
where the equality follows from the superadditivity in Eq.~\eqref{eq:iid-iid-fixed-tester-measured-superadditivity} and Fekete's lemma.  The limit exists and is equal to the supremum over $n$.
On the other hand, every limit superior is bounded by the supremum over $k$ of the finite-block quantities,
\begin{align}
&\limsup_{n\to\infty}\frac1n
\sup_{\rho\in\density(AR)}
D_{\Meas}^{\convdownarrow\convdownarrow}
(\cN_\rho^{\ox n}\|\cM_\rho^{\ox n})
\le
\sup_{k\in\NN}\frac1k
\sup_{\rho\in\density(AR)}
D_{\Meas}^{\convdownarrow\convdownarrow}
(\cN_\rho^{\ox k}\|\cM_\rho^{\ox k}).
\end{align}
This proves that the limit exists and gives
\begin{align}
&\lim_{n\to\infty}\frac1n
\sup_{\rho\in\density(AR)}
D_{\Meas}^{\convdownarrow\convdownarrow}
(\cN_\rho^{\ox n}\|\cM_\rho^{\ox n})
=
\sup_{k\in\NN}\frac1k
\sup_{\rho\in\density(AR)}
D_{\Meas}^{\convdownarrow\convdownarrow}
(\cN_\rho^{\ox k}\|\cM_\rho^{\ox k}).
\label{eq:iid-iid-measured-rate-existence}
\end{align}

It remains to compare this limit with the Umegaki divergence. Set
$d:=\dim\cH_{BR}$ and
$f(n):=2\log[(n+1)^{d}(n+d)^{d^2}]$.
Every alternative output obtained from $\conv(\sI(E^n))$ is permutation
invariant on $\cH_{BR}^{\ox n}$. At order one,
Lemma~\ref{Lemma: DM and Sandwiched relation} gives the measured-to-Umegaki
comparison. Taking the two jammer infima and the tester supremum gives
\begin{align}
&\sup_{\rho\in\density(AR)}
D_{\Meas}^{\convdownarrow\convdownarrow}
(\cN_\rho^{\ox n}\|\cM_\rho^{\ox n})
\le
D^{\emptyuparrow,\convdownarrow\convdownarrow}
(\cN^{\ox n}\|\cM^{\ox n})
\le
\sup_{\rho\in\density(AR)}
D_{\Meas}^{\convdownarrow\convdownarrow}
(\cN_\rho^{\ox n}\|\cM_\rho^{\ox n})+f(n).
\label{eq:iid-iid-measured-Umegaki-comparison}
\end{align}
Since $f(n)/n\to0$, Eq.~\eqref{eq:iid-iid-measured-rate-existence}
and the squeeze theorem give the existence of the limit
\begin{align}
D^{\infty,\emptyuparrow,\convdownarrow\convdownarrow}(\cN\|\cM)
&:=
\lim_{n\to\infty}\frac1n
D^{\emptyuparrow,\convdownarrow\convdownarrow}
(\cN^{\ox n}\|\cM^{\ox n})
=
\sup_{k\in\NN}\frac1k
\sup_{\rho\in\density(AR)}
D_{\Meas}^{\convdownarrow\convdownarrow}
(\cN_\rho^{\ox k}\|\cM_\rho^{\ox k}).
\label{eq:iid-iid-convexified-rate-existence}
\end{align}
Lemma~\ref{lem:uniform-block-bounds} implies that the limit is finite.

\bigskip
\prooftag{Achievability bound.} Fix $\ve\in(0,1)$ and $\alpha\in(1/2,1)$.  The one-shot lower bound in
Lemma~\ref{lem:one-shot-hypothesis-testing-bounds}, together with
$D_{\Meas,\alpha}\le D_{\Sand,\alpha}$, gives
\begin{align}
&D_{\Hypo,\ve}^{\emptyuparrow,
\convdownarrow\convdownarrow}
(\cN^{\ox n}\|\cM^{\ox n})
\ge
\sup_{\rho\in\density(AR)}
D_{\Meas,\alpha}^{\convdownarrow\convdownarrow}
(\cN_\rho^{\ox n}\|\cM_\rho^{\ox n})
+
\frac{\alpha}{\alpha-1}\log\frac1\ve.
\label{eq:theorem73-achievability-one-shot-revised}
\end{align}
After division by $n$, the final term in
Eq.~\eqref{eq:theorem73-achievability-one-shot-revised} vanishes. For every
fixed $\rho$, Eq.~\eqref{eq:iid-iid-fixed-tester-measured-Renyi-superadditivity}
and Fekete's lemma show that the corresponding normalized sequence converges
to the supremum of its finite-block values. Hence, for every $k\in\NN$,
\begin{align}
 \liminf_{n\to\infty}&\frac1n
D_{\Hypo,\ve}^{\emptyuparrow,
\convdownarrow\convdownarrow}
(\cN^{\ox n}\|\cM^{\ox n})\notag\\
&\ge
\liminf_{n\to\infty}\frac1n
\sup_{\rho\in\density(AR)}
D_{\Meas,\alpha}^{\convdownarrow\convdownarrow}
(\cN_\rho^{\ox n}\|\cM_\rho^{\ox n})
\\
&\ge
\sup_{\rho\in\density(AR)}
\lim_{n\to\infty}\frac1n
D_{\Meas,\alpha}^{\convdownarrow\convdownarrow}
(\cN_\rho^{\ox n}\|\cM_\rho^{\ox n})
\\
&=
\sup_{\rho\in\density(AR)}\sup_{m\in\NN}\frac1m
D_{\Meas,\alpha}^{\convdownarrow\convdownarrow}
(\cN_\rho^{\ox m}\|\cM_\rho^{\ox m})
\\
&\ge
\frac1k\sup_{\rho\in\density(AR)}
D_{\Meas,\alpha}^{\convdownarrow\convdownarrow}
(\cN_\rho^{\ox k}\|\cM_\rho^{\ox k}).
\label{eq:theorem73-achievability-fixed-block}
\end{align}
The left-hand side is independent of $\alpha$. Taking the supremum over
$\alpha\in(1/2,1)$, exchanging the two suprema, and applying the order-one
continuity for compact convex sets in~\cite[Lemma~22]{fang2024generalized}
to the fixed-$\rho$ output sets yield
\begin{align}
\liminf_{n\to\infty}\frac1n
D_{\Hypo,\ve}^{\emptyuparrow,
\convdownarrow\convdownarrow}
(\cN^{\ox n}\|\cM^{\ox n})
&\ge
\frac1k\sup_{\rho\in\density(AR)}
D_{\Meas}^{\convdownarrow\convdownarrow}
(\cN_\rho^{\ox k}\|\cM_\rho^{\ox k}).
\label{eq:theorem73-achievability-measured-block}
\end{align}
Since this holds for every $k\in\NN$,
Eq.~\eqref{eq:iid-iid-convexified-rate-existence} gives
\begin{align}
\liminf_{n\to\infty}\frac1n
D_{\Hypo,\ve}^{\emptyuparrow,
\convdownarrow\convdownarrow}
(\cN^{\ox n}\|\cM^{\ox n})
&\ge
D^{\infty,\emptyuparrow,\convdownarrow\convdownarrow}(\cN\|\cM).
\label{eq:theorem73-achievability-final-revised}
\end{align}

\prooftag{Converse bound.}
Let $\mu$ be as in Lemma~\ref{lem:uniform-block-bounds}.
The trace estimate in the proof of that lemma and
Lemma~\ref{lem:one-shot-hypothesis-testing-bounds}, applied pointwise, give
the following bound after taking the infimum over the jammer pair and then the
supremum over the tester:
\begin{align}
&D_{\Hypo,\ve}^{\emptyuparrow,
\convdownarrow\convdownarrow}
(\cN^{\ox n}\|\cM^{\ox n})
\le
\frac{
D^{\emptyuparrow,
\convdownarrow\convdownarrow}
(\cN^{\ox n}\|\cM^{\ox n})
+h(\ve)+\ve n\log \mu}
{1-\ve}.
\label{eq:theorem73-composite-one-shot-upper}
\end{align}
Dividing by $n$, taking the limit superior, and using
Eq.~\eqref{eq:iid-iid-convexified-rate-existence} gives
\begin{align}
&\limsup_{n\to\infty}\frac1n
D_{\Hypo,\ve}^{\emptyuparrow,
\convdownarrow\convdownarrow}
(\cN^{\ox n}\|\cM^{\ox n})
\le
\frac{
D^{\infty,\emptyuparrow,
\convdownarrow\convdownarrow}(\cN\|\cM)
+\ve\log\mu}
{1-\ve}.
\end{align}
Letting $\ve\to0^+$ proves the desired converse direction.
\end{proof}

\begin{boxremark}
\label{rem:iid-exponent-separation}
In general, neither equality among the three asymptotic Stein exponents in this
section holds universally.  Appendix~\ref{sec:strict-iid-gap} gives two qubit
channel pairs for strict differences.  Example~\ref{exam:iid-awareness-gap} shows that
$D^{\uparrow,\downarrow}>D^{\uparrow,\downarrow\downarrow}=0$.
Example~\ref{exam:strict-iid-gap} gives a different channel pair for which
$D^{\uparrow,\downarrow\downarrow}>
D^{\infty,\emptyuparrow,\convdownarrow\convdownarrow}$.
\end{boxremark}

\section{Replacer alternatives}
\label{sec:replacer-alternatives}

Replacer channels form a structured class of alternatives whose output is
fixed independently of all channel inputs.  In the split-control setting, this
property removes the alternative-side jammer variable from the optimization.
We prove that all eleven minimax/maximin Umegaki divergences in Definition~\ref{def:resource-dependent-block-divergences} are additive
against a replacer alternative and, consequently, that the twelve
vanishing-error Stein exponents coincide at one single-letter value.
 
For a fixed state $\tau_B$, the replacer channel
$\cR_\tau\in\CPTP(AE\!:\!B)$ discards its input and prepares $\tau_B$:
\begin{align}
\cR_\tau(X_{AE})&:=\tr[X_{AE}]\tau_B.
\end{align}
If the tester retains a reference $R$, then for every
$\rho_{AR}\in\density(AR)$ and $\omega_E\in\density(E)$,
\begin{align}
(\cR_\tau\ox\id_R)(\rho_{AR}\ox\omega_E)
&=\tau_B\ox\rho_R.
\label{eq:replacer-fixed-output}
\end{align}
Thus the output retains the tester reference but is independent of the
jammer input.  

The following theorem states the finite-block additivity of the eleven
Umegaki divergences and its asymptotic operational consequence.

\begin{boxtheorem}[Universal collapse for replacer alternatives]
\label{thm:replacer-universal-single-letter}
Let $\cN\in\CPTP(AE\!:\!B)$ and let
$\cR_\tau\in\CPTP(AE\!:\!B)$ be the replacer with output
$\tau_B\in\density(B)$.  Suppose that
$D_{\max}^{\uparrow}(\cN\|\cR_\tau)<\infty$.  Let $(\diamond)$ denote any
divergence index in Definition~\ref{def:resource-dependent-block-divergences},
with the underlying
quantum divergence chosen as the Umegaki relative entropy. Then, for every
$n\in\NN$,
\begin{align}
D^{(\diamond)}(\cN^{\ox n}\|\cR_\tau^{\ox n})
&=nD^{\uparrow,\downarrow}(\cN\|\cR_\tau),\qquad
D^{\infty,(\diamond)}(\cN\|\cR_\tau)
=D^{\uparrow,\downarrow}(\cN\|\cR_\tau).
\label{eq:replacer-universal-additivity}
\end{align}
Consequently, for every operational arrow code $(\star)$ in
Table~\ref{tab:finite-block-Stein-quantities}, the asymptotic Stein exponent has the same
single-letter value:
\begin{align}
&\lim_{\ve\to0^+}\liminf_{n\to\infty}
\frac1n\Stein_{\ve,n}^{(\star)}(\cN,\cR_\tau)
=
\lim_{\ve\to0^+}\limsup_{n\to\infty}
\frac1n\Stein_{\ve,n}^{(\star)}(\cN,\cR_\tau)
=D^{\uparrow,\downarrow}(\cN\|\cR_\tau).
\label{eq:replacer-universal-Stein-exponent}
\end{align}
\end{boxtheorem}

\begin{proof}
For every $\rho_n\in\density((AR)^n)$ and
$\omega_n\in\density(E^n)$, Eq.~\eqref{eq:replacer-fixed-output} gives
\begin{align}
(\cR_\tau^{\ox n}\ox\id_{R^n})(\rho_n\ox\omega_n)
=\tau_B^{\ox n}\ox(\rho_n)_{R^n}.
\label{eq:replacer-block-fixed-output}
\end{align}
Thus the alternative jammer input is irrelevant, and the common- and
separate-input variants have the same objective for every fixed tester input
and null-hypothesis jammer input.  At one channel use, we consequently have
\begin{align}
D^{\uparrow,\downarrow}(\cN\|\cR_\tau)
&=\sup_{\rho\in\density(AR)}\inf_{\sigma\in\density(E)}
D\!\left(\cN_\rho(\sigma)\middle\|\tau_B\ox\rho_R\right)
=\inf_{\sigma\in\density(E)}\sup_{\rho\in\density(AR)}
D\!\left(\cN_\rho(\sigma)\middle\|\tau_B\ox\rho_R\right).
\label{eq:replacer-one-shot-game-value}
\end{align}
The first equality is the definition of the maximin divergence after applying
Eq.~\eqref{eq:replacer-block-fixed-output}; the second is the minimax identity
in Proposition~\ref{prop: jammer divergence minimax}.

At blocklength $n$, the relevant resource domains satisfy
\begin{align}
\sI((AR)^n)&\subseteq\density((AR)^n),
\qquad
\sI(E^n)\subseteq\conv(\sI(E^n))\subseteq\density(E^n).
\end{align}
For every divergence index $(\diamond)$ in
Definition~\ref{def:resource-dependent-block-divergences}, domain monotonicity
and the minimax inequality give the universal sandwich
\begin{align}
D^{\emptyuparrow,\downarrow}(\cN^{\ox n}\|\cR_\tau^{\ox n})
&\le D^{(\diamond)}(\cN^{\ox n}\|\cR_\tau^{\ox n})
\le D^{\emptydownarrow,\uparrow}(\cN^{\ox n}\|\cR_\tau^{\ox n}),
\label{eq:replacer-resource-domain-sandwich}
\end{align}
where
\begin{align}
D^{\emptydownarrow,\uparrow}(\cN^{\ox n}\|\cR_\tau^{\ox n}) := \inf_{\sigma\in\density(E)}
\sup_{\rho_n\in\density((AR)^n)}
D\!\left(
(\cN^{\ox n}\ox\id_{R^n})(\rho_n\ox\sigma^{\ox n})
\middle\|
\tau_B^{\ox n}\ox(\rho_n)_{R^n}
\right)
\end{align}
is a variant of minimax divergence not explicitly defined in
Definition~\ref{def:resource-dependent-block-divergences}.

It remains to evaluate the two endpoints in
Eq.~\eqref{eq:replacer-resource-domain-sandwich}.  We first show that the
jammer-input and tester-input channel divergences are additive against replacers.
For arbitrary
$\cN_i\in\CPTP(E_i\!:\!B_i)$ and replacers with
outputs $\tau_i$, we have
\begin{align}
D^{\downarrow}(\cN_1\ox\cN_2\|\cR_{\tau_1}\ox\cR_{\tau_2})
&=D^{\downarrow}(\cN_1\|\cR_{\tau_1})
+D^{\downarrow}(\cN_2\|\cR_{\tau_2}).
\label{eq:replacer-jammer-input-additivity}
\end{align}
One direction of the equality is immediate from the definition of the jammer-input divergence.  The other direction follows from the inequality for the Umegaki relative entropy:
$D(\zeta_{B_1B_2}\|\tau_1\ox\tau_2)
\geq D(\zeta_{B_1}\|\tau_1)+D(\zeta_{B_2}\|\tau_2)$.
The tester-input divergence is also additive against replacers:
\begin{align}
D^{\uparrow}(\cN_1\ox\cN_2\|\cR_{\tau_1}\ox\cR_{\tau_2})
&=D^{\uparrow}(\cN_1\|\cR_{\tau_1})+D^{\uparrow}(\cN_2\|\cR_{\tau_2}).
\label{eq:replacer-tester-input-additivity}
\end{align}
Indeed, for any joint input and resulting output $\zeta_{RB_1B_2}$, the chain
rule gives
\begin{align}
D(\zeta_{RB_1B_2}\|\zeta_R\ox\tau_1\ox\tau_2)
&=D(\zeta_{RB_1B_2}\|\zeta_{RB_2}\ox\tau_1)
+D(\zeta_{RB_2}\|\zeta_R\ox\tau_2).
\end{align}
Regarding $RB_2$ as the reference for the first channel bounds the first term
by $D^{\uparrow}(\cN_1\|\cR_{\tau_1})$, and the second term is bounded by
$D^{\uparrow}(\cN_2\|\cR_{\tau_2})$.  Product inputs give the reverse
inequality.

For $\rho_{AR}\in\density(AR)$, let $\cN_\rho:E\to BR$ denote the induced
channel $\cN_\rho(X_E):=(\cN\ox\id_R)(\rho_{AR}\ox X_E)$.  Let
$\cR_{\tau\ox\rho_R}:E\to BR$ be the replacer with output
$\tau_B\ox\rho_R$.  For $\sigma_E\in\density(E)$, let
$\cN_\sigma:A\to B$ denote the induced channel
$\cN_\sigma(X_A):=\cN(X_A\ox\sigma_E)$.  In the latter comparison,
$\cR_\tau:A\to B$ denotes the replacer with output $\tau_B$.

The lower endpoint admits the chain
\begin{align}
D^{\emptyuparrow,\downarrow}(\cN^{\ox n}\|\cR_\tau^{\ox n})
&=\sup_{\rho\in\density(AR)}
D^{\downarrow}\!\left(
\cN_\rho^{\ox n}\middle\|\cR_{\tau\ox\rho_R}^{\ox n}
\right)\\
&=n\sup_{\rho\in\density(AR)}
D^{\downarrow}\!\left(
\cN_\rho\middle\|\cR_{\tau\ox\rho_R}
\right)\\
&=nD^{\uparrow,\downarrow}(\cN\|\cR_\tau).
\label{eq:replacer-lower-endpoint}
\end{align}
The first equality is by the
definition of $D^{\downarrow}$.  The second follows by iterating
Eq.~\eqref{eq:replacer-jammer-input-additivity}, and the third follows from
Remark~\ref{rem: jammer divergence sup of jammer-input}.

Similarly, the explicit upper bound admits the chain
\begin{align}
D^{\emptydownarrow,\uparrow}(\cN^{\ox n}\|\cR_\tau^{\ox n})
&=\inf_{\sigma\in\density(E)}
D^{\uparrow}\!\left(
\cN_\sigma^{\ox n}\middle\|\cR_\tau^{\ox n}
\right)\\
&=n\inf_{\sigma\in\density(E)}
D^{\uparrow}\!\left(
\cN_\sigma\middle\|\cR_\tau
\right)\\
&=nD^{\uparrow,\downarrow}(\cN\|\cR_\tau).
\label{eq:replacer-upper-endpoint}
\end{align}
The first equality is the definition of $D^{\uparrow}$.  The second follows by
iterating Eq.~\eqref{eq:replacer-tester-input-additivity}, and the third follows
from Remark~\ref{rem: jammer divergence sup of tester-input}.

Combining Eqs.~\eqref{eq:replacer-resource-domain-sandwich},
\eqref{eq:replacer-lower-endpoint}, and
\eqref{eq:replacer-upper-endpoint} proves the finite-block identity in
Eq.~\eqref{eq:replacer-universal-additivity} for all eleven divergences.
Dividing by $n$ and taking the regularizations proves the second identity.

Finally, Theorem~\ref{thm:unified-operational-correspondence} gives the
finite-block operational correspondences, while Theorems
\ref{thm:general-general-Stein},
\ref{thm:iid-tester-general-jammer-Stein},
\ref{thm: public hypo hypothesis-unaware iid}, and
\ref{thm: secret hypo hypothesis-aware iid} express the  asymptotic Stein exponents through the
corresponding Umegaki channel divergences.  Since all of these divergences are
covered by Eq.~\eqref{eq:replacer-universal-additivity}, they give
Eq.~\eqref{eq:replacer-universal-Stein-exponent} for all games.
\end{proof}

\begin{boxremark}[Implication in quantum illumination]
Quantum illumination aims to determine whether a distant object is present by using quantum light together with a quantum detection strategy~\cite{Lloyd2008QuantumIllumination}. Cooney, Mosonyi, and Wilde studied a variant of the quantum-illumination scenario that can be modeled as a channel-discrimination problem between a general channel and a replacer channel~\cite{cooney2016strong}. The minimax framework developed here, together with Theorem~\ref{thm:replacer-universal-single-letter}, extends this perspective to a more versatile setting in which a jammer may also participate and adversarially influence the channel. This captures practical scenarios where the sensing or communication environment is uncertain, contested, or deliberately manipulated. A full application to quantum illumination would require extending the present analysis to bosonic modes under energy constraints, which we leave for future work.
\end{boxremark}

\section{Strong-converse upgrades}
\label{sec:strong-converse-enhancement}

Establishing an exact asymptotic rate in hypothesis testing typically requires
two complementary arguments. Achievability shows that a target exponent can be
attained by a feasible sequence of tests, whereas the converse shows that no
larger exponent is possible. A strong converse gives a sharper threshold: any
attempt to operate above the threshold forces the type-I error to converge to
one. 

In this section, we develop a general technique that upgrades suitable
achievability statements into strong converses.
Section~\ref{subsec:information-spectrum-enhancement} establishes the
mathematical upgrade for state sequences and then extends it to sequences of
state sets.  We apply these results to an IID tester against a secret entangled
jammer in Section~\ref{subsec:strong-converse-iid-tester-general-jammer}
and to composite state hypotheses in
Section~\ref{subsec:composite-state-strong-converse}, including the
trivial-tester game.  In particular, the composite IID specialization resolves the
strong-converse question posed in~\cite[Remark~2.2]{berta2021composite} and strengthens a few results studied recently in~\cite{Lam25_Sanov}.

\subsection{General principles}
\label{subsec:information-spectrum-enhancement}

We first establish the principle for sequences of classical distributions in
Lemma~\ref{lem:classical-info-spectrum-strong-conv-general-scale}. We then use
the Nussbaum--Szkola distributions in
Lemma~\ref{lem:NS-positive-alternative-hypothesis-comparison} to lift the result
to general quantum state sequences in
Theorem~\ref{thm:quantum-info-spectrum-strong-conv-general-scale}. Finally, we
extend the result to sets of quantum states in
Theorem~\ref{thm:composite-state-strong-converse}.

\subsubsection*{Upgrade for classical sequences}

Let $\mathbf c=(c_k)_{k\in\NN}$ be positive with $c_k\to\infty$.
The usual blocklength setting has $c_k=k$, while a blocklength subsequence
$(n_k)_k$ is covered by $c_k=n_k$.  For state sequences
$\brho=(\rho_k)_k$ and positive semidefinite alternatives
$\bsigma=(\sigma_k)_k$, the vanishing-error and strong-converse exponents
are, respectively,
\begin{align}
\lim_{\eta\to0^+}\liminf_{k\to\infty}
\frac1{c_k}D_{\Hypo,\eta}(\rho_k\|\sigma_k),
\qquad\text{and}\qquad
\lim_{\eta\to1^-}\limsup_{k\to\infty}
\frac1{c_k}D_{\Hypo,\eta}(\rho_k\|\sigma_k).
\label{eq:testing-relative-entropy-endpoint-rates}
\end{align}
Appendix~\ref{sec:operational-testing-exponents} proves their exact
equivalence to the operational definitions in~\cite{NH07}, including the
one-sided error limits needed at fixed error.  Thus a Stein lemma with a
strong converse at rate $d$ is expressed by
$\lim_{k\to\infty}c_k^{-1}D_{\Hypo,\ve}(\rho_k\|\sigma_k)=d$
for every fixed $\ve\in(0,1)$.

We retain the shorthand for the upper relative entropy rate,
\begin{align}
\overline D_{\mathbf c}(\brho\|\bsigma)
&:=\limsup_{k\to\infty}\frac1{c_k}D(\rho_k\|\sigma_k),
&
\overline D_{\mathbf c}(\bp\|\bq)
&:=\limsup_{k\to\infty}\frac1{c_k}D(p_k\|q_k).
\label{eq:general-scale-relative-entropy-definitions}
\end{align}
The second expression is its classical counterpart for probability
distributions $\bp=(p_k)_k$ and finite nonnegative measures $\bq=(q_k)_k$.

\begin{boxlemma}[Strong converse upgrade for classical sequences]
\label{lem:classical-info-spectrum-strong-conv-general-scale}
Let $p_k$ be a probability distribution and $q_k$ a finite nonnegative
measure on a finite set $\Omega_k$, and let $c_k>0$ with $c_k\to\infty$.
Assume that
\begin{align}\label{eq: mass growth assumption}
\limsup_{k\to\infty}\frac1{c_k}\log q_k(\Omega_k)<\infty.
\end{align}
If the vanishing error exponent achieves the relative entropy rate
\begin{align}
\lim_{\eta\to0^+}\liminf_{k\to\infty}\frac1{c_k}
D_{\Hypo,\eta}(p_k\|q_k)\ge\overline D_{\mathbf c}(\bp\|\bq),
\label{eq:classical-testing-achievability}
\end{align}
then
\begin{align}
\lim_{k\to\infty}\frac1{c_k}D_{\Hypo,\ve}(p_k\|q_k)
=\overline D_{\mathbf c}(\bp\|\bq),
\qquad \forall\,\ve\in(0,1).
\label{eq:general-scale-classical-conclusion}
\end{align}
\end{boxlemma}

\begin{proof}
Set $d:=\overline D_{\mathbf c}(\bp\|\bq)$.
The log-sum inequality gives $D(p_k\|q_k)\ge-\log q_k(\Omega_k)$, with
$-\log0=+\infty$, so the mass-growth assumption in Eq.~\eqref{eq: mass growth assumption} excludes $d=-\infty$.
Monotonicity in the error parameter gives the lower bound for every fixed
$\ve\in(0,1)$ and proves the claim when $d=+\infty$.  We may therefore
assume $d\in\RR$.  

Suppose that the upper bound direction of Eq.~\eqref{eq:general-scale-classical-conclusion} fails for some $\ve$. That is, there exists $\ve$ such that
\begin{align}
  \limsup_{k\to\infty}\frac1{c_k}D_{\Hypo,\ve}(p_k\|q_k)
>\overline D_{\mathbf c}(\bp\|\bq).
\end{align}  
Define
\begin{align} 
  L_k(a):=\{x:p_k(x)>2^{c_ka}q_k(x)\}.
\end{align} 
By monotonicity and the upper
information-spectrum identity in
Eq.~\eqref{eq:testing-spectrum-upper-identity}, there is $b>d$ such that
\begin{align}
r:=\limsup_{k\to\infty}p_k(L_k(b))>0.
\end{align}
Choose $0<\delta<r(b-d)$ and partition $\Omega_k$ into
\begin{align}
\Omega_{1,k}:=L_k(b),\qquad
\Omega_{2,k}:=L_k(d-\delta)\setminus L_k(b),\qquad
\Omega_{3,k}:=L_k(d-\delta)^c.
\end{align}
By Eq.~\eqref{eq:classical-testing-achievability} and the lower
information-spectrum identity in
Eq.~\eqref{eq:testing-spectrum-lower-identity},
$p_k(\Omega_{3,k})\to0$.  Thus asymptotically no probability mass lies below
$d-\delta$, while a positive mass lies above $b>d$.

We may discard a finite initial segment so that $D(p_k\|q_k)$ is finite.
The likelihood-ratio bounds on the first two regions and the log-sum
inequality on the third give
\begin{align}
\frac1{c_k}D(p_k\|q_k)
&\ge b\,p_k(\Omega_{1,k})+(d-\delta)p_k(\Omega_{2,k})
+\frac{p_k(\Omega_{3,k})}{c_k}
\log\frac{p_k(\Omega_{3,k})}{q_k(\Omega_k)}
\\
&=d-\delta+p_k(\Omega_{1,k})(b-d+\delta)
\\
&\quad+p_k(\Omega_{3,k})\left[
\frac1{c_k}\log p_k(\Omega_{3,k})
-\frac1{c_k}\log q_k(\Omega_k)-(d-\delta)
\right].
\label{eq:classical-three-region-bound}
\end{align}
Here we use $0\log0=0$.  The last term has limit inferior at least zero:
$x\log x$ is bounded below on $[0,1]$, the positive part of
$c_k^{-1}\log q_k(\Omega_k)$ is bounded, and $p_k(\Omega_{3,k})\to0$.

Choose a subsequence $(k_j)_j$ along which $p_{k_j}(\Omega_{1,k_j})\to r$.
Then Eq.~\eqref{eq:classical-three-region-bound} yields
\begin{align}
d
&\ge\liminf_{j\to\infty}\frac1{c_{k_j}}D(p_{k_j}\|q_{k_j})
\ge d-\delta+r(b-d+\delta)>d.
\end{align}
The last inequality follows from $\delta<r(b-d)$ and $r>0$.
This contradiction proves the upper bound and hence the claim.
\end{proof}

\subsubsection*{Upgrade for quantum state sequences}

We associate quantum pairs with classical finite measures through the
Nussbaum--Szko\l a construction~\cite{Nussbaum2009}.
For $\rho\in\density(\cH)$ and $\sigma\in\PSD(\cH)$, consider their spectral decompositions
\begin{align}
\rho=\sum_jr_j\proj{e_j},\qquad
\sigma=\sum_\ell s_\ell\proj{f_\ell}.
\end{align}
The eigenvalues are repeated according to multiplicity, including zeros.
On the finite set
$\Omega_{\rho,\sigma}:=\{(j,\ell)\}$, define the Nussbaum--Szko\l a
measures
\begin{align}
P_{\rho,\sigma}(j,\ell)
&:=r_j\abs{\langle e_j|f_\ell\rangle}^2,
&
Q_{\rho,\sigma}(j,\ell)
&:=s_\ell\abs{\langle e_j|f_\ell\rangle}^2.
\end{align}
Direct substitution gives
\begin{align}
P_{\rho,\sigma}(\Omega_{\rho,\sigma})=1,\qquad
Q_{\rho,\sigma}(\Omega_{\rho,\sigma})=\tr\sigma,\qquad
D(P_{\rho,\sigma}\|Q_{\rho,\sigma})=D(\rho\|\sigma).
\label{eq:NS-positive-alternative-identities}
\end{align}

The next lemma is the key step in the quantum
strong-converse upgrade.  It follows from the one-shot Nussbaum--Szko\l a
converse bound of~\cite[Theorem~1]{lizarribar2026converse}.

\begin{boxlemma}
\label{lem:NS-positive-alternative-hypothesis-comparison}
For every $\rho\in\density(\cH)$,
$\sigma\in\PSD(\cH)$, and every $0<\delta<\ve<1$,
\begin{align}
D_{\Hypo,\ve}(P_{\rho,\sigma}\|Q_{\rho,\sigma})
\ge
D_{\Hypo,\ve-\delta}(\rho\|\sigma)
+\log\frac{\delta}{\ve}.
\label{eq:NS-positive-alternative-hypothesis-bound}
\end{align}
\end{boxlemma}

\begin{proof}
Set $t:=\tr\sigma$, $\widehat\sigma:=\sigma/t$, and
$\widehat Q:=Q_{\rho,\sigma}/t$.
Then $t>0$, and $\rho$ and $\widehat\sigma$ are normalized states.
Using the same eigenbases, their Nussbaum--Szko\l a distributions are
$P_{\rho,\sigma}$ and $\widehat Q$.
After complementing the test effect to match our convention,
\cite[Theorem~1]{lizarribar2026converse} gives
\begin{align}
\beta_a(\rho\|\widehat\sigma)
&\ge
s\,
\beta_{a/(1-s)}(P_{\rho,\sigma}\|\widehat Q),
\qquad
0\le s<1,\quad 0\le a\le1-s.
\label{eq:NS-normalized-beta-comparison}
\end{align}
For $a:=\ve-\delta$ and $s:=\delta/\ve$, we have
$a/(1-s)=\ve$ and $a=\ve(1-s)\le1-s$.  Therefore,
\begin{align}
\beta_{\ve-\delta}(\rho\|\sigma)
&=
t\,\beta_{\ve-\delta}(\rho\|\widehat\sigma)
\ge
\frac{\delta}{\ve}\,
t\,\beta_\ve(P_{\rho,\sigma}\|\widehat Q)
=
\frac{\delta}{\ve}\,
\beta_\ve(P_{\rho,\sigma}\|Q_{\rho,\sigma}).
\label{eq:NS-unnormalized-beta-comparison}
\end{align}
Here the equalities use the linear scaling of $\beta_\eta$ in its
alternative, and the inequality follows from
Eq.~\eqref{eq:NS-normalized-beta-comparison}.
Taking negative logarithms proves the claim.
\end{proof}

The decisive feature of
Lemma~\ref{lem:NS-positive-alternative-hypothesis-comparison} is that its
correction depends only on the two error parameters.  After normalization by
$c_k$, this correction vanishes.  We can
therefore lift the classical result in
Lemma~\ref{lem:classical-info-spectrum-strong-conv-general-scale}. 

\begin{boxtheorem}[Strong converse upgrade for state sequences]
\label{thm:quantum-info-spectrum-strong-conv-general-scale}
Let $\rho_k\in\density(\cH_k)$ and $\sigma_k\in\PSD(\cH_k)$, where each
$\cH_k$ is finite-dimensional, and let $c_k>0$ with $c_k\to\infty$.
Write $\brho=(\rho_k)_k$, $\bsigma=(\sigma_k)_k$, and
$\mathbf c=(c_k)_k$.  Assume that
\begin{align}
\limsup_{k\to\infty}\frac1{c_k}\log\tr\sigma_k<\infty.
\label{eq:quantum-sequence-trace-growth}
\end{align}
If the vanishing-error exponent reaches the upper relative entropy rate,
\begin{align}
\lim_{\eta\to0^+}\liminf_{k\to\infty}\frac1{c_k}
D_{\Hypo,\eta}(\rho_k\|\sigma_k)
\ge\overline D_{\mathbf c}(\brho\|\bsigma),
\label{eq:quantum-sequence-testing-achievability}
\end{align}
then
\begin{align}
\lim_{k\to\infty}\frac1{c_k}D_{\Hypo,\ve}(\rho_k\|\sigma_k)
=\overline D_{\mathbf c}(\brho\|\bsigma),
\qquad \forall\,\ve\in(0,1).
\label{eq:general-scale-quantum-conclusion}
\end{align}
\end{boxtheorem}

\begin{proof}
Set $d:=\overline D_{\mathbf c}(\brho\|\bsigma)$.
The trace bound and $D(\rho_k\|\sigma_k)\ge-\log\tr\sigma_k$ imply
$d>-\infty$.  Monotonicity in the error parameter gives
$\liminf_k c_k^{-1}D_{\Hypo,\ve}(\rho_k\|\sigma_k)\ge d$
for every fixed $\ve\in(0,1)$.  This proves the claim if $d=+\infty$,
so assume $d\in\RR$.

Set $p_k:=P_{\rho_k,\sigma_k}$ and $q_k:=Q_{\rho_k,\sigma_k}$ on
$\Omega_k:=\Omega_{\rho_k,\sigma_k}$.
By Eq.~\eqref{eq:NS-positive-alternative-identities},
\begin{align}
q_k(\Omega_k)=\tr\sigma_k,\qquad
\overline D_{\mathbf c}(\bp\|\bq)
=\overline D_{\mathbf c}(\brho\|\bsigma)=d.
\label{eq:general-scale-NS-Dbar}
\end{align}
For every $\eta\in(0,1)$,
Lemma~\ref{lem:NS-positive-alternative-hypothesis-comparison} with
$\delta=\eta/2$ gives
\begin{align}
\liminf_{k\to\infty}\frac1{c_k}D_{\Hypo,\eta}(p_k\|q_k)
&\ge\liminf_{k\to\infty}\frac1{c_k}
D_{\Hypo,\eta/2}(\rho_k\|\sigma_k)
\ge d.
\label{eq:quantum-to-classical-testing-achievability}
\end{align}
The constant correction vanishes after division by $c_k$; the last
inequality follows from Eq.~\eqref{eq:quantum-sequence-testing-achievability}.
Thus Lemma~\ref{lem:classical-info-spectrum-strong-conv-general-scale}
applies to $(p_k,q_k)$.

Fix $\ve\in(0,1)$ and choose $\ve'\in(\ve,1)$.
Applying the same one-shot comparison with $\delta=\ve'-\ve$ yields
\begin{align}
d
&\le\liminf_{k\to\infty}\frac1{c_k}
D_{\Hypo,\ve}(\rho_k\|\sigma_k)
\le\limsup_{k\to\infty}\frac1{c_k}
D_{\Hypo,\ve}(\rho_k\|\sigma_k)\le\lim_{k\to\infty}\frac1{c_k}
D_{\Hypo,\ve'}(p_k\|q_k)=d.
\label{eq:general-scale-fixed-error-limsup}
\end{align}
Again the correction vanishes because $c_k\to\infty$, and the final
equality is the classical strong converse.  This proves the fixed-error
limit directly.
\end{proof}

\subsubsection*{Upgrade for sets of quantum states}

Theorem~\ref{thm:quantum-info-spectrum-strong-conv-general-scale} extends
to sequences of sets by selecting a relative-entropy-minimizing pair at
each index.
For nonempty sets $\cP_k\subseteq\density(\cH_k)$ and
$\cQ_k\subseteq\PSD(\cH_k)$, we use the pairwise set divergences
\begin{align}
\DD(\cP_k\|\cQ_k)
&:=\inf_{\substack{\rho_k\in\cP_k\\\sigma_k\in\cQ_k}}
\DD(\rho_k\|\sigma_k),
\qquad
\DD\in\{D,D_{\Hypo,\ve}\}.
\label{eq:composite-state-pairwise-divergences}
\end{align}
For the sequences $\cP=(\cP_k)_{k\in\NN}$ and
$\cQ=(\cQ_k)_{k\in\NN}$ and a positive scale
$\mathbf c=(c_k)_{k\in\NN}$ with $c_k\to\infty$, define
\begin{align}
\overline D_{\mathbf c}(\cP\|\cQ)
&:=\limsup_{k\to\infty}\frac1{c_k}D(\cP_k\|\cQ_k).
\label{eq:composite-state-relative-entropy-rate}
\end{align}
For the standard scale $c_k=k$, we write
$\overline D^{\infty}(\cP\|\cQ)$, or $D^{\infty}(\cP\|\cQ)$ when the
limsup is a limit.

\begin{boxtheorem}[Strong-converse upgrade for set sequences]
\label{thm:composite-state-strong-converse}
Let $\mathbf c=(c_k)_{k\in\NN}$ with $c_k>0$ and $c_k\to\infty$.
For every $k\in\NN$, let $\cH_k$ be finite-dimensional, and let
$\cP_k\subseteq\density(\cH_k)$ and $\cQ_k\subseteq\PSD(\cH_k)$
be nonempty compact sets.  Assume that
\begin{align}
\limsup_{k\to\infty}\frac1{c_k}
\log\sup_{\sigma_k\in\cQ_k}\tr\sigma_k
&<\infty.
\label{eq:set-sequence-alternative-trace-growth}
\end{align}
If the vanishing-error exponent reaches the upper relative entropy rate,
\begin{align}
\lim_{\delta\to0^+}\liminf_{k\to\infty}\frac1{c_k}
D_{\Hypo,\delta}(\cP_k\|\cQ_k)
&\ge\overline D_{\mathbf c}(\cP\|\cQ),
\label{eq:set-sequence-upgrade-assumption}
\end{align}
then
\begin{align}
\lim_{k\to\infty}\frac1{c_k}D_{\Hypo,\ve}(\cP_k\|\cQ_k)
&=\overline D_{\mathbf c}(\cP\|\cQ),
\qquad \forall\,\ve\in(0,1).
\label{eq:set-sequence-fixed-error-strong-converse}
\end{align}
\end{boxtheorem}

\begin{proof}
Set $d:=\overline D_{\mathbf c}(\cP\|\cQ)$.
The trace-growth assumption and $D(\rho\|\sigma)\ge-\log\tr\sigma$
exclude $d=-\infty$.  Monotonicity in the error parameter gives
\begin{align}
\liminf_{k\to\infty}\frac1{c_k}D_{\Hypo,\eta}(\cP_k\|\cQ_k)
&\ge
\lim_{\delta\to0^+}\liminf_{k\to\infty}\frac1{c_k}
D_{\Hypo,\delta}(\cP_k\|\cQ_k)
\ge d,
\qquad \forall\,\eta\in(0,1).
\label{eq:composite-state-fixed-error-achievability}
\end{align}
This proves the claim when $d=+\infty$, so assume $d\in\RR$.

By compactness and joint lower semicontinuity of the relative entropy,
choose $\rho_k\in\cP_k$ and $\sigma_k\in\cQ_k$ such that
\begin{align}
D(\rho_k\|\sigma_k)
&=D(\cP_k\|\cQ_k),
&
\limsup_{k\to\infty}\frac1{c_k}D(\rho_k\|\sigma_k)
&=d.
\label{eq:composite-state-selected-relative-entropy-pair}
\end{align}
This choice is independent of the testing error.  Since
$D_{\Hypo,\eta}(\cP_k\|\cQ_k)\le D_{\Hypo,\eta}(\rho_k\|\sigma_k)$,
Eq.~\eqref{eq:composite-state-fixed-error-achievability} implies
\begin{align}
\lim_{\eta\to0^+}\liminf_{k\to\infty}\frac1{c_k}
D_{\Hypo,\eta}(\rho_k\|\sigma_k)
&\ge d.
\label{eq:composite-state-selected-achievability}
\end{align}
The selected pair thus satisfies the achievability premise of
Theorem~\ref{thm:quantum-info-spectrum-strong-conv-general-scale} at scale
$\mathbf c$, and its trace bound follows from
Eq.~\eqref{eq:set-sequence-alternative-trace-growth}.  That theorem gives
\begin{align}
\lim_{k\to\infty}\frac1{c_k}D_{\Hypo,\ve}(\rho_k\|\sigma_k)
&=d,
\qquad \forall\,\ve\in(0,1).
\label{eq:composite-state-selected-fixed-error-converse}
\end{align}
Consequently,
\begin{align}
d
&\le\liminf_{k\to\infty}\frac1{c_k}D_{\Hypo,\ve}(\cP_k\|\cQ_k)
\le\limsup_{k\to\infty}\frac1{c_k}D_{\Hypo,\ve}(\cP_k\|\cQ_k)
\le\lim_{k\to\infty}\frac1{c_k}D_{\Hypo,\ve}(\rho_k\|\sigma_k)
=d,
\label{eq:composite-state-fixed-error-equality-chain}
\end{align}
which proves Eq.~\eqref{eq:set-sequence-fixed-error-strong-converse}.
\end{proof}

\subsection{Upgrade for the minimax game in Section~\ref{sec:iid-tester-unrestricted-jammer}}
\label{subsec:strong-converse-iid-tester-general-jammer}
\label{subsec:unrestricted-jammer-converse-detailed}

We now apply Theorem~\ref{thm:quantum-info-spectrum-strong-conv-general-scale}
to strengthen the two cases in
Theorem~\ref{thm:iid-tester-general-jammer-Stein} to strong-converse results. The
main challenge is to construct a sequence of states that is independent of the
vanishing-error parameter $\ve$ while still achieving the asymptotic relative-entropy
rate, so that Theorem~\ref{thm:quantum-info-spectrum-strong-conv-general-scale}
can be applied. The achievability result in Theorem~\ref{thm:iid-tester-general-jammer-Stein}
is insufficient because the optimal state sequence obtained from the hypothesis-testing
relative entropy may depend on the error parameter. We therefore establish a
more suitable achievability result, which enables the application of
Theorem~\ref{thm:quantum-info-spectrum-strong-conv-general-scale}.

\begin{boxtheorem}[IID tester
against secret entangled jammer]
\label{thm: secret hypo hypothesis-aware iid vs general}
\label{cor: secret hypo hypothesis-unaware iid vs general}
Let $\cN\in\CPTP(AE\!:\!B)$ and $\cM\in\CP(AE\!:\!B)$ satisfy
$D_{\max}^{\uparrow}(\cN\|\cM)<\infty$. Then for $(\star)\in
\bigl\{
(\emptyuparrow,\downarrow),
(\emptyuparrow,\downarrow\downarrow)
\bigr\}$,
\begin{align}
\lim_{n\to\infty}
\frac1n
\Stein_{\ve,n}^{(\star)}(\cN,\cM)
=
D^{\infty,\emptyuparrow,\downarrow\downarrow}
(\cN\|\cM),\qquad \forall \ve\in(0,1).
\end{align}
\end{boxtheorem}

\begin{proof}
Theorem~\ref{thm:unified-operational-correspondence} gives, for every $n$,
\begin{align}
\Stein_{\ve,n}^{\emptyuparrow,\downarrow}(\cN,\cM)
=\Stein_{\ve,n}^{\emptyuparrow,\downarrow\downarrow}(\cN,\cM)
=D_{\Hypo,\ve}^{\emptyuparrow,\downarrow\downarrow}(\cN^{\ox n}\|\cM^{\ox n}).
\end{align}
It suffices to prove 
\begin{align}
  L_\ve:= \limsup_{n\to\infty}\frac1n
  D_{\Hypo,\ve}^{\emptyuparrow,\downarrow\downarrow}
  (\cN^{\ox n}\|\cM^{\ox n})
  \le
  D^{\infty,\emptyuparrow,\downarrow\downarrow}(\cN\|\cM),\qquad \forall \ve\in(0,1).
\end{align}

\prooftag{State subsequence construction.}
Choose $\kappa$ and $\mu$ as in
Lemma~\ref{lem:uniform-block-bounds} and a similar analysis therein implies that $L_\ve$ is finite.  For every $n$, choose
$\rho_{n,\ve}\in\density(AR)$ such that
\begin{align}
D_{\Hypo,\ve}^{\downarrow\downarrow}
(\cN_{\rho_{n,\ve}}^{\ox n}\|\cM_{\rho_{n,\ve}}^{\ox n})
\ge
D_{\Hypo,\ve}^{\emptyuparrow,\downarrow\downarrow}
(\cN^{\ox n}\|\cM^{\ox n})-1.
\end{align}

Choose an increasing sequence $n_k\to\infty$ realizing $L_\ve$.
After passing to a further subsequence, compactness of $\density(AR)$ gives
\begin{align}
\rho_{n_k,\ve}\longrightarrow\rho_0,
\qquad \text{and} \qquad
\frac1{n_k}
D_{\Hypo,\ve}^{\downarrow\downarrow}
(\cN_{\rho_{n_k,\ve}}^{\ox n_k}
 \|\cM_{\rho_{n_k,\ve}}^{\ox n_k})
\longrightarrow L_\ve.
\label{eq:strong-converse-selected-tester-subsequence}
\end{align}
This subsequence is now fixed; in particular, it will not depend on the
auxiliary error introduced next.

For every $k$, compactness and lower semicontinuity provide jammer states
$\sigma_{n_k},\omega_{n_k}$ such that
\begin{align}
D^{\downarrow\downarrow}
(\cN_{\rho_{n_k,\ve}}^{\ox n_k}
 \|\cM_{\rho_{n_k,\ve}}^{\ox n_k}) = 
D
(\cN_{\rho_{n_k,\ve}}^{\ox n_k}(\sigma_{n_k})
 \|\cM_{\rho_{n_k,\ve}}^{\ox n_k}(\omega_{n_k})).
\end{align}
Define two sequences of states
\begin{align}
\widehat\rho_k
&:=
\cN^{\ox n_k}
(\rho_{n_k,\ve}^{\ox n_k}\ox\sigma_{n_k}),
\qquad \text{and} \qquad
\widehat\sigma_k
:=
\cM^{\ox n_k}
(\rho_{n_k,\ve}^{\ox n_k}\ox\omega_{n_k}).
\end{align}
We aim to show that for every $\eta\in(0,1)$,
\begin{align}
\liminf_{k\to\infty}\frac1{n_k}
D_{\Hypo,\eta}(\widehat\rho_k\|\widehat\sigma_k)
\ge
\limsup_{k\to\infty}\frac1{n_k}
D(\widehat\rho_k\|\widehat\sigma_k).
\label{eq:strong-converse-output-information-spectrum-premise}
\end{align}
By their definitions, we have
\begin{align}
D(\widehat\rho_k\|\widehat\sigma_k)
&=
D^{\downarrow\downarrow}
(\cN_{\rho_{n_k,\ve}}^{\ox n_k}
 \|\cM_{\rho_{n_k,\ve}}^{\ox n_k}),
\\
D_{\Hypo,\eta}(\widehat\rho_k\|\widehat\sigma_k)
&\ge
D_{\Hypo,\eta}^{\downarrow\downarrow}
(\cN_{\rho_{n_k,\ve}}^{\ox n_k}
 \|\cM_{\rho_{n_k,\ve}}^{\ox n_k}).
\end{align}
So it suffices to show that
\begin{align}
&\liminf_{k\to\infty}\frac1{n_k}
D_{\Hypo,\eta}^{\downarrow\downarrow}
(\cN_{\rho_{n_k,\ve}}^{\ox n_k}
 \|\cM_{\rho_{n_k,\ve}}^{\ox n_k})
\ge
\limsup_{k\to\infty}\frac1{n_k}
D^{\downarrow\downarrow}
(\cN_{\rho_{n_k,\ve}}^{\ox n_k}
 \|\cM_{\rho_{n_k,\ve}}^{\ox n_k}).
\end{align}

\prooftag{Achievability for the fixed tester sequence.} 
Fix $\eta\in(0,1)$, $m\in\NN$, and $\alpha\in(1/2,1)$.  Write
$n_k=q_km+r_k$, where $0\le r_k<m$, and set
\begin{align}
\cE_k:=\cN_{\rho_{n_k,\ve}},\qquad
\cF_k:=\cM_{\rho_{n_k,\ve}},\qquad
c_{\alpha,\eta}:=\frac{\alpha}{\alpha-1}\log\frac1\eta.
\end{align}
We interpret the divergence of the remainder block as zero when $r_k=0$.
Then
\begin{align}
&\liminf_{k\to\infty}\frac1{n_k}
D_{\Hypo,\eta}^{\downarrow\downarrow}
(\cE_k^{\ox n_k}\|\cF_k^{\ox n_k})
\\
&\ge
\liminf_{k\to\infty}
\left[
\frac1{n_k}D_{\Meas,\alpha}^{\downarrow\downarrow}
(\cE_k^{\ox n_k}\|\cF_k^{\ox n_k})
+\frac{c_{\alpha,\eta}}{n_k}
\right]
\\
&\ge
\liminf_{k\to\infty}
\left[
\frac{q_k}{n_k}D_{\Meas,\alpha}^{\downarrow\downarrow}
(\cE_k^{\ox m}\|\cF_k^{\ox m})
+\frac1{n_k}D_{\Meas,\alpha}^{\downarrow\downarrow}
(\cE_k^{\ox r_k}\|\cF_k^{\ox r_k})
+\frac{c_{\alpha,\eta}}{n_k}
\right]
\\
&\ge
\liminf_{k\to\infty}
\left[
\frac{q_k}{n_k}D_{\Meas,\alpha}^{\downarrow\downarrow}
(\cE_k^{\ox m}\|\cF_k^{\ox m})
-\frac{r_k}{n_k}\log\mu
+\frac{c_{\alpha,\eta}}{n_k}
\right]
\\
&\ge
\frac1mD_{\Meas,\alpha}^{\downarrow\downarrow}
(\cN_{\rho_0}^{\ox m}\|\cM_{\rho_0}^{\ox m}).
\label{eq:strong-converse-common-rate-comparison}
\end{align}
The first inequality follows from the one-shot lower bound in
Lemma~\ref{lem:one-shot-hypothesis-testing-bounds} and
$D_{\Meas,\alpha}\le D_{\Sand,\alpha}$.  The second follows by iterating the
measured-R\'enyi chain rule~\cite[Lemma~2]{fang2025adversarial}; this is done
pointwise in $k$, so $(\cE_k,\cF_k)$ is fixed throughout each application.
The third inequality uses the uniform lower bound in
Eq.~\eqref{eq:fixed-block-uniform-finite-bounds} for the remainder block.
For the last inequality, Proposition~18(iii) of~\cite{mosonyi2023some} gives
joint lower semicontinuity of $D_{\Meas,\alpha}^{\downarrow\downarrow}$.  The induced outputs depend
continuously on $(\rho,\sigma,\omega)$; the first is a state, and the second
is nonzero by the completely positive domination established in the proof of
Lemma~\ref{lem:uniform-block-bounds}.  Since
$\density(E^m)\times\density(E^m)$ is fixed and compact, the minimum is the
negative of the supremum of the negative objective.  Lemma~3(ii) therein thus
shows that
$\rho\mapsto D_{\Meas,\alpha}^{\downarrow\downarrow}
(\cN_\rho^{\ox m}\|\cM_\rho^{\ox m})$ is lower semicontinuous.  Combining
this with
$q_k/n_k\to1/m$, $r_k/n_k\to0$, $c_{\alpha,\eta}/n_k\to0$, and
$\rho_{n_k,\ve}\to\rho_0$ proves the last inequality above.

Taking the supremum over
$\alpha\in(1/2,1)$ and applying the same measured-to-Umegaki comparison as in
the proof of Eq.~\eqref{eq:detailed-full-jammer-achievability}, we have
\begin{align}
&\liminf_{k\to\infty}\frac1{n_k}
D_{\Hypo,\eta}^{\downarrow\downarrow}
(\cN_{\rho_{n_k,\ve}}^{\ox n_k}
 \|\cM_{\rho_{n_k,\ve}}^{\ox n_k})
\ge
D^{\infty,\downarrow\downarrow}
(\cN_{\rho_0}\|\cM_{\rho_0}).\label{eq: DH rho0}
\end{align}

For the converse comparison, fix $\ell\in\NN$ and block
$n_k=a_k\ell+s_k$, where $0\le s_k<\ell$.  We again interpret the remainder
divergence as zero when $s_k=0$.  Then
\begin{align}
&\limsup_{k\to\infty}\frac1{n_k}
D^{\downarrow\downarrow}
(\cE_k^{\ox n_k}\|\cF_k^{\ox n_k})\notag
\\
&\le
\limsup_{k\to\infty}
\left[
\frac{a_k}{n_k}D^{\downarrow\downarrow}
(\cE_k^{\ox\ell}\|\cF_k^{\ox\ell})
+\frac1{n_k}D^{\downarrow\downarrow}
(\cE_k^{\ox s_k}\|\cF_k^{\ox s_k})
\right]
\\
&\le
\limsup_{k\to\infty}
\left[
\frac{a_k}{n_k}D^{\downarrow\downarrow}
(\cE_k^{\ox\ell}\|\cF_k^{\ox\ell})
+\frac{s_k}{n_k}\log\kappa
\right]
\\
&=
\frac1\ell D^{\downarrow\downarrow}
(\cN_{\rho_0}^{\ox \ell}\|\cM_{\rho_0}^{\ox \ell}).
\end{align}
The first inequality follows by iterating the fixed-tester subadditivity in
Eq.~\eqref{eq:fixed-tester-subadditivity-general-jammer} for the pair
$(\cE_k,\cF_k)$ at each fixed $k$.  The second uses the upper bound in
Eq.~\eqref{eq:fixed-block-uniform-finite-bounds} on the remainder block.
Finally, the equality follows from $a_k/n_k\to1/\ell$,
$s_k/n_k\to0$, $\rho_{n_k,\ve}\to\rho_0$, and the fixed-blocklength
continuity in Lemma~\ref{lem:fixed-block-continuity-general-jammer}.
Taking the infimum over $\ell$ and applying Fekete's lemma yields
\begin{align}
\limsup_{k\to\infty}\frac1{n_k}
D^{\downarrow\downarrow}
(\cN_{\rho_{n_k,\ve}}^{\ox n_k}
 \|\cM_{\rho_{n_k,\ve}}^{\ox n_k})
\le D^{\infty,\downarrow\downarrow}(\cN_{\rho_0}\|\cM_{\rho_0}).\label{eq:strong-converse-selected-tester-subsequence 1}
\end{align}

Combining Eqs.~\eqref{eq: DH rho0} and~\eqref{eq:strong-converse-selected-tester-subsequence 1}
gives
\begin{align}
&\liminf_{k\to\infty}\frac1{n_k}
D_{\Hypo,\eta}^{\downarrow\downarrow}
(\cN_{\rho_{n_k,\ve}}^{\ox n_k}
 \|\cM_{\rho_{n_k,\ve}}^{\ox n_k})
\ge
\limsup_{k\to\infty}\frac1{n_k}
D^{\downarrow\downarrow}
(\cN_{\rho_{n_k,\ve}}^{\ox n_k}
 \|\cM_{\rho_{n_k,\ve}}^{\ox n_k}).
\label{eq:strong-converse-common-sequence-comparison}
\end{align}

\prooftag{Strong converse upgrade.}
Use the scale $c_k=n_k$ and write
$\widehat{\brho}:=(\widehat\rho_k)_k$ and
$\widehat{\bsigma}:=(\widehat\sigma_k)_k$.  Here $c_k\to\infty$,
$\widehat\rho_k$ is a state because $\cN$ is trace preserving, and
$\widehat\sigma_k$ is positive semidefinite because $\cM$ is completely
positive.  The divergence bounds in
Lemma~\ref{lem:uniform-block-bounds} and the trace estimate in its proof
give
\begin{align}
-n_k\log\mu
&\le D(\widehat\rho_k\|\widehat\sigma_k)
\le n_k\log\kappa,
\\
\kappa^{-n_k}
&\le\tr\widehat\sigma_k\le\mu^{n_k}.
\end{align}
Hence
$\overline D_{\mathbf c}
(\widehat{\brho}\|\widehat{\bsigma})$ is finite, and
$\limsup_k n_k^{-1}\log\tr\widehat\sigma_k<\infty$.
Equation~\eqref{eq:strong-converse-output-information-spectrum-premise}
verifies the achievability premise of
Theorem~\ref{thm:quantum-info-spectrum-strong-conv-general-scale}.
Applying that theorem at the original error $\ve$ gives
\begin{align}
\lim_{k\to\infty}\frac1{n_k}
D_{\Hypo,\ve}(\widehat\rho_k\|\widehat\sigma_k)
=
\overline D_{\mathbf c}
(\widehat{\brho}\|\widehat{\bsigma}). \label{eq: DH Dc tmp}
\end{align}

Combining the preceding estimates, we obtain
\begin{align}
L_\ve
&\le
\lim_{k\to\infty}\frac1{n_k}
D_{\Hypo,\ve}(\widehat\rho_k\|\widehat\sigma_k)
=
\overline D_{\mathbf c}
(\widehat{\brho}\|\widehat{\bsigma})
=
\limsup_{k\to\infty}\frac1{n_k}
D(\widehat\rho_k\|\widehat\sigma_k)\notag
\\
&=
\limsup_{k\to\infty}\frac1{n_k}
D^{\downarrow\downarrow}
(\cN_{\rho_{n_k,\ve}}^{\ox n_k}
 \|\cM_{\rho_{n_k,\ve}}^{\ox n_k})
\le
D^{\infty,\downarrow\downarrow}
(\cN_{\rho_0}\|\cM_{\rho_0})
\le
D^{\infty,\emptyuparrow,\downarrow\downarrow}(\cN\|\cM).
\label{eq:strong-converse-final-rate-chain}
\end{align}
The first inequality follows from
Eq.~\eqref{eq:strong-converse-selected-tester-subsequence}, because the
chosen jammer states are feasible for the jammer-input hypothesis-testing
divergence and hence
$D_{\Hypo,\ve}(\widehat\rho_k\|\widehat\sigma_k)$ is no smaller than that
jammer-input value.  The first equality is
Eq.~\eqref{eq: DH Dc tmp}; the next two use the definition of
$\overline D_{\mathbf c}$ and the choice of
$(\sigma_{n_k},\omega_{n_k})$ as a minimizing jammer pair.
The next inequality follows from
Eq.~\eqref{eq:strong-converse-selected-tester-subsequence 1}.
Finally, the last inequality holds blockwise because the tester
supremum includes the fixed state $\rho_0$; regularization preserves this
inequality.  This completes the proof.
\end{proof}

\subsection{Upgrade for composite state hypotheses}
\label{subsec:composite-state-strong-converse}

We now apply the set-sequence upgrade in
Theorem~\ref{thm:composite-state-strong-converse} to composite hypothesis
testing. This yields strong converse upgrades of results due to
Berta, Brand\~ao, and Hirche~\cite{berta2021composite} and
Lami~\cite{Lam25_Sanov}, and recovers the trivial-tester channel discrimination game as
a specialization.

For a sequence of nonempty compact sets in finite-dimensional spaces,
$\cC=(\cC_n)_{n\in\NN}$, write
\begin{align}
\conv(\cC)
&:=\bigl(\conv(\cC_n)\bigr)_{n\in\NN}.
\label{eq:composite-state-convexified-sequence}
\end{align}
For every $n$, let $\cH_n$ be finite-dimensional and let
$\cP_n\subseteq\density(\cH_n)$ and $\cQ_n\subseteq\PSD(\cH_n)$
be nonempty compact sets.  Write $\cP=(\cP_n)_{n\in\NN}$ and
$\cQ=(\cQ_n)_{n\in\NN}$.

For $\ve\in(0,1)$, define the composite hypothesis-testing exponent by
\begin{align}
\Stein_{\ve,n}(\cP,\cQ)
&:=-\log
\inf_{\substack{T_n\in\sE(\cH_n)\\
\sup_{\rho_n\in\cP_n}\tr[\rho_n(I-T_n)]\le\ve}}
\ \sup_{\sigma_n\in\cQ_n}\tr[\sigma_nT_n].
\label{eq:composite-state-fixed-error-Stein-exponent}
\end{align}
Taking convex hulls leaves both worst-case testing functionals
unchanged.  The composite hypothesis-testing identity
of~\cite[Lemma 31]{fang2024generalized} gives
\begin{align}
\Stein_{\ve,n}(\cP,\cQ)
&=D_{\Hypo,\ve}\bigl(\conv(\cP_n)\|\conv(\cQ_n)\bigr).
\label{eq:composite-state-testing-identity}
\end{align}
Thus the operational exponent is the pairwise divergence of the convex
hulls. The following is a direct consequence of Theorem~\ref{thm:composite-state-strong-converse}.

\begin{boxcorollary}[Strong-converse upgrade for composite hypotheses]
\label{cor:composite-state-operational-upgrade}
For every $n\in\NN$, let $\cH_n$ be finite-dimensional and let
$\cP_n\subseteq\density(\cH_n)$ and $\cQ_n\subseteq\PSD(\cH_n)$
be nonempty compact sets.  Write $\cP=(\cP_n)_{n\in\NN}$ and
$\cQ=(\cQ_n)_{n\in\NN}$.  Assume that
\begin{align}
\limsup_{n\to\infty}\frac1n
\log\sup_{\sigma_n\in \cQ_n}\tr\sigma_n
&<\infty.
\label{eq:composite-state-alternative-trace-growth}
\end{align}
If the vanishing-error exponent reaches the upper relative entropy rate,
\begin{align}
\lim_{\delta\to0^+}\liminf_{n\to\infty}\frac1n
\Stein_{\delta,n}(\cP,\cQ)
&\ge\overline D^{\infty}(\conv(\cP)\|\conv(\cQ)),
\label{eq:composite-state-upgrade-assumption}
\end{align}
then
\begin{align}
\lim_{n\to\infty}\frac1n
\Stein_{\ve,n}(\cP,\cQ)
&=\overline D^{\infty}(\conv(\cP)\|\conv(\cQ)),
\qquad \forall\,\ve\in(0,1).
\label{eq:composite-state-fixed-error-strong-converse}
\end{align}
\end{boxcorollary}

The applications below use this corollary with different choices of
composite hypotheses.

\subsubsection*{Composite IID hypotheses and the trivial-tester specialization}

Let $\cH$ be finite-dimensional.  For a nonempty closed set
$\cC\subseteq\density(\cH)$, define its composite IID and arbitrarily varying
families by
\begin{align}
\cC_n^{\mathsf{IID}}
&:=\bigl\{\rho^{\ox n}:\rho\in\cC\bigr\},
&
\cC_n^{\mathsf{AV}}
&:=\bigl\{\rho_1\ox\cdots\ox\rho_n:
\rho_1,\ldots,\rho_n\in\cC\bigr\}.
\label{eq:composite-IID-AV-families}
\end{align}
We write $\cC^b:=(\cC_n^b)_{n\in\NN}$ for
$b\in\{\mathsf{IID},\mathsf{AV}\}$.
Thus $\conv(\cC^b)$ denotes the sequence of convex hulls of the
$n$-copy sets $\cC_n^b$, rather than the family $(\conv(\cC))^b$.

We first obtain a fixed-error strengthening of the composite IID Stein lemma
in~\cite{berta2021composite}.

\begin{boxcorollary}[Composite IID strong converse]
\label{cor:BBH-composite-state-strong-converse}
Let $\cS,\cT\subseteq\density(\cH)$ be nonempty closed convex sets.
Then the relative entropy rate is an ordinary limit and satisfies
\begin{align}
D^\infty\bigl(\conv(\cS^{\mathsf{IID}})\|
\conv(\cT^{\mathsf{IID}})\bigr)
&=\lim_{n\to\infty}\frac1n
\inf_{\substack{\rho\in\cS\\
\sigma_n\in\conv(\cT_n^{\mathsf{IID}})}}
D(\rho^{\ox n}\|\sigma_n).
\label{eq:BBH21-rate-as-composite-special-case}
\end{align}
For every fixed $\ve\in(0,1)$,
\begin{align}
\lim_{n\to\infty}\frac1n
\Stein_{\ve,n}(\cS^{\mathsf{IID}},\cT^{\mathsf{IID}})
&=D^\infty\bigl(\conv(\cS^{\mathsf{IID}})\|
\conv(\cT^{\mathsf{IID}})\bigr).
\label{eq:BBH21-composite-state-fixed-error-strong-converse}
\end{align}
\end{boxcorollary}

\begin{proof}
The achievability bound in~\cite[Lemmas~2.3--2.4 and Eq.~(49)]{berta2021composite}
gives Eq.~\eqref{eq:composite-state-upgrade-assumption} with
$\cP=\cS^{\mathsf{IID}}$ and $\cQ=\cT^{\mathsf{IID}}$.
Proposition~2.1 and Lemma~2.5 of the same reference, combined as in
Eqs.~(56)--(59) therein, show that the relative entropy rate is an ordinary
limit and establish Eq.~\eqref{eq:BBH21-rate-as-composite-special-case}.
In particular, Lemma~2.5 removes convexification of the IID null states
at a cost of order $\log n/n$ in the normalized divergence.

These steps do not require the pairwise support assumption imposed in
\cite[Theorem~1.1]{berta2021composite}.
The classical achievability step in Lemma~2.3 and the minimax identity in
Lemma~A.2 accommodate infinite relative entropies; Lemma~2.4 explicitly
covers pairs without support inclusion.  The weak-converse and
convexification bounds also remain valid with infinite values.

Since all alternative hypotheses are normalized, the trace-growth condition
is automatic.  Corollary~\ref{cor:composite-state-operational-upgrade}
therefore gives Eq.~\eqref{eq:BBH21-composite-state-fixed-error-strong-converse},
resolving the strong-converse question posed in
\cite[Remark~2.2]{berta2021composite}.
\end{proof}

\begin{boxcorollary}[Trivial tester against secret IID jammer]
\label{cor:trivial-tester-IID-jammer-strong-converse}
Let $\cN\in\CPTP(E\!:\!B)$ and $\cM\in\CP(E\!:\!B)$ satisfy
$D_{\max}^{\uparrow}(\cN\|\cM)<\infty$.  Then, for
$(\star)\in\bigl\{\emptydownarrow,
\emptydownarrow\emptydownarrow\bigr\}$ and every $\ve\in(0,1)$,
\begin{align}
\lim_{n\to\infty}\frac1n\Stein_{\ve,n}^{(\star)}(\cN,\cM)
&=D^{\infty,\convdownarrow\convdownarrow}(\cN\|\cM).
\label{eq:trivial-tester-IID-jammer-strong-converse}
\end{align}
\end{boxcorollary}

\begin{proof}
For every $n$, define the channel-generated composite hypotheses
\begin{align}
\cP_n
&:=\cN^{\ox n}\bigl(\sI(E^n)\bigr),
&
\cQ_n
&:=\cM^{\ox n}\bigl(\sI(E^n)\bigr).
\label{eq:trivial-tester-channel-generated-hypotheses}
\end{align}
The null set consists of states, the alternative set consists of positive
semidefinite operators, and both sets are compact.  By linearity,
\begin{align}
\conv(\cP_n)
&=\cN^{\ox n}\bigl(\conv(\sI(E^n))\bigr),
&
\conv(\cQ_n)
&=\cM^{\ox n}\bigl(\conv(\sI(E^n))\bigr).
\label{eq:trivial-tester-channel-generated-convex-hulls}
\end{align}
Let $\mu:=\|\cM^\dagger(I_B)\|_\infty$.  The finiteness assumption implies
$\mu>0$, and every $\sigma_n\in\conv(\cQ_n)$ satisfies
\begin{align}
\tr\sigma_n&\le\mu^n.
\end{align}
Thus Eq.~\eqref{eq:composite-state-alternative-trace-growth} holds.

Since the tester input is trivial,
Theorem~\ref{thm:unified-operational-correspondence} gives, for either choice
of $(\star)$,
\begin{align}
\Stein_{\ve,n}^{(\star)}(\cN,\cM)
&=D_{\Hypo,\ve}^{\convdownarrow\convdownarrow}
(\cN^{\ox n}\|\cM^{\ox n})
=\Stein_{\ve,n}(\cP,\cQ).
\label{eq:trivial-tester-composite-operational-identity}
\end{align}
Moreover, the limit-existence argument in the proof of
Theorem~\ref{thm: secret hypo hypothesis-aware iid} gives
\begin{align}
\overline D^{\infty}(\conv(\cP)\|\conv(\cQ))
&=D^{\infty}(\conv(\cP)\|\conv(\cQ))
=D^{\infty,\convdownarrow\convdownarrow}(\cN\|\cM).
\label{eq:trivial-tester-composite-relative-entropy-rate}
\end{align}
The vanishing-error result in that theorem, specialized to a trivial tester,
verifies Eq.~\eqref{eq:composite-state-upgrade-assumption}.  The operational
upgrade in Eq.~\eqref{eq:composite-state-fixed-error-strong-converse}, together
with Eq.~\eqref{eq:trivial-tester-composite-operational-identity}, proves the claim.
\end{proof}

\subsubsection*{Correlated and arbitrarily varying hypotheses}

The same parent theorem also upgrades the composite Stein exponents identified
by Lami~\cite{Lam25_Sanov}. 

\begin{boxcorollary}[Correlated and IID/AV composite hypotheses]
\label{cor:Lami-composite-state-strong-converse}
Let $\cB\subseteq\density(\cH)$ be nonempty and closed, and let
$\cA=(\cA_n)_{n\in\NN}$ be a sequence of nonempty closed convex sets
$\cA_n\subseteq\density(\cH^{\ox n})$ satisfying Axioms Q.I--Q.IV
of~\cite[Section~2]{Lam25_Sanov}.  Then, for every
$b\in\{\mathsf{IID},\mathsf{AV}\}$ and every fixed $\ve\in(0,1)$,
\begin{align}
\lim_{n\to\infty}\frac1n
\Stein_{\ve,n}(\cA,\cB^b)
&=D^{\infty}(\cA\|\conv(\cB^b)).
\label{eq:Lami-correlated-fixed-error-strong-converse}
\end{align}
Moreover, for any nonempty closed sets
$\cS,\cT\subseteq\density(\cH)$, any
$a,b\in\{\mathsf{IID},\mathsf{AV}\}$, and every fixed $\ve\in(0,1)$,
\begin{align}
\lim_{n\to\infty}\frac1n
\Stein_{\ve,n}(\cS^a,\cT^b)
&=D^{\infty}(\conv(\cS^a)\|\conv(\cT^b)).
\label{eq:composite-state-IID-AV-fixed-error-strong-converse}
\end{align}
\end{boxcorollary}

\begin{proof}
By~\cite[Eq.~(87)]{Lam25_Sanov}, the vanishing-error exponent equals the
relative entropy rate in
Eq.~\eqref{eq:Lami-correlated-fixed-error-strong-converse}.
For an IID null hypothesis, \cite[Eq.~(116)]{Lam25_Sanov} gives the
corresponding identity in
Eq.~\eqref{eq:composite-state-IID-AV-fixed-error-strong-converse}.  For an
arbitrarily varying null hypothesis, the discussion following
Eq.~(111) in~\cite[Section~4.4]{Lam25_Sanov} verifies Axioms Q.I--Q.IV for
$\conv(\cS^{\mathsf{AV}})$, so Eq.~(87) therein applies.
The alternative hypotheses are normalized, so
the trace-growth condition is automatic.  In every case,
Eq.~\eqref{eq:composite-state-fixed-error-strong-converse} gives the asserted
fixed-error limit.
\end{proof}

\section{Conclusion}
\label{sec:conclusion}

We have developed a game-theoretic framework for binary quantum channel
discrimination in which a tester and a jammer independently control two input
systems of the channel.  By combining three input structures with four
information patterns, the framework yields twelve operational games
that interpolate between tester-input and jammer-input discrimination.  We
characterized the asymptotic Stein exponent for all twelve games through maximin and minimax channel divergences.  The results show
that an entangled jammer erases the asymptotic effect of its visibility and
hypothesis-awareness, whereas an IID jammer can make the information
pattern operationally different. For replacer alternatives, all twelve vanishing-error
Stein exponents collapse to the same additive, single-letter value. We further develop a general argument that upgrades achievability results
to strong converse results, thereby establishing strong converse properties for several models,
resolving an open problem in composite hypothesis testing posed in~\cite[Remark~2.2]{berta2021composite} and strengthening a few results studied recently in~\cite{Lam25_Sanov}. 

Several questions remain open.  The present work treats discrimination with parallel channel uses, so a natural next step is to
extend the framework to adaptive or sequential strategies.  It would also be
important to determine when regularized divergences admit single-letter
forms beyond replacer alternatives, and whether the remaining vanishing-error
results can be strengthened to fixed-error or strong-converse statements. We also leave the application in quantum illumination as a promising further study.

\bigskip
\paragraph{Acknowledgements.}
K.F. is supported in part by the National Natural Science Foundation of China (Grants No. 92470113 and 12404569), the Shenzhen Science and Technology Program (Grants No. QNXMB20250701091826036 and JCYJ20240813113519025), the Shenzhen International Quantum Academy (Grant No. SIQA2025KFKT03), the Shenzhen Fundamental Research Program (Grant No. JCYJ20241202124023031), the Guangdong Provincial Quantum Science Strategic Initiative (Grant No. GDZX2503001), and the University Development Fund (Grant No. UDF01003565). 
M.C.~acknowledges support from the European Research Council (ERC Grant Agreement No.~948139), the Excellence Cluster Matter and Light for Quantum Computing (ML4Q), and the New Faculty Start-up Fund at City University of Hong Kong (Dongguan).
H.C.~acknowledges support from National Science and Technology Council (NSTC 115-2628-E-002-005, NSTC 114-2119-M-001-002, and NSTC 115-2124-M-002-014) and Ministry of Education (NTU-115V2016-1, NTU-CC115L893705, and NTU-115L900702). 
L.G. is
partially supported by the National Natural Science Foundation of China (Grant No. 12401163), by the Hubei
Provincial International Collaboration (Project No. 2025EHA041), and Hubei Provincial Innovation research
group (Project No. 2025AFA044).
M.H. was supported in part by the Guangdong Provincial Quantum Science
Strategic Initiative (Grant No. GDZX2505003), the General R\&D Projects of 1+1+1 CUHK-CUHK(SZ)-GDST Joint Collaboration Fund (Grant No. GRDP2025-022), and the Shenzhen International Quantum Academy (Grant
No. SIQA2025KFKT07).

We acknowledge that OpenAI GPT-5.6 Sol was used to explore some candidate proof strategies and support the writing of the manuscript. We
have verified AI-assisted material to the best of our knowledge
and take full responsibility for the content of this work.

\paragraph{Data availability statement.} There is no conflict of interest in this work.
No datasets were generated or analysed during the current study.

\bibliographystyle{alpha_abbrv}
\bibliography{Bib}

\newcommand{\etalchar}[1]{$^{#1}$}
\begin{thebibliography}{MLDS{\etalchar{+}}13}

\bibitem[Aci01]{acin2001statistical}
A.~Acin.
\newblock Statistical distinguishability between unitary operations.
\newblock {\em Physical Review Letters}, 87(17):177901, 2001.

\bibitem[BBH21]{berta2021composite}
M.~Berta, F.~G. Brandão, and C.~Hirche.
\newblock On composite quantum hypothesis testing.
\newblock {\em Communications in Mathematical Physics}, 385(1):55--77, 2021.

\bibitem[BCP19]{bae_more_2019}
J.~Bae, D.~Chruściński, and M.~Piani.
\newblock More {Entanglement} {Implies} {Higher} {Performance} in {Channel} {Discrimination} {Tasks}.
\newblock {\em Physical Review Letters}, 122(14):140404, April 2019.

\bibitem[BDNW18]{boche2018fully}
H.~Boche, C.~Deppe, J.~N{\"o}tzel, and A.~Winter.
\newblock Fully quantum arbitrarily varying channels: Random coding capacity and capacity dichotomy.
\newblock In {\em 2018 IEEE International Symposium on Information Theory (ISIT)}, pages 2012--2016. IEEE, 2018.

\bibitem[BDSW24]{bergh_parallelization_2024}
B.~Bergh, N.~Datta, R.~Salzmann, and M.~M. Wilde.
\newblock Parallelization of {Adaptive} {Quantum} {Channel} {Discrimination} in the {Non}-{Asymptotic} {Regime}.
\newblock {\em IEEE Transactions on Information Theory}, 70(4):2617--2636, April 2024.

\bibitem[Bel24]{belzig2024fully}
P.~Belzig.
\newblock Fully quantum arbitrarily varying channel coding for entanglement-assisted communication.
\newblock In {\em 2024 IEEE International Symposium on Information Theory (ISIT)}, pages 1011--1016. IEEE, 2024.

\bibitem[BFT17]{Berta2017}
M.~Berta, O.~Fawzi, and M.~Tomamichel.
\newblock On variational expressions for quantum relative entropies.
\newblock {\em Letters in Mathematical Physics}, 107(12):2239--2265, 2017.

\bibitem[BMQ21]{bavaresco2021strict}
J.~Bavaresco, M.~Murao, and M.~T. Quintino.
\newblock Strict hierarchy between parallel, sequential, and indefinite-causal-order strategies for channel discrimination.
\newblock {\em Physical Review Letters}, 127(20):200504, 2021.

\bibitem[BP10]{brandao2010generalization}
F.~G. Brandao and M.~B. Plenio.
\newblock {A generalization of quantum Stein’s lemma}.
\newblock {\em Communications in Mathematical Physics}, 295(3):791--828, 2010.

\bibitem[CDP08]{chiribella_memory_2008}
G.~Chiribella, G.~M. D’Ariano, and P.~Perinotti.
\newblock Memory {Effects} in {Quantum} {Channel} {Discrimination}.
\newblock {\em Physical Review Letters}, 101(18):180501, October 2008.

\bibitem[CMW16]{cooney2016strong}
T.~Cooney, M.~Mosonyi, and M.~M. Wilde.
\newblock Strong converse exponents for a quantum channel discrimination problem and quantum-feedback-assisted communication.
\newblock {\em Communications in Mathematical Physics}, 344(3):797--829, 2016.

\bibitem[CYB25]{cao2025channel}
M.~X. Cao, Y.~Yao, and M.~Berta.
\newblock Channel coding against quantum jammers via minimax.
\newblock {\em arXiv preprint arXiv:2505.11362}, 2025.

\bibitem[D'A01]{dariano_using_2001}
G.~M. D'Ariano.
\newblock Using {Entanglement} {Improves} the {Precision} of {Quantum} {Measurements}.
\newblock {\em Physical Review Letters}, 87(27), 2001.

\bibitem[Dat09]{datta2009min}
N.~Datta.
\newblock Min-and max-relative entropies and a new entanglement monotone.
\newblock {\em IEEE Transactions on Information Theory}, 55(6):2816--2826, 2009.

\bibitem[DFY07]{duan2007entanglement}
R.~Duan, Y.~Feng, and M.~Ying.
\newblock Entanglement is not necessary for perfect discrimination between unitary operations.
\newblock {\em Physical Review Letters}, 98(10):100503, 2007.

\bibitem[DFY09]{duan2009perfect}
R.~Duan, Y.~Feng, and M.~Ying.
\newblock Perfect distinguishability of quantum operations.
\newblock {\em Physical Review Letters}, 103(21):210501, 2009.

\bibitem[DKQ{\etalchar{+}}23]{ding2023bounding}
D.~Ding, S.~Khatri, Y.~Quek, P.~W. Shor, X.~Wang, and M.~M. Wilde.
\newblock Bounding the forward classical capacity of bipartite quantum channels.
\newblock {\em IEEE Transactions on Information Theory}, 69(5):3034--3061, 2023.

\bibitem[DMHB13]{datta_smooth_2013}
N.~Datta, M.~Mosonyi, M.-H. Hsieh, and F.~G. S.~L. Brandao.
\newblock A {Smooth} {Entropy} {Approach} to {Quantum} {Hypothesis} {Testing} and the {Classical} {Capacity} of {Quantum} {Channels}.
\newblock {\em IEEE Transactions on Information Theory}, 59(12):8014--8026, December 2013.

\bibitem[Don86]{donald1986relative}
M.~J. Donald.
\newblock On the relative entropy.
\newblock {\em Communications in Mathematical Physics}, 105:13--34, 1986.

\bibitem[DSSB{\etalchar{+}}23]{debry2023experimental}
K.~DeBry, J.~Sinanan-Singh, C.~D. Bruzewicz, D.~Reens, M.~E. Kim, M.~P. Roychowdhury, R.~McConnell, I.~L. Chuang, and J.~Chiaverini.
\newblock Experimental quantum channel discrimination using metastable states of a trapped ion.
\newblock {\em Physical Review Letters}, 131(17):170602, 2023.

\bibitem[DWH25]{Dasgupta2025}
A.~Dasgupta, N.~A. Warsi, and M.~Hayashi.
\newblock Universal tester for multiple independence testing and classical-quantum arbitrarily varying multiple access channel.
\newblock {\em IEEE Transations on Information Theory}, 71(5):3719–3765, May 2025.

\bibitem[FF21]{fang2021geometric}
K.~Fang and H.~Fawzi.
\newblock {Geometric R{\'e}nyi divergence and its applications in quantum channel capacities}.
\newblock {\em Communications in Mathematical Physics}, 384(3):1615--1677, 2021.

\bibitem[FFF25]{fang2025adversarial}
K.~Fang, H.~Fawzi, and O.~Fawzi.
\newblock Adversarial quantum channel discrimination.
\newblock {\em Physical Review Letters}, 135:260201, Dec 2025.

\bibitem[FFF26]{fang2024generalized}
K.~Fang, H.~Fawzi, and O.~Fawzi.
\newblock Generalized quantum asymptotic equipartition.
\newblock {\em Communications in Mathematical Physics}, 407(10):208, 2026.

\bibitem[FFRS20]{fang2020chain}
K.~Fang, O.~Fawzi, R.~Renner, and D.~Sutter.
\newblock Chain rule for the quantum relative entropy.
\newblock {\em Physical Review Letters}, 124(10):100501, 2020.

\bibitem[FGW25]{fang2025towards}
K.~Fang, G.~Gour, and X.~Wang.
\newblock Towards the ultimate limits of quantum channel discrimination and quantum communication.
\newblock {\em SCIENCE CHINA Information Sciences}, 68(8):180509, 2025.

\bibitem[FR06]{farkas2006potential}
B.~Farkas and S.~G. R{\'e}v{\'e}sz.
\newblock Potential theoretic approach to rendezvous numbers.
\newblock {\em Monatshefte f{\"u}r mathematik}, 148(4):309--331, 2006.

\bibitem[Hay09]{hayashi_discrimination_2009}
M.~Hayashi.
\newblock Discrimination of {Two} {Channels} by {Adaptive} {Methods} and {Its} {Application} to {Quantum} {System}.
\newblock {\em IEEE Transactions on Information Theory}, 55(8):3807--3820, August 2009.

\bibitem[HHLW10]{harrow_adaptive_2010}
A.~W. Harrow, A.~Hassidim, D.~W. Leung, and J.~Watrous.
\newblock Adaptive versus nonadaptive strategies for quantum channel discrimination.
\newblock {\em Physical Review A}, 81(3):032339, March 2010.

\bibitem[HP91]{hiai1991proper}
F.~Hiai and D.~Petz.
\newblock The proper formula for relative entropy and its asymptotics in quantum probability.
\newblock {\em Communications in Mathematical Physics}, 143:99--114, 1991.

\bibitem[HY25]{Hayashi2025}
M.~Hayashi and H.~Yamasaki.
\newblock {The generalized quantum Stein's lemma and the second law of quantum resource theories}.
\newblock {\em Nature Physics}, 21(12):1988--1993, 2025.

\bibitem[KW24]{khatri2024principlesquantumcommunicationtheory}
S.~Khatri and M.~M. Wilde.
\newblock Principles of quantum communication theory: A modern approach.
\newblock {\em arXiv: 2011.04672}, 2024.

\bibitem[Lam25a]{lami2024solutiongeneralisedquantumsteins}
L.~Lami.
\newblock {A solution of the generalized quantum Stein’s lemma}.
\newblock {\em IEEE Transactions on Information Theory}, 71(6):4454–4484, June 2025.

\bibitem[Lam25b]{Lam25_Sanov}
L.~Lami.
\newblock Generalised quantum {Sanov} theorem revisited.
\newblock {\em arXiv preprint arXiv:2510.06340}, 2025.
\newblock https://arxiv.org/abs/2510.06340.

\bibitem[LCVVK26]{lizarribar2026converse}
J.~Lizarribar-Carrillo, G.~Vazquez-Vilar, and T.~Koch.
\newblock A converse bound via the {Nussbaum--Szko{\l}a} mapping for quantum hypothesis testing.
\newblock arXiv:2601.13970, 2026.

\bibitem[LKDW18]{leditzky2018approaches}
F.~Leditzky, E.~Kaur, N.~Datta, and M.~M. Wilde.
\newblock Approaches for approximate additivity of the {H}olevo information of quantum channels.
\newblock {\em Physical Review A}, 97(1):012332, 2018.

\bibitem[Llo08]{Lloyd2008QuantumIllumination}
S.~Lloyd.
\newblock Enhanced sensitivity of photodetection via quantum illumination.
\newblock {\em Science}, 321(5895):1463--1465, September 2008.

\bibitem[MH24]{mosonyi2023some}
M.~Mosonyi and F.~Hiai.
\newblock Some continuity properties of quantum r{\'e}nyi divergences.
\newblock {\em IEEE Transactions on Information Theory}, 70(4):2674--2700, 2024.

\bibitem[MLDS{\etalchar{+}}13]{muller2013quantum}
M.~M{\"u}ller-Lennert, F.~Dupuis, O.~Szehr, S.~Fehr, and M.~Tomamichel.
\newblock On quantum {R}{\'e}nyi entropies: {A} new generalization and some properties.
\newblock {\em Journal of Mathematical Physics}, 54(12), 2013.

\bibitem[NH07]{NH07}
H.~Nagaoka and M.~Hayashi.
\newblock An information-spectrum approach to classical and quantum hypothesis testing for simple hypotheses.
\newblock {\em {IEEE} Transactions on Information Theory}, 53(2):534--549, Feb 2007.

\bibitem[NS09]{Nussbaum2009}
M.~Nussbaum and A.~Szko{\l}a.
\newblock {The Chernoff Lower Bound For Symmetric Quantum Hypothesis Testing}.
\newblock {\em The Annals of Statistics}, 37(2):1040--1057, 2009.

\bibitem[ON00]{Ogawa2000}
T.~Ogawa and H.~Nagaoka.
\newblock {Strong converse and {Stein's} lemma in quantum hypothesis testing}.
\newblock {\em IEEE Transactions on Information Theory}, 46(7):2428--2433, feb 2000.

\bibitem[PBG{\etalchar{+}}18]{pirandola_advances_2018}
S.~Pirandola, B.~R. Bardhan, T.~Gehring, C.~Weedbrook, and S.~Lloyd.
\newblock Advances in photonic quantum sensing.
\newblock {\em Nature Photonics}, 12(12):724--733, December 2018.

\bibitem[Per20]{pereira2020quantum}
J.~Pereira.
\newblock {\em Quantum channel simulation and discrimination with applications to quantum communications}.
\newblock PhD thesis, University of York, 2020.

\bibitem[PLLP19]{pirandola_fundamental_2019}
S.~Pirandola, R.~Laurenza, C.~Lupo, and J.~L. Pereira.
\newblock Fundamental limits to quantum channel discrimination.
\newblock {\em npj Quantum Information}, 5(1):50, June 2019.

\bibitem[PW09]{piani2009all}
M.~Piani and J.~Watrous.
\newblock All entangled states are useful for channel discrimination.
\newblock {\em Physical Review Letters}, 102(25):250501, 2009.

\bibitem[QWW18]{qi2018applications}
H.~Qi, Q.~Wang, and M.~M. Wilde.
\newblock Applications of position-based coding to classical communication over quantum channels.
\newblock {\em Journal of Physics A: Mathematical and Theoretical}, 51(44):444002, 2018.

\bibitem[SDM{\etalchar{+}}24]{sugiura2024power}
S.~Sugiura, A.~Dutt, W.~J. Munro, S.~Zeytino{\u{g}}lu, and I.~L. Chuang.
\newblock Power of sequential protocols in hidden quantum channel discrimination.
\newblock {\em Physical Review Letters}, 132(24):240805, 2024.

\bibitem[SL19]{skrzypczyk_robustness_2019}
P.~Skrzypczyk and N.~Linden.
\newblock Robustness of {Measurement}, {Discrimination} {Games}, and {Accessible} {Information}.
\newblock {\em Physical Review Letters}, 122(14):140403, April 2019.

\bibitem[SPL{\etalchar{+}}20]{spedalieri_detecting_2020}
G.~Spedalieri, L.~Piersimoni, O.~Laurino, S.~L. Braunstein, and S.~Pirandola.
\newblock Detecting and tracking bacteria with quantum light.
\newblock {\em Physical Review Research}, 2(4):043260, November 2020.

\bibitem[TRB{\etalchar{+}}19]{takagi_operational_2019}
R.~Takagi, B.~Regula, K.~Bu, Z.-W. Liu, and G.~Adesso.
\newblock Operational advantage of quantum resources in subchannel discrimination.
\newblock {\em Physical Review Letters}, 122:140402, Apr 2019.

\bibitem[Ume54]{umegaki1954conditional}
H.~Umegaki.
\newblock Conditional expectation in an operator algebra.
\newblock {\em Tohoku Mathematical Journal, Second Series}, 6(2-3):177--181, 1954.

\bibitem[Wat18]{watrous2018theory}
J.~Watrous.
\newblock {\em The theory of quantum information}.
\newblock Cambridge University Press, 2018.

\bibitem[WBHK20]{wilde2020amortized}
M.~M. Wilde, M.~Berta, C.~Hirche, and E.~Kaur.
\newblock Amortized channel divergence for asymptotic quantum channel discrimination.
\newblock {\em Letters in Mathematical Physics}, 110:2277--2336, 2020.

\bibitem[WFD19]{Wang2019}
X.~Wang, K.~Fang, and R.~Duan.
\newblock {Semidefinite Programming Converse Bounds for Quantum Communication}.
\newblock {\em IEEE Transactions on Information Theory}, 65(4):2583--2592, apr 2019.

\bibitem[WFT19]{wang2019converse}
X.~{Wang}, K.~{Fang}, and M.~{Tomamichel}.
\newblock On converse bounds for classical communication over quantum channels.
\newblock {\em IEEE Transactions on Information Theory}, 65(7):4609--4619, July 2019.

\bibitem[WR12]{wang2012one}
L.~Wang and R.~Renner.
\newblock One-shot classical-quantum capacity and hypothesis testing.
\newblock {\em Physical Review Letters}, 108(20):200501, 2012.

\bibitem[WW19]{WW2019}
X.~Wang and M.~M. Wilde.
\newblock {Resource theory of asymmetric distinguishability for quantum channels}.
\newblock {\em Physical Review Research}, 1(3):033169, dec 2019.

\bibitem[WWY14]{wilde2014strong}
M.~M. Wilde, A.~Winter, and D.~Yang.
\newblock Strong converse for the classical capacity of entanglement-breaking and {H}adamard channels via a sandwiched {R}{\'e}nyi relative entropy.
\newblock {\em Communications in Mathematical Physics}, 331:593--622, 2014.

\bibitem[ZH19]{zhu2019efficient}
H.~Zhu and M.~Hayashi.
\newblock Efficient verification of pure quantum states in the adversarial scenario.
\newblock {\em Physical Review Letters}, 123(26):260504, 2019.

\bibitem[ZP20]{zhuang_ultimate_2020}
Q.~Zhuang and S.~Pirandola.
\newblock Ultimate {Limits} for {Multiple} {Quantum} {Channel} {Discrimination}.
\newblock {\em Physical Review Letters}, 125(8):080505, August 2020.

\end{thebibliography}

\appendix

\section{Strict separations between IID jammer models}
\label{sec:strict-iid-gap}

This appendix provides the two channel-level separations invoked in
Remark~\ref{rem:iid-exponent-separation}.  Both examples use qubit channels
and satisfy the max-relative-entropy finiteness assumption imposed in the
main text.

\begin{example}
\label{exam:iid-awareness-gap}
Let $A\simeq\mathbb C$ and $E\simeq B\simeq\mathbb C^2$, and write
$\pi_2:=I_B/2$.  Consider the measure-and-prepare channels
\begin{align}
\cN(\omega)
&:=\langle0|\omega|0\rangle
\left(\frac34\ket0\!\bra0+\frac14\ket1\!\bra1\right)
+\langle1|\omega|1\rangle\pi_2,
\\
\cM(\omega)
&:=\langle0|\omega|0\rangle\pi_2
+\langle1|\omega|1\rangle
\left(\frac14\ket0\!\bra0+\frac34\ket1\!\bra1\right).
\label{eq:iid-awareness-gap-channels}
\end{align}
They satisfy
\begin{align}
D_{\max}^{\uparrow}(\cN\|\cM)&\le1,
&
D^{\uparrow,\downarrow}(\cN\|\cM)
&>D^{\uparrow,\downarrow\downarrow}(\cN\|\cM)=0.
\label{eq:iid-awareness-gap-conclusion}
\end{align}
\end{example}

\begin{proof}
In the computational product basis, their unnormalized Choi operators satisfy
\begin{align}
J_{\cN}&=\frac14\operatorname{diag}(3,1,2,2),\\
J_{\cM}&=\frac14\operatorname{diag}(2,2,1,3)>0,
\\
2J_{\cM}-J_{\cN}&=\frac14\operatorname{diag}(1,3,0,4)\ge0.
\end{align}
Thus $2\cM-\cN$ is completely positive, which gives
$D_{\max}^{\uparrow}(\cN\|\cM)\le\log2=1$.
For separate jammer inputs, we have
\begin{align}
\cN(\ket1\!\bra1)&=\cM(\ket0\!\bra0)=\pi_2,
& D^{\downarrow\downarrow}(\cN\|\cM)&=0,
\end{align}
where the divergence equality follows from nonnegativity of relative entropy.
In contrast, every common input $\tau\in\density(E)$ satisfies
\begin{align}
\cN(\tau)-\cM(\tau)
&=\frac14\left(\ket0\!\bra0-\ket1\!\bra1\right),
\\
D(\cN(\tau)\|\cM(\tau))
&\ge\frac{\|\cN(\tau)-\cM(\tau)\|_1^2}{2\ln 2}
=\frac{1}{8\ln 2}>0.
\end{align}
Here the inequality is the quantum Pinsker inequality
\cite[Theorem~5.38]{watrous2018theory}, with base-two logarithms.
The lower bound is independent of $\tau$, so the common-input minimum
is at least $1/(8\ln 2)$.
Since the tester input is trivial, these common- and separate-input minima
equal $D^{\uparrow,\downarrow}$ and $D^{\uparrow,\downarrow\downarrow}$,
respectively.  Theorem~\ref{thm: public hypo hypothesis-unaware iid} therefore
gives a strict separation of the corresponding public IID-jammer exponents.
\end{proof}

\begin{example}
\label{exam:strict-iid-gap}
Let $A\simeq\mathbb C$ and $E\simeq B\simeq\mathbb C^2$.  Define
$\rho:=\ket{\psi}\!\bra{\psi}$, where
\begin{align}
\ket{\psi}
&:=\sqrt{\frac78}\ket{0}+\sqrt{\frac18}\ket{1},
\end{align}
and let $X$, $Y$, and $Z$ denote the Pauli matrices.  Consider the channels
\begin{align}
\cN(\omega)
&:=\tr(\omega)\rho,
\label{eq:strict-iid-gap-replacer-channel}
\\
\cM(\omega)
&:=\frac7{16}\omega
+\frac14X\omega X
+\frac14Y\omega Y
+\frac1{16}Z\omega Z.
\label{eq:strict-iid-gap-Pauli-channel}
\end{align}
Then $D_{\max}^{\uparrow}(\cN\|\cM)<\infty$ and
\begin{align}
D^{\uparrow,\downarrow\downarrow}(\cN\|\cM)
&>
D^{\infty,\emptyuparrow,
\convdownarrow\convdownarrow}(\cN\|\cM).
\label{eq:strict-iid-gap-conclusion}
\end{align}
\end{example}

\begin{proof}
\prooftag{Channel properties and single-letter value.}
All four Pauli weights of $\cM$ are strictly positive, so its Choi operator is
positive definite.  Hence $J_{\cN}\le\kappa J_{\cM}$ for some finite
$\kappa>0$, which implies
$D_{\max}^{\uparrow}(\cN\|\cM)<\infty$.  The action of $\cM$ on Bloch
vectors is
\begin{align}
\cM\!\left(\frac{I+xX+yY+zZ}{2}\right)
&=\frac{I+\frac38xX+\frac38yY}{2}.
\label{eq:strict-iid-gap-Bloch-action}
\end{align}
Consequently, the output range of $\cM$ is the equatorial disk
\begin{align}
\cM(\density(E))
&=
\left\{
\tau_{r,\phi}:=
\frac{I+r\cos\phi\,X+r\sin\phi\,Y}{2}:
0\le r\le\frac38,\ 0\le\phi<2\pi
\right\}.
\label{eq:strict-iid-gap-output-disk}
\end{align}
The Bloch vector of $\rho$ is $(\sqrt7/4,0,3/4)$.  Since $\rho$ is pure,
\begin{align}
D(\rho\|\tau_{r,\phi})
&=
-\frac{1+\frac{\sqrt7}{4}\cos\phi}{2}
 \log\frac{1+r}{2}
-\frac{1-\frac{\sqrt7}{4}\cos\phi}{2}
 \log\frac{1-r}{2}.
\label{eq:strict-iid-gap-one-letter-objective}
\end{align}
For fixed $r$, this expression is minimized at $\phi=0$.  Moreover,
\begin{align}
\frac{\partial}{\partial r}D(\rho\|\tau_{r,0})
&=
\frac{r-\frac{\sqrt7}{4}}{(1-r^2)\ln 2}<0,
\qquad 0\le r\le\frac38.
\label{eq:strict-iid-gap-radial-derivative}
\end{align}
The minimizing output is therefore $\tau_{3/8,0}$.  Since the tester is
trivial and $\cN$ is a replacer channel,
\begin{align}
D^{\uparrow,\downarrow\downarrow}(\cN\|\cM)
&=d,
\label{eq:strict-iid-gap-single-letter-value}
\\
d
&:=
\log16
-\frac{4+\sqrt7}{8}\log11
-\frac{4-\sqrt7}{8}\log5
\notag\\
&=0.733126213324922\ldots.
\label{eq:strict-iid-gap-d-value}
\end{align}

\prooftag{A feasible convexified-IID input.}
For $\phi\in[0,2\pi)$, define the pure state and its channel
output by
\begin{align}
\eta_\phi
&:=\frac{I+\cos\phi\,X+\sin\phi\,Y}{2},
&
\sigma_\phi
&:=\cM(\eta_\phi)
=\frac{I+\frac38\cos\phi\,X+\frac38\sin\phi\,Y}{2}.
\label{eq:strict-iid-gap-phase-states}
\end{align}
For each $n$, let $\phi_{j,n}:=2\pi j/(n+1)$ and set
\begin{align}
\omega_n
&:=\frac1{n+1}\sum_{j=0}^{n}\eta_{\phi_{j,n}}^{\ox n}
\in\conv(\sI(E^n)),
\label{eq:strict-iid-gap-convex-IID-input}
\\
\tau_n
&:=\cM^{\ox n}(\omega_n)
=\frac1{n+1}\sum_{j=0}^{n}\sigma_{\phi_{j,n}}^{\ox n}.
\label{eq:strict-iid-gap-phase-twirled-output}
\end{align}
The root-of-unity average removes matrix elements between computational-basis
strings of different Hamming weights.  If $x,y\in\{0,1\}^n$ have equal
weight and Hamming distance $2s$, then
\begin{align}
\bra{x}\tau_n\ket{y}
&=2^{-n}\left(\frac9{64}\right)^s.
\label{eq:strict-iid-gap-twirled-matrix-elements}
\end{align}
Let $\ket{D_{n,k}}$ be the normalized Dicke vector of Hamming weight $k$.
Counting strings of weight $k$ at distance $2s$ from a fixed string shows that
$\ket{D_{n,k}}$ is an eigenvector of $\tau_n$ with eigenvalue
\begin{align}
\lambda_{n,k}
&:=2^{-n}
\sum_{s=0}^{\min\{k,n-k\}}
\binom{k}{s}\binom{n-k}{s}
\left(\frac9{64}\right)^s.
\label{eq:strict-iid-gap-Dicke-eigenvalue}
\end{align}
Writing $q:=1/8$, we have
\begin{align}
\ket{\psi}^{\ox n}
&=
\sum_{k=0}^{n}
\sqrt{\binom nk(1-q)^{n-k}q^k}\,\ket{D_{n,k}}.
\label{eq:strict-iid-gap-Dicke-expansion}
\end{align}
It follows that
\begin{align}
D(\rho^{\ox n}\|\tau_n)
&=-\sum_{k=0}^{n}
\binom nk(1-q)^{n-k}q^k\log\lambda_{n,k}.
\label{eq:strict-iid-gap-binomial-expectation}
\end{align}

\prooftag{Asymptotic evaluation.}
The standard type bounds for binomial coefficients imply, uniformly when
$k/n$ ranges over a fixed neighborhood of $q$,
\begin{align}
&\frac1n\log
\sum_{s=0}^{\min\{k,n-k\}}
\binom{k}{s}\binom{n-k}{s}
\left(\frac9{64}\right)^s
=
G\!\left(\frac kn\right)+o(1),
\label{eq:strict-iid-gap-type-asymptotics}
\\
G(a)
&:=
\max_{0\le t\le\min\{a,1-a\}}
\left\{
a h\!\left(\frac ta\right)
+(1-a)h\!\left(\frac{t}{1-a}\right)
-t\log\frac{64}{9}
\right\}.
\label{eq:strict-iid-gap-type-rate-function}
\end{align}
Indeed, this follows by applying
\begin{align}
\frac1{m+1}2^{m h(\ell/m)}
&\le\binom m\ell\le2^{m h(\ell/m)}
\label{eq:strict-iid-gap-type-bounds}
\end{align}
to both binomial coefficients and observing that the sum contains at most
$n+1$ terms.  Equation~\eqref{eq:strict-iid-gap-Dicke-eigenvalue} also gives
$\lambda_{n,k}\ge2^{-n}$, while $\lambda_{n,k}\le1$.  Hence
\begin{align}
0
&\le-\frac1n\log\lambda_{n,k}\le1.
\label{eq:strict-iid-gap-uniform-eigenvalue-bound}
\end{align}
Combining the uniform type estimate with the concentration of a binomial
random variable of parameter $q$ in
Eq.~\eqref{eq:strict-iid-gap-binomial-expectation} yields
\begin{align}
\lim_{n\to\infty}\frac1nD(\rho^{\ox n}\|\tau_n)
&=R,
&
R
&:=1-G(q).
\label{eq:strict-iid-gap-twirl-rate}
\end{align}

The function optimized in $G(q)$ is strictly concave on $(0,q)$, and its
derivative vanishes precisely when
\begin{align}
\frac9{64}(q-t)(1-q-t)
&=t^2.
\label{eq:strict-iid-gap-stationary-equation}
\end{align}
For $q=1/8$, the unique solution is $t_*=3/40$.  Substitution gives
\begin{align}
R
&=
1-\frac18\log\frac52-\frac78\log\frac{35}{32}
\notag\\
&=0.721636348312234\ldots.
\label{eq:strict-iid-gap-R-value}
\end{align}

Since $\omega_n$ is feasible for the convexified-IID alternative input and
$\cN$ ignores its jammer input,
\begin{align}
D^{\emptyuparrow,
\convdownarrow\convdownarrow}
(\cN^{\ox n}\|\cM^{\ox n})
&\le D(\rho^{\ox n}\|\tau_n).
\label{eq:strict-iid-gap-feasible-upper-bound}
\end{align}
Taking the regularized limit and using
Eq.~\eqref{eq:strict-iid-gap-twirl-rate} gives
\begin{align}
D^{\infty,\emptyuparrow,
\convdownarrow\convdownarrow}(\cN\|\cM)
&\le R.
\label{eq:strict-iid-gap-regularized-upper-bound}
\end{align}
This construction provides an upper bound on the regularized divergence; no
claim that the bound is tight is needed for the separation.

Finally, Eqs.~\eqref{eq:strict-iid-gap-d-value}
and~\eqref{eq:strict-iid-gap-R-value} give
\begin{align}
d-R
&=\frac18\log
\frac{7^7}{2^{12}(11/5)^{4+\sqrt7}}>0.
\label{eq:strict-iid-gap-exact-difference}
\end{align}
For completeness, the strict inequality follows from
\begin{align}
\left(\frac{11}{5}\right)^{4+\sqrt7}
&<
\left(\frac{11}{5}\right)^{20/3}
<114\cdot\frac{17}{10}
<200
<\frac{7^7}{2^{12}},
\label{eq:strict-iid-gap-elementary-comparison}
\end{align}
where $\sqrt7<8/3$, $(11/5)^6<114$, and
$(121/25)^{1/3}<17/10$.  Combining
Eqs.~\eqref{eq:strict-iid-gap-single-letter-value},
\eqref{eq:strict-iid-gap-regularized-upper-bound}, and
\eqref{eq:strict-iid-gap-exact-difference} proves
Eq.~\eqref{eq:strict-iid-gap-conclusion}.
\end{proof}

\section{Some useful identities}
\label{sec:operational-testing-exponents}

This appendix relates the operational exponents of Nagaoka and
Hayashi~\cite[Section~IV]{NH07} to the hypothesis-testing relative entropy
used in Section~\ref{sec:strong-converse-enhancement}. 
We allow an arbitrary scale $c_k\to\infty$ and prove the exact relations,
including the one-sided limits required at fixed error.

Let $\rho_k\in\density(\cH_k)$ and $\sigma_k\in\PSD(\cH_k)$, where each
$\cH_k$ is finite-dimensional, and let $\mathbf c=(c_k)_k$ be positive
with $c_k\to\infty$.  Write $\brho=(\rho_k)_k$, $\bsigma=(\sigma_k)_k$,
and $\bT=(T_k)_k$.  For effects $0\le T_k\le I$, set
\begin{align}
\alpha_k[T_k]&:=\tr[\rho_k(I-T_k)],
&
\beta_k[T_k]&:=\tr[\sigma_kT_k].
\end{align}
The operational exponents are
\begin{align}
B_{\mathbf c}(\ve\,|\,\brho\|\bsigma)
&:=\sup_{\bT}\left\{
\liminf_{k\to\infty}-\frac1{c_k}\log\beta_k[T_k]:
\limsup_{k\to\infty}\alpha_k[T_k]\le\ve
\right\},
\\
B_{\mathbf c}^{\dagger}(\ve\,|\,\brho\|\bsigma)
&:=\sup_{\bT}\left\{
\liminf_{k\to\infty}-\frac1{c_k}\log\beta_k[T_k]:
\liminf_{k\to\infty}\alpha_k[T_k]<\ve
\right\}.
\label{eq:general-scale-operational-definitions}
\end{align}
In particular,
\begin{align}
B_{\mathbf c}(\brho\|\bsigma)
&:=B_{\mathbf c}(0\,|\,\brho\|\bsigma),
&
B_{\mathbf c}^{\dagger}(\brho\|\bsigma)
&:=B_{\mathbf c}^{\dagger}(1\,|\,\brho\|\bsigma)
\end{align}
are the vanishing-error and strong-converse exponents, respectively.
We use $-\log0=+\infty$.  Classical distributions and finite nonnegative
measures are included by taking diagonal operators; restricting tests to
diagonal effects does not change either testing value.

\begin{lemma}
\label{prop:operational-testing-exponent-identities}
For $0\le\ve<1$,
\begin{align}
B_{\mathbf c}(\ve\,|\,\brho\|\bsigma)
&=\lim_{\eta\to\ve^+}\liminf_{k\to\infty}
\frac1{c_k}D_{\Hypo,\eta}(\rho_k\|\sigma_k).
\label{eq:operational-testing-fixed-error-direct}
\end{align}
For $0<\ve\le1$, 
\begin{align}
B_{\mathbf c}^{\dagger}(\ve\,|\,\brho\|\bsigma)
&=\lim_{\eta\to\ve^-}\limsup_{k\to\infty}
\frac1{c_k}D_{\Hypo,\eta}(\rho_k\|\sigma_k).
\label{eq:operational-testing-fixed-error-strong}
\end{align}
The limits in $\eta$ are taken within $(0,1)$.  In particular,
\begin{align}
B_{\mathbf c}(\brho\|\bsigma)
&=\lim_{\eta\to0^+}\liminf_{k\to\infty}
\frac1{c_k}D_{\Hypo,\eta}(\rho_k\|\sigma_k),
\label{eq:operational-testing-vanishing-error}
\\
B_{\mathbf c}^{\dagger}(\brho\|\bsigma)
&=\lim_{\eta\to1^-}\limsup_{k\to\infty}
\frac1{c_k}D_{\Hypo,\eta}(\rho_k\|\sigma_k).
\label{eq:operational-testing-strong-converse}
\end{align}
\end{lemma}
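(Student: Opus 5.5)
The plan is to prove both identities by a two-sided comparison between operational test sequences and the hypothesis-testing divergence at nearby error levels. Throughout we use that $\eta\mapsto D_{\Hypo,\eta}(\rho_k\|\sigma_k)$ is nondecreasing. Hence the inner $\liminf$ (resp.\ $\limsup$) over $k$ is nondecreasing in $\eta$, and the one-sided limits in $\eta$ exist in $\RR\cup\{\pm\infty\}$. We also use that the minimum defining $\beta_\eta$ is attained, by compactness of the effect set.

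For Eq.~\eqref{eq:operational-testing-fixed-error-direct} we prove two inequalities.

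For ``$\ge$'', fix $\eta\in(\ve,1)$. Let $T_k$ be optimal for $\beta_\eta(\rho_k\|\sigma_k)$. Then $\limsup_k\alpha_k[T_k]\le\eta$, which is not yet $\le\ve$. So we shrink the error by convex mixing with the identity test: set $T_k':=(1-\theta)T_k+\theta I$ with $\theta:=(\eta-\ve)/\eta$. This gives $\alpha_k[T_k']\le(1-\theta)\eta=\ve$. The price is $\beta_k[T_k']\le\beta_k[T_k]+\theta\tr\sigma_k$, and that additive term destroys exponents. So this trick is wrong.

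Instead we argue directly from the definition. Fix $\eta>\ve$ and any value $r<\lim_{\eta'\to\ve^+}\liminf_k c_k^{-1}D_{\Hypo,\eta'}$. Choose a decreasing sequence $\eta_j\downarrow\ve$. For each $j$, the optimal tests $T_k^{(j)}$ at level $\eta_j$ satisfy $-\frac1{c_k}\log\beta_k\ge r$ for all $k\ge K_j$. A diagonal construction then applies: use $T_k^{(j)}$ for $K_j\le k<K_{j+1}$, with $K_j$ increasing. This yields a sequence with $\limsup\alpha_k\le\ve$ and $\liminf -c_k^{-1}\log\beta_k\ge r$, so $B_{\mathbf c}(\ve)\ge r$.

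For ``$\le$'', take any admissible $\bT$ and any $\eta>\ve$. Eventually $\alpha_k[T_k]\le\eta$, so $\beta_k[T_k]\ge\beta_\eta(\rho_k\|\sigma_k)$. Hence $\liminf -c_k^{-1}\log\beta_k[T_k]\le\liminf_k c_k^{-1}D_{\Hypo,\eta}$. Letting $\eta\to\ve^+$ gives the bound.

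For Eq.~\eqref{eq:operational-testing-fixed-error-strong}, the same scheme works with $\limsup$ replacing $\liminf$ along subsequences.

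For ``$\le$'', suppose $\liminf_k\alpha_k[T_k]<\ve$. Pick $\eta<\ve$ strictly above that $\liminf$, and pass to a subsequence $k_i$ with $\alpha_{k_i}\le\eta$. Along it, $\beta_{k_i}\ge\beta_\eta$. Then
\[
\liminf_k -\tfrac1{c_k}\log\beta_k \;\le\; \liminf_i \tfrac1{c_{k_i}}D_{\Hypo,\eta}(\rho_{k_i}\|\sigma_{k_i}) \;\le\; \limsup_k \tfrac1{c_k}D_{\Hypo,\eta}(\rho_k\|\sigma_k),
\]
and we let $\eta\to\ve^-$.

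The ``$\ge$'' direction is the main obstacle. Take $\eta<\ve$ and $r<\limsup_k c_k^{-1}D_{\Hypo,\eta}$, realized along a subsequence $k_i$ by optimal tests with $\alpha\le\eta$. Off the subsequence we may use $T_k=0$, which gives $\beta_k=0$ and exponent $+\infty$ (using $-\log 0=+\infty$). This choice makes $\alpha_k=1$ off the subsequence, but that is harmless: only $\liminf\alpha_k\le\eta<\ve$ is required. So the liminf exponent is at least $r$.

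Edge cases ($\ve=0$, $\ve=1$, infinite values) will be checked separately. The particular formulas \eqref{eq:operational-testing-vanishing-error}--\eqref{eq:operational-testing-strong-converse} are the specializations $\ve=0$ and $\ve=1$.
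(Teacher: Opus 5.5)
Your proposal is correct and follows essentially the same route as the paper. Both use the monotonicity-based upper bounds and, for the fixed-error identity, a block-diagonal choice of optimal tests at levels $\eta_j$ decreasing to $\ve$; for the strong-converse identity, both use optimal tests on an infinite index set with $T_k=0$ elsewhere. In a write-up, drop the discarded convex-mixing attempt and the unused ``fix $\eta>\ve$''. Also note, as the paper does, that the finite-$r$ arguments already cover a right-hand side of $+\infty$, and that a value of $-\infty$ is settled by the upper bound alone.
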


\begin{proof}
Write $h_k(\eta):=c_k^{-1}D_{\Hypo,\eta}(\rho_k\|\sigma_k)$.
The one-sided limits exist because $h_k$ is nondecreasing.
Optimal tests exist by compactness of the feasible set and continuity of
the type-II objective.

For Eq.~\eqref{eq:operational-testing-fixed-error-direct}, an admissible
sequence satisfies $\alpha_k[T_k]\le\eta$ eventually for every $\eta>\ve$.
Consequently,
\begin{align}
\liminf_{k\to\infty}-\frac1{c_k}\log\beta_k[T_k]
\le\liminf_{k\to\infty}h_k(\eta),
\qquad \eta>\ve.
\end{align}
Taking the supremum over admissible tests and then the infimum over $\eta$
proves the upper bound.  Conversely, let $r$ be any real number strictly
below the right-hand side of
Eq.~\eqref{eq:operational-testing-fixed-error-direct}, and choose
$\eta_j\in(\ve,1)$ with $\eta_j\to\ve$.
There are strictly increasing integers $K_j$ such that
$h_k(\eta_j)\ge r$ for every $k\ge K_j$.  On each block
$K_j\le k<K_{j+1}$, choose an optimal test at error $\eta_j$, and choose
the finitely many initial tests arbitrarily.  This sequence satisfies
\begin{align}
\limsup_{k\to\infty}\alpha_k[T_k]\le\ve,
\qquad
\liminf_{k\to\infty}-\frac1{c_k}\log\beta_k[T_k]\ge r.
\end{align}
Taking the supremum over $r$ gives the reverse inequality.

For Eq.~\eqref{eq:operational-testing-fixed-error-strong}, an admissible
sequence has $\alpha_k[T_k]\le\eta$ on an infinite set $K$ for some
$\eta\in(0,\ve)$.  Hence
\begin{align}
\liminf_{k\to\infty}-\frac1{c_k}\log\beta_k[T_k]
&\le\liminf_{\substack{k\to\infty\\k\in K}}
-\frac1{c_k}\log\beta_k[T_k]
\le\limsup_{k\to\infty}h_k(\eta).
\end{align}
Taking the supremum over tests gives the upper bound.  Conversely, fix
$\eta\in(0,\ve)$ and any real $r<\limsup_k h_k(\eta)$.
There are infinitely many indices with $h_k(\eta)\ge r$.
Use an optimal test at error $\eta$ on those indices and set $T_k=0$
elsewhere.  Then
\begin{align}
\liminf_{k\to\infty}\alpha_k[T_k]\le\eta<\ve,
\qquad
\liminf_{k\to\infty}-\frac1{c_k}\log\beta_k[T_k]\ge r,
\end{align}
because the zero test has type-II value zero.  Taking the supremum over
$r$ and $\eta$ proves the reverse inequality.  The arguments using arbitrary
finite $r$ also cover a right-hand side of $+\infty$; if it is $-\infty$,
the upper bound already gives equality.  The endpoint identities follow
by taking $\ve=0$ and $\ve=1$, respectively.
\end{proof}

For a probability distribution $p_k$ and a finite nonnegative measure $q_k$
on a finite set $\Omega_k$, define
\begin{align}
L_k(a):=\{x\in\Omega_k:p_k(x)>2^{c_ka}q_k(x)\}.
\end{align}
The endpoint cases of~\cite[Theorem~1]{NH07}, expressed through
Lemma~\ref{prop:operational-testing-exponent-identities}, are
\begin{align}
\lim_{\eta\to0^+}\liminf_{k\to\infty}\frac1{c_k}
D_{\Hypo,\eta}(p_k\|q_k)
&=\sup\left\{a\in\RR:p_k(L_k(a)^c)\to0\right\},
\label{eq:testing-spectrum-lower-identity}
\\
\lim_{\eta\to1^-}\limsup_{k\to\infty}\frac1{c_k}
D_{\Hypo,\eta}(p_k\|q_k)
&=\sup\left\{a\in\RR:\limsup_{k\to\infty}p_k(L_k(a))>0\right\}.
\label{eq:testing-spectrum-upper-identity}
\end{align}

\end{document}